 \theoremstyle{plain}
 \newtheorem{thm}{Theorem}[section]
 \theoremstyle{definition}
 \theoremstyle{remark}
 \newtheorem{rem}[thm]{Remark}
 \numberwithin{equation}{section}
\newcommand{\norm}[1]{\left\Vert#1\right\Vert}
\newcommand{\snorm}[1]{\left|#1\right|}
\title[3D-1D asymptotic analysis for thin curved domains in nonlinear elasticity]
{Asymptotic models for curved rods derived from nonlinear elasticity by $\Gamma$-convergence}
\author[]{LUCIA SCARDIA}
\address[]{S.I.S.S.A., Via Beirut 2-4, 34014, Trieste, Italy}
\email[]{scardia@sissa.it}
\begin{document}
\maketitle
\begin{center}
\begin{minipage}{12cm}
\small{
\noindent {\bf Abstract.}  We study the problem of the rigorous derivation of one-dimensional 
models for a thin curved beam starting from three-dimensional nonlinear elasticity. 
We describe the limiting models obtained for different scalings of the energy. 
In particular, we prove that the limit functional corresponding to higher scalings 
coincides with the one derived by dimension reduction starting from linearized 
elasticity. 

\vspace{15pt}
\noindent {\bf Keywords:} dimension reduction, curved beams, nonlinear elasticity

\vspace{6pt}
\noindent {\bf 2000 Mathematics Subject Classification:} 74K10, 49J45}
\end{minipage}
\end{center}

\bigskip

\section{Introduction}

One of the main problems in nonlinear elasticity is to understand the relation between the 
three-dimensional theory and lower dimensional models for thin structures. In the classical 
approach these theories are usually deduced via formal asymptotic expansions or adding extra 
assumptions on the kinematics of the three-dimensional deformations (see, e.g., \cite{Cia2}). 
Recently the problem of the rigorous derivation of lower dimensional theories has been studied 
using a variational approach, which is based on the analysis of the limit of the 3D elastic 
energy in the sense of $\Gamma$-convergence. 

The first result in this direction is due to E. Acerbi, G. Buttazzo and D. Percivale 
(see \cite{ABP91}), who deduced a nonlinear model for elastic strings by means of a 3D-1D 
dimension reduction. The two-dimensional analogue was studied by H. Le Dret and A. Raoult 
who derived a nonlinear model for planar membranes (see \cite{LDR95}) and for shell membranes 
(see \cite{LDR00}). The more delicate case of plates was justified more recently by G. Friesecke, R.D. James and 
S. M\"uller in \cite{FJM02}, while the case of shells was treated in \cite{FJMM03}.
For a complete survey on plate theories we refer to \cite{FJM06}. 

Concerning the derivation of one-dimensional models, the study of straight rods in 
the nonlinear case has been performed by M.G. Mora, S. M\"uller (see \cite{MM03}, \cite{MGMM04}) and, 
independently, by O. Pantz (see \cite{P02}). 
In all the previous results, the different limiting models correspond to different scalings of the 3D 
energy in terms of the parameter describing the thickness in the case of plates and the diameter of the 
cross-section in the case of rods.

In this paper we deal with the case of a thin \textit{curved} heterogeneous beam made of a hyperelastic 
material. This is a sequel to a previous work, where lower scalings of the energy were considered (see \cite{S06}).

In the following we shall denote by $\Omega$ the set $(0,L)\times D$, where $L>0$ and $D$ is a bounded Lipschitz 
domain in $\mathbb{R}^2$. Given $h>0$, we shall consider a beam, whose reference configuration is given by
$$\widetilde{\Omega}_{h}:= \{\gamma(s) + h\,\xi\,\nu_{2}(s) + h\,\zeta\, \nu_{3}(s) : (s,\xi,\zeta)\in \Omega\},$$
where $\gamma:(0,L)\to \mathbb{R}^{3}$ is a smooth simple curve describing the mid-fiber of the beam, and $\nu_2, \nu_3:(0,L)\to \mathbb{R}^{3}$ are two smooth vectors such that $\{\gamma',\nu_2, \nu_3 \}$ provides an orthonormal frame along the curve. 
We denote with $\tau$ the unit vector $\gamma'$ tangent to the curve $\gamma$. 
We notice that the cross-section of the beam is constant along $\gamma$ and is given by the set $hD$.
A natural parametrization of $\widetilde{\Omega}_{h}$ is given by
\begin{equation*}
\Psi^{(h)} :  \Omega \rightarrow \widetilde{\Omega}_{h}, \quad (s,\xi,\zeta)\mapsto \gamma(s) + h\,\xi\,\nu_{2}(s) + h\,\zeta\,\nu_{3}(s),
\end{equation*}
which is one-to-one for $h$ small enough.

The starting point of the variational approach is the elastic energy per unit cross-section
\begin{equation*}\label{puv}
\tilde{I}^{(h)}(\tilde{y}):= \frac{1}{h^2}\int_{\widetilde{\Omega}_{h}}
W\big(\big(\Psi^{(h)}\big)^{-1} (x),\nabla \tilde{y}(x)\big) dx
\end{equation*}
of a deformation $\tilde{y} \in W^{1,2}(\widetilde{\Omega}_{h};\mathbb{R}^{3})$. The stored 
energy density $W:\Omega\times {\mathbb M}^{3\times 3}\to [0,+\infty]$ is required to satisfy
some natural properties:
\smallskip
\begin{itemize}
\item $W$ is frame indifferent: 
$W(z,RF) = W(z,F)$ for a.e.\ 
$z\,\in \Omega$, every $F\in \mathbb{M}^{3\times 3}$, and every $R\in SO(3)$;
\smallskip
\item $W(z,F)\geq C\,\mbox{dist}^{2}(F,SO(3))$  for a.e.\  $z\in \Omega$ and
every $F\in\mathbb{M}^{3\times 3}$;
\smallskip
\item $W(z,R)=0$ for a.e.\ $z\in\Omega$ and every $R\in SO(3)$. 
\end{itemize}
\smallskip
For the complete list of assumptions on $W$ we refer to Section~2.

The goal is to provide a description of the asymptotic behaviour of $\tilde{I}^{(h)}$, 
as $h\rightarrow 0$, by means of $\Gamma$-convergence  (see \cite{DM93} for a comprehensive
introduction to $\Gamma$-convergence). 

In \cite{S06} we studied the case of energies $\tilde{I}^{(h)}$ of order $h^\beta$ with $\beta \in [0,2]$. 
We proved that, as for straight beams, energies of order $1$ correspond to stretching and shearing 
deformations, leading to a \textit{string theory} as $\Gamma$-limit, while energies of order $h^2$ correspond 
to bending flexures and torsions keeping the mid-fiber unextended, leading to a \textit{rod theory} 
as $\Gamma$-limit. This last result has been obtained also by P. Seppecher and C. Pideri in \cite{Sepp2}, 
independently. Finally, in \cite{S06} it was also shown that the $\Gamma$-limit of $h^{- \beta}\tilde{I}^{(h)}$ 
with $\beta \in (0,2)$ provides a degenerate model.

In this paper we consider the scalings $h^{\beta}$ with $\beta > 2$. More precisely, we prove that if the 
energy $\tilde{I}^{(h)}$ is of order $h^4$, then the corresponding relevant deformations are close to a 
rigid motion, so that the $\Gamma$-limit describes a partially linearized model. 
This result generalizes to the case of curved rods what was proved in \cite{MGMM04} for straight rods. 
Furthermore, we show that the scalings $\beta > 4$ lead to the linearized theory for rods, while the 
scalings $\beta \in (2,4)$ correspond to a constrained linearized theory which is the analogous, 
in the one-dimensional case, 
of the von K\'arm\'an theory for plates (see \cite{FJM06}).  We want to underline that for the intermediate 
scalings $\beta \in (2,4)$ a more delicate analysis is required in order to prove the result, as in the 
corresponding case in \cite{FJM06}.

We first present a compactness result for sequences of deformations having equibounded energies 
$h^{-\beta}\tilde{I}^{(h)}$ with $\beta > 2$ (Theorem \ref{compactness}). More precisely, we prove that 
if $\tilde{I}^{(h)}\big(\tilde{y}^{(h)}\big) \leq c\,h^{\beta}$, $\beta > 2$, then there exist some constants $\bar{R}^{(h)}\in SO(3)$ such that $\bar{R}^{(h)} \rightarrow \bar{R}$ and, up to subsequences,
\begin{equation*}
\nabla\big(\big(\bar{R}^{(h)}\big)^{T}\,\tilde{y}^{(h)}\big)\circ \Psi^{(h)} \rightarrow Id \quad \hbox{strongly in } L^{2}(\Omega;\mathbb{M}^{3\times 3}).
\end{equation*}
In other words, up to a rigid motion, the deformations $\tilde{y}^{(h)}$ converge to the identity. This 
naturally leads to introduce a new sequence of scaled deformations $Y^{(h)}$, given by $\big(\bar{R}^{(h)}\big)^{T}\tilde{y}^{(h)}\circ \Psi^{(h)}$ (up to an additive constant) and to study the 
deviation of $Y^{(h)}$ from $\Psi^{(h)}$. To this aim, we define the scaled averaged displacement 
\begin{equation*}
v^{(h)}(s) := \,\frac{1}{h^{(\beta - 2)/2}}\int_{D}\big(Y^{(h)}(s,\xi,\zeta) - \Psi^{(h)}(s,\xi,\zeta)\big)\,d\xi\,d\zeta
\end{equation*}
and the twist angle of the cross-section
\begin{equation*}
w^{(h)}(s) := \,\frac{1}{h^{\beta/2}}\left(\frac{1}{\mu(D)}\int_{D}\big(Y^{(h)}(s,\xi,\zeta) - \Psi^{(h)}(s,\xi,\zeta)\big)\cdot (\xi\,\nu_{3}(s) - \zeta\,\nu_{2}(s))\,\,d\xi\,d\zeta\right),
\end{equation*}
where $\mu(D) := \int_{D}\big(\xi^2 + \zeta^2\big)\,d\xi\,d\zeta$. Finally, we introduce a function $u^{(h)}$, which measures the extension of the mid-fiber and is given by
\begin{equation*}
u^{(h)}(s) := \left\{
\begin{array}{ll}
\vspace{.15cm}
\displaystyle\frac{1}{h^{\beta - 2}}\int_{s_h}^{s} \Big(\int_{D} \partial_s\big(Y^{(h)}(s,\xi,\zeta) - \Psi^{(h)}(s,\xi,\zeta)\big)\cdot \tau(\sigma) \,\,d\xi\,d\zeta\Big)\,d\sigma  \,\, \mbox{if }\, 2<\beta<4,\\
\displaystyle\frac{1}{h^{\beta/2}}\int_{s_h}^{s} \Big(\int_{D} \partial_s\big(Y^{(h)}(s,\xi,\zeta) - \Psi^{(h)}(s,\xi,\zeta)\big)\cdot \tau(\sigma) \,\,d\xi\,d\zeta\Big)\,d\sigma  \,\quad\, \mbox{if }\, \beta\geq 4,
\end{array}
\right.\\
\end{equation*}
where $s_h \in (0,L)$ is chosen in such a way that $u^{(h)}$ has zero average on $(0,L)$.

In Theorem \ref{compactness} it is then shown that, up to subsequences, the following convergence 
properties are satisfied: 
\vspace{.2cm}
\begin{itemize}
\item $v^{(h)}\rightarrow v$ \, strongly in $W^{1,2}((0, L);\mathbb{R}^{3})$, for some
$v\in W^{2,2}((0, L);\mathbb{R}^{3})$ with $v'\cdot \tau = 0$;
\vspace{.15cm}
\item $w^{(h)}\rightharpoonup w$ \, weakly in $W^{1,2}(0, L)$, for some $w\in W^{1,2}(0,L)$;
\vspace{.15cm}
\item $u^{(h)}\rightharpoonup u$ \, weakly in $W^{1,2}(0, L)$, for some $u\in W^{1,2}(0,L)$.
\vspace{.15cm}
\end{itemize}

In Theorems \ref{scithm2} and \ref{bfa} the $\Gamma$-limit of the functionals 
$\big(\tilde{I}^{(h)}/h^\beta\big)$, for $\beta \geq 4$, is identified. 

\noindent
In the case $\beta = 4$ we  show that it is an integral functional depending on $u$, $v$ and $w$, of the form 
\begin{equation*}
I_{4}(u,v,w):= \frac{1}{2}\int_{0}^{L}Q^{0}\Big(s, u' +
\frac{1}{2}\big((v'\cdot\nu_2)^2 + (v'\cdot \nu_3)^2\big),B' + 2\,\hbox{skw}\big(R_0^T R'_0 B\big)
\Big)\,ds,
\end{equation*}
where $B\in W^{1,2}((0,L);\mathbb{M}^{3\times 3})$ denotes the
matrix
\begin{equation}\label{1A}
B := \left(\begin{array}{c}
0\\
v'\cdot\nu_2\\
v'\cdot\nu_3
\end{array}
\begin{array}{c}
- v'\cdot\nu_2\\
0\\
w
\end{array}
\begin{array}{c}
- v'\cdot\nu_3\\
- w\\
0
\end{array} \right)
\end{equation}
and $Q^{0}$ is a quadratic form arising from a minimization problem involving the quadratic form 
of linearized elasticity (see (\ref{defQ0})).

If $\beta > 4$ the limit functional is fully linearized and it is given by 
\begin{equation*}
I_\beta(u,v,w):= \frac{1}{2}\int_{0}^{L}Q^{0}\big(s,u',B' + 2\,\hbox{skw}\big(R_0^T R'_0 B\big)\big)\,ds,
\end{equation*}
where $B$ and $Q^{0}$ are defined as before. We notice that $I_\beta$ coincides with the functional obtained by dimension reduction starting from linearized elasticity (see Remark \ref{comparison}).

Finally, in the case $\beta \in (2,4)$, it turns out that $v$ and $u$ are linked by the following nonlinear constraint: 
\begin{equation}\label{con}
u' = -\,\frac{1}{2}\big((v'\cdot\nu_2)^2 + (v'\cdot \nu_3)^2\big).
\end{equation}
Therefore, the function $u$ is completely determined, once $v$ is known, and hence the limit functional depends on $v$ and $w$ only. More precisely, it is given by
\begin{equation*}
I_\beta(v,w):= \frac{1}{2}\int_{0}^{L}Q\big(s, B' + 2\,\hbox{skw}\big(R_0^T R'_0 B\big)\big)\,ds,
\end{equation*}
where $B$ is defined as in (\ref{1A}) and $Q$ is obtained by minimizing the quadratic form $Q^{0}$ with respect to its second argument (see (\ref{defQ})).

The last section of the paper is devoted to the extension of the previous results to the case of a 
thin ring. In other words, the mid-fiber of the beam is assumed to be a closed curve in $\mathbb{R}^3$. 
We prove that in this case the limiting functionals are finite only on the class of triples $(u,v,w)$ such
that $v$ and $w$ satisfy the periodic boundary conditions $v(0) = v(L)$ and $w(0) = w(L)$ (see Theorem \ref{compbc}).
Moreover, on this class the $\Gamma$-limits coincide with the previous functionals $I_{\beta}$ (see Theorem \ref{limite}).

\section{Notations and formulation of the problem}
In this section we describe the geometry of the unstressed curved beam. 
Let $\gamma : [0, L] \rightarrow \mathbb{R}^{3}$ be a simple regular curve of class $C^{3}$ parametrized by the arc-length and let 
$\tau = \gamma'$ be its unit tangent vector. We assume that there exists an orthonormal frame of class $C^{2}$ along the curve. More precisely, we assume that there exists $R_{0}\in C^{2}([0, L]; \mathbb{M}^{3\times 3})$ such that $R_{0}(s)\in SO(3)$ for every $s \in [0, L]$ and $R_{0}(s)\,e_{1} = \tau(s)$ for every $s \in [0, L]$, where $e_i$, for $i=1,2,3$, denotes the $i$-th vector of the canonical basis of $\mathbb{R}^{3}$ and \mbox{$SO(3) = \big\{R\in\mathbb{M}^{3\times 3} : R^T R = Id,\, \det R = 1 \big\}$}.
We set 
$$\nu_k (s) := R_{0}(s)\,e_{k}, \, \hbox{for } k = 2,3.$$
We can introduce three scalar functions $\varrho$, $k_2$ and $k_3$ in $C^{1}([0, L])$ such that
\begin{align}\label{curv}
\tau'(s) =&\, k_2(s)\,\nu_2(s) + k_3(s)\,\nu_3(s),\nonumber\\
\nu_2'(s) =&\, - k_2(s)\,\tau(s) + \varrho(s)\,\nu_3(s),\nonumber\\
\nu_3'(s) =&\, - k_3(s)\,\tau(s) - \varrho(s)\,\nu_2(s). 
\end{align}
Note that the curvature of $\gamma$ can be easily recognized as $\sqrt{k_2^2 + k_3^2}$ and the torsion 
of $\gamma$ as $\varrho + \dfrac{k_2 k_3' - k_3 k_2'}{k_2^2 + k_3^2}$.

Let $D\subset \mathbb{R}^{2}$ be a bounded open connected set with Lipschitz boundary such that
\begin{equation}\label{dom1}
\int_{D}\xi\,\zeta\, d\xi\, d\zeta = 0
\end{equation} 
and
\begin{equation}\label{dom2}
\int_{D}\xi\,d\xi\,d\zeta = \int_{D}\zeta\,d\xi\,d\zeta = 0,
\end{equation} 
where $(\xi,\zeta)$ stands for the coordinates of a generic point of $D$. Without loss of generality, we can also assume $\mathcal{L}^2(D) = 1$. We set $\Omega:= (0, L)\times D$.
\newline
The reference configuration of the thin beam is given by
$$\widetilde{\Omega}_{h}:= \{\gamma(s) + h\,\xi\,\nu_{2}(s) + h\,\zeta\, \nu_{3}(s) : (s,\xi,\zeta)\in \,\Omega\},$$
where $h$ is a small positive parameter. Clearly the curve $\gamma$ and the set $D$ represent the mid-fiber and the cross-section of the beam, respectively.
The set $\widetilde{\Omega}_{h}$ is parametrized by the $C^2$ map
\begin{equation*}
\Psi^{(h)} : \Omega \rightarrow \widetilde{\Omega}_{h}\,: \quad (s,\xi,\zeta)\mapsto \gamma(s) + h\,\xi\,\nu_{2}(s) + h\,\zeta\,\nu_{3}(s),
\end{equation*}
which is one-to-one for $h$ small enough.

We assume that the thin beam is made of a hyperelastic material whose stored energy density 
$W : \Omega\times\mathbb{M}^{3\times 3} \rightarrow [0, + \infty]$ is a Carath\'eodory function
satisfying the following hypotheses:
\begin{itemize}
\vspace{.2cm}
\item[(i)] there exists $\delta>0$ such that the function $F\mapsto W(z,F)$ is of class $C^{2}$ on the set\\$\big\{F\in\mathbb{M}^{3\times 3}: \mbox{dist}(F,SO(3)) < \delta\big\}$  for a.e. $z\,\in \Omega$;
\vspace{.2cm}
\item[(ii)] the second derivative $\partial^{2}W/\partial F^{2}$ is a Carath\'eodory function on the set
\begin{equation*}
\Omega\times\{F\in \mathbb{M}^{3\times 3}:\,\mbox{dist}(F,SO(3)) < \delta \}
\end{equation*}
and there exists a constant $C_{1} > 0$ such that
\begin{align*}
&\bigg|\frac{\partial^{2}W}{\partial F^{2}}(z,F)[G,G]\bigg| \leq C_{1} |\,G\,|^{2}\,\, \mbox{for a.e. } z\in \Omega,\,\, \mbox{every } F \, \hbox{with }\, \hbox{dist}(F,SO(3))<\delta\\
& \mbox{and every } G\in \mathbb{M}^{3\times 3}_{sym};
\end{align*}
\item[(iii)] $W$ is frame indifferent, i.e., $W(z,RF) = W(z,F)$  for a.e. $z\,\in \Omega$, every $F\in \mathbb{M}^{3\times 3}$ and every $R\in SO(3)$;
\vspace{.2cm}
\item[(iv)] $W(z,R)=0$ for every  $R\in SO(3)$;
\vspace{.2cm}
\item[(v)] $\exists$ $C_{2} >\,0$ independent of $z$ such that $W(z,F)\geq C_{2}\,
\mbox{dist}^{2}(F,SO(3))$  for a.e. $z\in \Omega$ and every  $F\in\mathbb{M}^{3\times 3}$. 
\end{itemize}
Notice that, since we do not require any growth condition from above, $W$ is allowed to assume the value $+ \infty$ outside a neighborhood of the set (\ref{set}). Therefore our treatment covers the physically relevant case in which $W = + \infty$ for $\det F < 0$, $W\rightarrow + \infty$ as $\det F \rightarrow 0^+$.

Let $\tilde{y} \in W^{1,2}(\widetilde{\Omega}_{h};\mathbb{R}^{3})$\, be a deformation of $\widetilde{\Omega}_{h}$. The elastic energy per unit cross-section associated to $\tilde{y}$ is defined by
$$\tilde{I}^{(h)}(\tilde{y}):= \frac{1}{h^{2}}\int_{\widetilde{\Omega}_{h}}
W\big(\big(\Psi^{(h)}\big)^{-1} (x),\nabla
\tilde{y}(x)\big) dx.$$

\noindent
We conclude this section by analysing some properties of the map
$\Psi^{(h)}$, which will be useful in
the sequel. We will use the following notation: for any function $z\in W^{1,2}(\Omega;\mathbb{R}^3)$ we set
\begin{equation*}
\nabla_{h}z := \left(\partial_s z\,\Big|\,\frac{1}{h}\,\partial_\xi z\,\Big|\, \frac{1}{h}\,\partial_\zeta z\right).
\end{equation*}

We observe that $\nabla_{h}\Psi^{(h)}$
can be written as the sum of the rotation $R_{0}$ and a
perturbation of order $h$, that is,
\begin{equation}\label{psi}
\nabla_{h}\Psi^{(h)}(s,\xi,\zeta) = R_{0}(s) + h\,\left(\xi\,\nu'_{2}(s) + \zeta\,\nu'_{3}(s)\right)\otimes e_1.
\end{equation}
From this fact it follows that, as $h\rightarrow 0$,
\begin{equation}\label{convdet}
\nabla_{h}\Psi^{(h)}(s,\xi,\zeta)\rightarrow  R_{0}(s)\quad \mbox{and}\quad\det \big(\nabla_{h}\Psi^{(h)}\big) \rightarrow 1 = \det R_{0} \,\,\,\mbox{uniformly}.
\end{equation}
This implies that for $h$ small enough
$\nabla_{h}\Psi^{(h)}$ is invertible at each
point of $\Omega$. Since the inverse of
$\nabla_{h}\Psi^{(h)}$ can be written as
\begin{equation}\label{invA}
\big(\nabla_{h}\Psi^{(h)}\big)^{-1}(s,\xi,\zeta) =
R_{0}^{T}(s) - h\,R_{0}^{T}(s)\,\big[\left(\xi\,\nu'_{2}(s) +
\zeta\,\nu'_{3}(s)\right)\otimes \tau(s)\big] + O(h^{2})
\end{equation}
with \, $O(h^{2})/h^{2}$ \, uniformly bounded,
$\big(\nabla_{h}\Psi^{(h)}\big)^{-1}$ converges
to $R_{0}^{T}$ uniformly.

\section{Compactness results}

In this section we study the compactness properties of sequences of
deformations having energy $\tilde{I}^{(h)}$ of order $h^{\beta}$ with $\beta > 2$. 
For notational convenience we prefer to write $\beta > 2$ as $2\alpha -2$ with $\alpha > 2$.
The main ingredient in the proof is the following rigidity result, proved by G. Friesecke, R.D.
James and S. M\"{u}ller in \cite{FJM02}.
\begin{thm}\label{Teorigid}
Let $U$ be a bounded Lipschitz domain in $\mathbb{R}^{n}$, $n\geq
2$. Then there exists a constant $C(U)$ with the following
property: for every $v\in W^{1,2}(U;\mathbb{R}^{n})$ there is an
associated rotation $R\in SO(n)$ such that
\begin{equation*}
\norm{\nabla v - R}_{L^{2}(U)} \leq C(U)\norm{\textnormal{dist}(\nabla
v, SO(n))}_{L^{2}(U)}.
\end{equation*}
\end{thm}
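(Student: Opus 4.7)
The plan is to follow the approach of Friesecke, James and M\"uller. By translation, rescaling and standard density in $W^{1,2}$, it is enough to prove the estimate for smooth $v$ on a fixed reference domain $U$ of unit diameter, producing a single rotation $R$ that serves the whole domain. The argument proceeds by dichotomy on the quantity $\eta := \|\mathrm{dist}(\nabla v, SO(n))\|_{L^2(U)}$.

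If $\eta$ exceeds some fixed constant depending only on $U$, the statement is trivial: choose any $R_0\in SO(n)$ and use the pointwise bound $|\nabla v(x)| \leq \mathrm{dist}(\nabla v(x), SO(n)) + \sqrt{n}$ to control $\|\nabla v - R_0\|_{L^2(U)}$ by a multiple of $\eta$. The substance of the theorem is the regime $\eta\ll 1$.

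In this regime, the strategy is to cover $U$ by finitely many overlapping Lipschitz subdomains $U_i$ of small diameter and to produce on each one a rotation $R_i$ satisfying the local rigidity bound
\[
\|\nabla v - R_i\|_{L^2(U_i)}^2 \leq C\int_{U_i}\mathrm{dist}^2(\nabla v, SO(n))\,dx.
\]
The local step is a nonlinear Korn inequality near the identity: on a cube where $\nabla v$ is \emph{pointwise} close to $SO(n)$, one writes $\nabla v = R_Q(I + \nabla u)$ with $u$ small, applies the classical Korn inequality to $u$, and observes that the symmetric part $\mathrm{sym}(\nabla u)$ is controlled by $\mathrm{dist}(\nabla v, SO(n))$ up to a quadratic correction in $\nabla u$; this upgrades Korn into the near-identity nonlinear rigidity estimate.

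The main obstacle is that no pointwise smallness of $\nabla v$ is available a priori, so the local linearization cannot be applied directly. The resolution, and the most delicate part of the proof, is a Lipschitz truncation argument: one replaces $v$ by a map $\tilde v$ coinciding with $v$ outside a set of measure $O(\eta^2)$ and satisfying $\mathrm{dist}(\nabla\tilde v, SO(n))\leq C$ pointwise, while keeping the $L^2$ deviation under control by the maximal function estimate. On the truncated map the local linearization applies on each $U_i$ after subtracting an affine piece whose gradient is an $L^2$ projection of $\nabla\tilde v$ onto $SO(n)$. Once the local rotations $R_i$ are obtained, they are stitched into a single $R$ by exploiting the overlaps: on $U_i\cap U_j$ both $R_i$ and $R_j$ approximate $\nabla v$ in $L^2$, so $|R_i - R_j|$ is controlled by the corresponding local $L^2$ deviations, and a graph-connectivity argument over the covering produces a single rotation with the claimed global bound and a constant $C(U)$ depending only on the domain.
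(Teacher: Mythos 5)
The paper does not prove Theorem~\ref{Teorigid}; it is stated as a black box and attributed to Friesecke, James and M\"uller \cite{FJM02}, so there is no in-paper proof for you to match. Your sketch correctly identifies this as the FJM geometric rigidity theorem and lists the principal ingredients of their argument: reduction to smooth $v$ on a normalized reference domain, dichotomy on the size of $\eta=\|\mathrm{dist}(\nabla v,SO(n))\|_{L^2}$, Lipschitz truncation, a Korn-type linearization near $SO(n)$, and a covering/overlap argument to glue local rotations into a single global one. At the level of a plan this is the right shape.

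There is, however, a real gap in the way you connect the pieces. Lipschitz truncation produces a map $\tilde v$ with $|\nabla\tilde v|\le C$ \emph{pointwise}, but it controls $\mathrm{dist}(\nabla\tilde v,SO(n))$ only in $L^2$, not pointwise, and the linearization $\nabla\tilde v = R(I+\nabla u)$ with $|\nabla u|$ uniformly small, which is what must hold before Korn can be invoked, requires pointwise closeness to a single rotation. Subtracting an affine piece whose gradient is an $L^2$ projection of $\nabla\tilde v$ onto $SO(n)$ does not create that pointwise smallness, so as written the local Korn step is not justified. The genuine difficulty in \cite{FJM02}, and the reason the theorem was new rather than a corollary of Korn plus truncation, is precisely the interpolation mechanism that upgrades $L^2$-smallness of $\mathrm{dist}(\nabla v,SO(n))$ to the pointwise control needed for linearization: a combination of mollification at a well-chosen scale, a good/bad decomposition of subcubes, and an iteration of a weaker intermediate estimate. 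In addition, your proposal to ``produce on each $U_i$ a rotation $R_i$ satisfying the local rigidity bound'' as part of the covering is circular unless one first establishes a local version on cubes by the mechanism above; the covering argument then passes from a cube to a general bounded Lipschitz domain, and the overlap estimate $|R_i-R_j|\lesssim$ (local $L^2$ deviations) lets one glue. Your sketch names the truncation and the Korn linearization but omits the step that actually connects them, which is the heart of the proof.
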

\begin{rem}
The constant $C(U)$ is invariant under uniform scaling of the domain; moreover it 
can be chosen independent of $U$ for a family of sets that are Bilipschitz images of 
a cube (with uniform Lipschitz constants), as remarked in \cite{FJMM03}.
\end{rem}

Before stating the compactness theorem, let us introduce some sequences which will
be widely used in the sequel. Given a sequence of deformations $Y^{(h)}: \Omega \rightarrow  \mathbb{R}^3$, 
we consider the functions 
$v^{(h)} : (0,L)\rightarrow \mathbb{R}^3$, $w^{(h)}, u^{(h)} : (0,L) \rightarrow\mathbb{R}$, defined as
\begin{align}
v^{(h)}(s) :=& \,\frac{1}{h^{\alpha - 2}}\int_{D}\big(Y^{(h)}(s,\xi,\zeta) - \Psi^{(h)}(s,\xi,\zeta)\big)\,d\xi\,d\zeta \label{vh},\\
w^{(h)}(s) :=& \,\frac{1}{h^{\alpha - 1}}\left(\frac{1}{\mu(D)}\int_{D}\big(Y^{(h)}(s,\xi,\zeta) - \Psi^{(h)}(s,\xi,\zeta)\big)\cdot (\xi\,\nu_{3}(s) - \zeta\,\nu_{2}(s))\,\,d\xi\,d\zeta\right)\label{wh},\\
u^{(h)}(s) :=& \left\{
\begin{array}{ll}
\vspace{.15cm}
\displaystyle\frac{1}{h^{2(\alpha - 2)}}\int_{s_h}^{s} \Big(\int_{D} \partial_s\big(Y^{(h)}(s,\xi,\zeta) - \Psi^{(h)}(s,\xi,\zeta)\big)\cdot \tau(\sigma) \,\,d\xi\,d\zeta\Big)\,d\sigma  \,\, \mbox{if }\, 2<\alpha<3,\\
\displaystyle\frac{1}{h^{\alpha - 1}}\int_{s_h}^{s} \Big(\int_{D} \partial_s\big(Y^{(h)}(s,\xi,\zeta) - \Psi^{(h)}(s,\xi,\zeta)\big)\cdot \tau(\sigma) \,\,d\xi\,d\zeta\Big)\,d\sigma  \,\quad\, \mbox{if }\, \alpha\geq 3,
\end{array}
\right.\label{uh}
\vspace{.2cm}
\end{align}
where  $s_h \in (0,L)$ is chosen in such a way that $u^{(h)}$ has zero average on $(0,L)$ and 
$\mu(D) := \int_{D}\big(\xi^2 + \zeta^2\big)\,d\xi\,d\zeta$.

Notice that $v^{(h)}$ is the averaged displacement associated with the deformation $Y^{(h)}$. The function $w^{(h)}$ describes the twist of the cross-section. Finally, $u^{(h)}$ is related to the tangential component of the displacement. More precisely, up to a suitable scaling, its derivative $\big(u^{(h)}\big)'$ coincides with the average on the cross-section $D$ of the tangential divergence of $Y^{(h)} - \Psi^{(h)}$. 

We are now in a position to prove the compactness result.
\begin{thm}\label{compactness}
Let $\big(\tilde{y}^{(h)}\big)\subset W^{1,2}\big(\widetilde{\Omega}_{h};\mathbb{R}^{3}\big)$  be a sequence verifying
\begin{equation}\label{finite}
\frac{1}{h^{2\,\alpha - 2}}\,\tilde{I}^{(h)}(\tilde{y}^{(h)})
\leq c < +\infty
\end{equation}
for every $h > 0$. Then there exist an associated sequence $R^{(h)}\subset C^{\infty}((0, L);\mathbb{M}^{3\times 3})$
and constants $\bar{R}^{(h)} \in SO(3)$, $c^{(h)} \in \mathbb{R}^3$ such that, if we define $Y^{(h)}:= \big(\bar{R}^{(h)}\big)^{T}\,\tilde{y}^{(h)}\circ \Psi^{(h)} - c^{(h)}$, we have
\begin{align}
& R^{(h)}(s) \in SO(3) \quad \hbox{for every } s\in (0,L), \label{zero}\\
\vspace{.18cm}
&\big|\big|R^{(h)} - Id\big|\big|_{L^{\infty}(0, L)} \leq C\,h^{\alpha - 2},\quad \big|\big|\big(R^{(h)}\big)'\big|\big|_{L^{2}(0, L)} < C\,h^{\alpha - 2},\label{tre}\\
\vspace{.18cm}
&\big|\big|\nabla_{h} Y^{(h)}\big(\nabla_{h}\Psi^{(h)}\big)^{-1} - R^{(h)}\big|\big|_{L^{2}(\Omega)} \leq C\,h^{\alpha - 1}.\label{uno}
\end{align}
Moreover, defining $v^{(h)}$, $w^{(h)}$ and $u^{(h)}$ as in  (\ref{vh}), (\ref{wh}) and (\ref{uh}), we
 have that, up to subsequences, the following properties are satisfied:
\vspace{.2cm}
\begin{itemize}
\item[(a)] \,$v^{(h)}\rightarrow v$ \, strongly in $W^{1,2}((0, L);\mathbb{R}^{3})$, with
$v\in W^{2,2}((0, L);\mathbb{R}^{3})$ and $v'\cdot \tau = 0$;
\vspace{.15cm}
\item[(b)]\,$w^{(h)}\rightharpoonup w$ \, weakly in $W^{1,2}(0, L)$;
\vspace{.15cm}
\item[(c)]$\left\{
\begin{array}{ll}
\vspace{.15cm}
u^{(h)}\rightarrow u \,\, \hbox{strongly in } W^{1,2}(0, L) \, &\hbox{if } \, 2<\alpha<3, \\
u^{(h)} \,\rightharpoonup u \,\, \hbox{weakly in } W^{1,2}(0, L) \, &\hbox{if } \, \alpha \geq 3.
\end{array}
\right.$
\vspace{.15cm}

In addition, for $2<\alpha<3$ the function $u$ satisfies the following constraint:
\begin{equation}\label{constraint0}
u' = - \frac{1}{2}\Big((v'\cdot \nu_2)^2 + (v'\cdot \nu_3)^2\Big); 
\end{equation}
\vspace{.15cm}
\item[(d)] $\big(\nabla_{h} Y^{(h)}\,\big(\nabla_{h}\Psi^{(h)}\big)^{-1} - Id\,\big)/h^{\alpha - 2} \rightarrow A$
strongly in $L^2(\Omega;\mathbb{M}^{3\times 3})$, where the matrix $A\in W^{1,2}((0, L);\mathbb{M}^{3\times 3})$ is given by
\begin{equation}\label{A}
A = \,R_0 \left(\begin{array}{c}
0\\
v'\cdot\nu_2\\
v'\cdot\nu_3
\end{array}
\begin{array}{c}
- \,v'\cdot\nu_2\\
0\\
w
\end{array}
\begin{array}{c}
-\, v'\cdot\nu_3\\
- \,w\\
0
\end{array} \right)\,R_0^T;
\end{equation}
\vspace{.15cm}
\item[(e)] $\big(R^{(h)} - Id\big)/h^{\alpha - 2} \rightharpoonup A$ weakly in $W^{1,2}((0, L);\mathbb{M}^{3\times 3})$;
\vspace{.15cm}
\item[(f)] $sym\big(R^{(h)} - Id\big)/h^{2(\alpha - 2)} \rightarrow A^2 /2$ \, uniformly on $(0, L)$.
\end{itemize}
\end{thm}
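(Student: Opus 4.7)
The plan is to follow the Friesecke--James--M\"uller rigidity strategy as in \cite{MM03,MGMM04} and the author's earlier work \cite{S06}, combined, for the intermediate scaling $2<\alpha<3$, with a careful second-order expansion of the scaled deformation gradient in the spirit of the von K\'arm\'an regime for plates \cite{FJM06}.

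\textbf{Step 1 (Piecewise rigidity and construction of $R^{(h)}$).} Hypothesis (v) and the energy bound (\ref{finite}) give $\int_{\widetilde{\Omega}_h}\mathrm{dist}^2(\nabla\tilde{y}^{(h)},SO(3))\,dx \leq C\,h^{2\alpha}$. Partitioning $(0,L)$ into intervals $I_a$ of length of order $h$ and setting $U_a := \Psi^{(h)}(I_a\times D)$, each $U_a$ is the Bilipschitz image of a cube of side $h$ with uniform constants, so Theorem \ref{Teorigid} will produce rotations $R_a^{(h)}\in SO(3)$ with a rigidity estimate on each piece. Applying it also to unions $U_a\cup U_{a+1}$ and comparing to the single-piece rotations yields $\sum_a|R_{a+1}^{(h)}-R_a^{(h)}|^2 \leq C\,h^{2\alpha-3}$; mollification plus projection onto $SO(3)$ then produces a smooth $R^{(h)}$ verifying (\ref{zero}) and (\ref{tre}). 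Using $\nabla\tilde{y}^{(h)}\circ\Psi^{(h)} = \nabla_h Y^{(h)}(\nabla_h\Psi^{(h)})^{-1}$ and $\det\nabla\Psi^{(h)} = h^{2}\det\nabla_h\Psi^{(h)}$, the rigidity estimate transfers to (\ref{uno}) after choosing $\bar R^{(h)}$ to be a constant rotation close to $R^{(h)}$ (e.g.\ $R^{(h)}(0)$, so that $\bar R^{(h)}\to\bar R$ up to subsequences) and selecting $c^{(h)}$ to normalize $Y^{(h)}$.

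\textbf{Step 2 (Weak limit $A$ and properties (d)--(f)).} Set $Z^{(h)} := (\nabla_h Y^{(h)}(\nabla_h\Psi^{(h)})^{-1} - Id)/h^{\alpha-2}$; from (\ref{uno}) and (\ref{tre}) we see $\|Z^{(h)}\|_{L^2(\Omega)}\leq C$ and $(R^{(h)}-Id)/h^{\alpha-2}$ is bounded in $W^{1,2}(0,L)$, with infinitesimal $L^2$-difference, so both admit the same weak limit $A\in W^{1,2}((0,L);\mathbb{M}^{3\times 3})$, depending only on $s$. Expanding $(R^{(h)})^TR^{(h)}=Id$ to second order and using the $L^\infty$ bound in (\ref{tre}) gives (f) and shows $A$ is skew. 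Strong convergence in (d) follows by splitting $Z^{(h)}$ into symmetric and skew parts: the skew part inherits strong convergence from (e) (it is essentially a function of $s$ only), while the symmetric part vanishes strongly in $L^2$ thanks to the Cauchy--Green identity, combined with the bound $|(F^{(h)})^TF^{(h)}-Id|\leq C\,\mathrm{dist}(F^{(h)},SO(3))$ near $SO(3)$ and the fact that $\mathrm{sym}(A)=0$.

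\textbf{Step 3 (Convergence of $v^{(h)},w^{(h)}$ and structure of $A$).} The factorization $\nabla_h Y^{(h)} = (Id + h^{\alpha-2}Z^{(h)})\nabla_h\Psi^{(h)}$ inserted into (\ref{vh}) gives
\begin{equation*}
(v^{(h)})'(s) = \int_D Z^{(h)}(s,\xi,\zeta)\,\partial_s\Psi^{(h)}(s,\xi,\zeta)\,d\xi\,d\zeta,
\end{equation*}
which, by (d) and (\ref{convdet}), converges strongly in $L^2$ to $A\tau$; hence $v\in W^{2,2}$ and $v'\cdot\tau=0$ since $A$ is skew. An analogous computation makes $(w^{(h)})'$ bounded in $L^2$, yielding (b). Writing $A = R_0\hat A R_0^T$, the identifications $\hat A_{21}=v'\cdot\nu_2$, $\hat A_{31}=v'\cdot\nu_3$ come directly from $A\tau = R_0\hat A e_1$, and $\hat A_{23}=w$ follows from evaluating $A$ on the test vector $\xi\nu_3-\zeta\nu_2$ of (\ref{wh}); this yields (\ref{A}). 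For $\alpha\geq 3$ the convergence (c) is then immediate: the scaling $h^{\alpha-1}$ in (\ref{uh}) matches the order of $A-Id$, so (c) is an averaged consequence of (d).

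\textbf{Step 4 (The constraint (\ref{constraint0})).} This is the main obstacle. For $2<\alpha<3$ the scaling $h^{2(\alpha-2)}$ in $u^{(h)}$ is finer than the first-order scale $h^{\alpha-2}$, which forces a second-order expansion. Setting $F^{(h)}:=\nabla_h Y^{(h)}(\nabla_h\Psi^{(h)})^{-1}$ and using the algebraic identity
\begin{equation*}
\mathrm{sym}(F^{(h)}-Id) = \tfrac12\bigl((F^{(h)})^TF^{(h)} - Id\bigr) - \tfrac12(F^{(h)}-Id)^T(F^{(h)}-Id),
\end{equation*}
the first right-hand term, divided by $h^{2(\alpha-2)}$, is $O(h^{3-\alpha})$ in $L^2$ (via $|(F^{(h)})^TF^{(h)}-Id|\leq C\,\mathrm{dist}(F^{(h)},SO(3))$ and the energy bound), hence vanishes for $\alpha<3$, while the second converges in $L^1$ to $-\tfrac12 A^TA$. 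Taking the $(\tau,\tau)$-component, integrating over $D$, and comparing with
\begin{equation*}
(u^{(h)})'(s) = \frac{1}{h^{2(\alpha-2)}}\int_D (F^{(h)}-Id)\partial_s\Psi^{(h)}\cdot\tau\,d\xi\,d\zeta
\end{equation*}
(the remainder from $\partial_s\Psi^{(h)}-\tau = O(h)$ being negligible for $\alpha<3$), we obtain $(u^{(h)})'\to -\tfrac12|A\tau|^2 = -\tfrac12\bigl((v'\cdot\nu_2)^2 + (v'\cdot\nu_3)^2\bigr)$ strongly in $L^2$, giving both the strong $W^{1,2}$ convergence of $u^{(h)}$ and the constraint (\ref{constraint0}).
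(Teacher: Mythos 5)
Your Steps 1--3 follow the same Friesecke--James--M\"uller rigidity strategy as the paper and are essentially sound, modulo some vagueness. In Step~2 the detour through the Cauchy--Green tensor is unnecessary: since $A^{(h)} := (R^{(h)}-Id)/h^{\alpha-2}$ is bounded in $W^{1,2}(0,L)$, the compact embedding into $C^0[0,L]$ already gives uniform (hence $L^2$) convergence of $A^{(h)}$, and combined with (\ref{uno}) this gives (d) directly. In Step~3, the assertion that for $\alpha\geq 3$ property (c) is ``an averaged consequence of (d)'' skips the crucial cancellation: $Z^{(h)}\tau\cdot\tau\to A\tau\cdot\tau = 0$, so the leading term vanishes and one must go to the next order, splitting the first column via $R^{(h)}$ and using both the rigidity estimate (\ref{uno}) and the quadratic estimate in (f); without this the bound on $(u^{(h)})'$ does not follow.

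The genuine gap is in Step~4. First, the pointwise inequality $|(F^{(h)})^TF^{(h)}-Id|\leq C\,\mathrm{dist}(F^{(h)},SO(3))$ does \emph{not} hold in general: for the nearest rotation $R$ one has $(F^{(h)})^TF^{(h)}-Id = (F^{(h)}-R)^TF^{(h)} + R^T(F^{(h)}-R)$, so the bound carries an extra factor $(|F^{(h)}|+1)$, and $F^{(h)}$ is only bounded in $L^2$, not $L^\infty$. Second, even on the good set, your argument does not establish the claimed strong $L^2$ convergence of $(u^{(h)})'$: you yourself note the quadratic term $(F^{(h)}-Id)^T(F^{(h)}-Id)/h^{2(\alpha-2)} = (Z^{(h)})^TZ^{(h)}$ converges only in $L^1$ (since $Z^{(h)}\to A$ strongly in $L^2$ gives $L^1$ convergence of the square), whereas the theorem asserts strong convergence in $W^{1,2}$, which requires $L^2$ convergence of $(u^{(h)})'$. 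Both difficulties evaporate if, following the paper, one writes $(u^{(h)})'$ by splitting the gradient through $R^{(h)}$ rather than through the identity: the contribution of $F^{(h)}-R^{(h)}$ is $O(h^{\alpha-1})$ in $L^2$ directly from the rigidity estimate (\ref{uno}) (no nonlinear pointwise inequality is needed), and the contribution of $R^{(h)}-Id$ is handled via the algebraic identity $\mathrm{sym}(R^{(h)}-Id) = -\tfrac{1}{2}h^{2(\alpha-2)}(A^{(h)})^TA^{(h)}$ with $A^{(h)}$ converging \emph{uniformly}, so the product converges uniformly, hence in $L^2$. This is precisely property (f), and it is the device that delivers both the $L^2$ strong convergence and the constraint (\ref{constraint0}).
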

\begin{proof} 
Let $\big(\tilde{y}^{(h)}\big)$ be a sequence in
$W^{1,2}(\widetilde{\Omega}_{h};\mathbb{R}^{3})$ satisfying (\ref{finite});
using the change of variables $\Psi^{(h)}$ and the fact that $\det (\nabla\Psi^{(h)}) = h^2 \det( \nabla_{h}\Psi^{(h)})$,
this estimate becomes
\begin{equation*}
\frac{1}{h^{2\,\alpha - 2}}\int_{\Omega}W\big(s,\xi,\zeta,\nabla\tilde{y}^{(h)}\circ
\Psi^{(h)}\big)\det \big(\nabla_{h}\Psi^{(h)}\big)
ds\,d\xi\,d\zeta \leq c.
\end{equation*}
The coercivity assumption (v) and  (\ref{convdet}) imply that
\begin{equation*}
\frac{1}{h^{2\,\alpha - 2}}\int_{\Omega}\textrm{dist}^{2}\big(\nabla\tilde{y}^{(h)}\circ\Psi^{(h)}, SO(3)\big) ds\,d\xi\,d\zeta \leq c.
\end{equation*}
\indent
\textit{Step 1: Construction of the approximating sequence of rotations.}\\
As in the proof of the compactness result in \cite{MM03} and \cite{S06}, the key tool
is the rigidity theorem \ref{Teorigid}. The idea is
to divide the domain $\widetilde{\Omega}_{h}$ in small
curved cylinders, which are images of homotetic straight cylinders 
through the same Bilipschitz function. Then, we can apply the rigidity 
theorem to each small curved cylinder with the same
constant. In this way we construct a 
sequence of piecewise constant rotations
$Q^{(h)}: [0, L] \rightarrow SO(3)$ satisfying the estimate
\begin{equation}\label{est1}
\int_{\Omega}\big|\nabla\tilde{y}^{(h)}\circ \Psi^{(h)} -Q^{(h)} \big|^2 ds\,d\xi\,d\zeta < c\,h^{2\,\alpha - 2}.
\end{equation}
We include the details for the convenience of the reader. For every small enough $h>0$, let $K_{h}\in
\mathbb{N}$ satisfy $h \leq \frac{L}{K_{h}} < 2\,h$.
For every $a\in[0, L)\cap
\dfrac{L}{K_{h}}\,\mathbb{N}$, define the segments
$$
S_{a,K_{h}}:=
\left\{
\begin{array}{ll}
\vspace{.2cm}
(a, a + 2\,h) & \mbox{if } \,\, a< L-\frac{L}{K_{h}},\\
(L - 2\,h,L ) & \mbox{otherwise}.
\end{array}
\right.
$$
Now consider the cylinders $C_{a,h}:= S_{a,K_{h}}\times D$
and the subsets of $\widetilde{\Omega}_{h}$ defined by
$\widetilde{C}_{a,h}:=
\Psi^{(h)}(C_{a,h})$. Remark that 
$\widetilde{C}_{a,h}$ is a Bilipschitz image of a cube of size $h$, that is 
$(a,0,0) + h\,\big((0, 2)\times D \big)$, through  the map $\Psi$ defined as
\begin{equation*}
\Psi :  [0, L]\times\mathbb{R}^{2} \rightarrow \mathbb{R}^{3}, \quad (s,y_{2},y_{3})\mapsto \gamma(s) + y_{2}\,\nu_{2}(s) + y_{3}\,\nu_{3}(s). 
\end{equation*}
By Theorem \ref{Teorigid} we obtain that there exists a constant rotation $\widetilde{Q}_{a}^{(h)}$ such
that
\begin{equation}\label{rigid}
\int_{\widetilde{C}_{a,h}}\big|\,\nabla\tilde{y}^{(h)}
- \widetilde{Q}_{a}^{(h)}\big|^{2} dx \leq c
\int_{\widetilde{C}_{a,h}}\mbox{dist}^{2}(\nabla\tilde{y}^{(h)},SO(3))
dx.
\end{equation}
In particular, since $\Psi^{(h)}\big(\big(a, a +
\frac{L}{K_{h}}\big)\times D\big)\subset \widetilde{C}_{a,h}$, we
get
\begin{equation}\label{rigidity}
\int_{\Psi^{(h)}\big(\big(a, a +
\frac{L}{K_{h}}\big)\times
D\big)}\big|\,\nabla\tilde{y}^{(h)} -
\widetilde{Q}_{a}^{(h)}\big|^{2} dx \leq c
\int_{\widetilde{C}_{a,h}}\mbox{dist}^{2}(\nabla\tilde{y}^{(h)},SO(3))
dx.
\end{equation}
Changing variables in the integral on the left-hand side, inequality (\ref{rigidity}) becomes
\begin{align*}
\int_{\big(a, a + \frac{L}{K_{h}}\big)\times
D}\big|\,\nabla\tilde{y}^{(h)}\circ\Psi^{(h)} -
\widetilde{Q}_{a}^{(h)}\big|^{2}\det
\big(\nabla\Psi^{(h)}\big) d&s\,d\xi\,d\zeta \leq
c\,\int_{\widetilde{C}_{a,h}}\mbox{dist}^{2}\big(\nabla\tilde{y}^{(h)},SO(3)\big)
dx\\
&\leq
c\,\int_{\widetilde{C}_{a,h}}W\big(\big(\Psi^{(h)}\big)^{-1}
(x),\nabla\tilde{y}^{(h)}(x)\big) dx.
\end{align*}
Notice that $\det \big(\nabla\Psi^{(h)}\big) =
h^2 \det
\big(\nabla_{h}\Psi^{(h)}\big)$ and, since
$\det
\big(\nabla_{h}\Psi^{(h)}\big)\rightarrow 1$
uniformly,
\begin{equation}\label{rigidity2}
\int_{\big(a, a + \frac{L}{K_{h}}\big)\times
D}\big|\,\nabla\tilde{y}^{(h)}\circ\Psi^{(h)} -
\widetilde{Q}_{a}^{(h)}\big|^{2} ds\,d\xi\,d\zeta \leq
\frac{c}{h^2}\,\int_{\widetilde{C}_{a,h}}W\big(\big(\Psi^{(h)}\big)^{-1}
(x),\nabla\tilde{y}^{(h)}(x)\big) dx.
\end{equation}
Now define the map $Q^{(h)}: [0, L)\rightarrow SO(3)$
given by
\begin{equation*}
Q^{(h)}(s):= \widetilde{Q}_{a}^{(h)} \quad
\mbox{for}\, s\in \Big[a, a + \frac{L}{K_{h}}\Big),\,
a\in [0, L)\cap \frac{L}{K_{h}}\,\mathbb{N}.
\end{equation*}
Summing (\ref{rigidity2}) over $a\in [0, L)\cap
\frac{L}{K_{h}}\,\mathbb{N}$ leads to
\begin{equation*}
\int_{\Omega}\big|\,\nabla\tilde{
y}^{(h)}\circ\Psi^{(h)} -
Q^{(h)}\big|^{2}ds\,d\xi\,d\zeta \leq
\frac{c}{h^2}\,\int_{\widetilde{\Omega}_{h}}W\big(\big(\Psi^{(h)}\big)^{-1}
(x),\nabla\tilde{y}^{(h)}(x)\big) dx
\end{equation*}
for a suitable constant independent of $h$. By (\ref{finite}) we
obtain exactly (\ref{est1}).

A similar argument as in the proof of \cite[Theorem 2.2]{MGMM04} shows that 
for every $s\in (h, L-h)$ and every $|\delta| < h$
\begin{equation}\label{puntuale}
\big|Q^{(h)}(s + \delta) - Q^{(h)}(s)\big|^2  \leq c\,h^{2\,\alpha - 3},
\end{equation}
and that for every $I'\subset\subset (0, L)$ and every $\delta\in \mathbb{R}$ with
$|\delta| < \textrm{dist}(I',\{0, L\})$ 
\begin{equation}\label{increm}
\int_{I'}\big|Q^{(h)}(s + \delta) - Q^{(h)}(s)\big|^2 ds \leq c\,h^{2(\alpha - 2)}(|\delta| + h)^2,
\end{equation}
with $c$ independent of $I'$ and $\delta$.
Now, let $\eta\in C_{0}^{\infty}(0,1)$ be such that $\eta\geq 0$, and $\int_{0}^{1}\eta(t)\,dt =1$. We set
$\eta^{(h)}(t):= \frac{1}{h}\eta(\frac{t}{h})$ and we define, as in the proof of \cite[Theorem 2.2]{MGMM04}, 
\begin{equation*}
\tilde{Q}^{(h)}(s):= \int_{0}^{h}\eta^{(h)}(t)Q^{(h)}(s - t)\,dt, \quad s\in[0,L],
\end{equation*}
where we have extended $Q^{(h)}$ out of $(0,L)$ putting $Q^{(h)}(s) := Q^{(h)}(0)$ 
for $s\leq 0$ and $Q^{(h)}(s) := Q^{(h)}(L)$  for $s\geq L$.

By (\ref{puntuale}) and (\ref{increm}) it easily follows that, for every $h>0$,
\begin{align}
&\big|\big|\tilde{Q}^{(h)} - Q^{(h)}\big|\big|_{L^{2}(0, L)} \leq C\,h^{\alpha - 1}, \,\, \big|\big|\big(\tilde{Q}^{(h)}\big)'\big|\big|_{L^{2}(0, L)} \leq c\,h^{\alpha - 2},\label{stimetta}\\
& \big|\big|\tilde{Q}^{(h)} - Q^{(h)}\big|\big|^2_{L^{\infty}(0, L)} \leq C\,h^{2\,\alpha - 3}.\label{jen}
\end{align}

In particular, estimates (\ref{est1}) and (\ref{stimetta}) yield
\begin{equation}\label{314bis}
\big|\big|\nabla\tilde{y}^{(h)}\circ\Psi^{(h)} - \tilde{Q}^{(h)}\big|\big|
_{L^{2}(\Omega)} \leq c\,h^{\alpha - 1}.
\end{equation}
Let $\pi : U \rightarrow SO(3)$ be a smooth projection from a neighborhood $U$ of $SO(3)$ onto $SO(3)$.
From (\ref{jen}) it is clear that the functions $\tilde{Q}^{(h)}$ take values in $U$ for $h$ small enough;
therefore, we can define $\tilde{R}^{(h)}:= \pi\big(\tilde{Q}^{(h)}\big)$. Since 
$\big|\big|\big(\tilde{R}^{(h)}\big)'\big|\big|_{L^2(0,L)} \leq C\,h^{\alpha - 2}$ by (\ref{stimetta}), 
using Sobolev-Poincar\'e inequality we deduce
\begin{equation}\label{nuova}
\big|\big|\tilde{R}^{(h)} - P^{(h)} \big|\big|_{L^{\infty}(0,L)} \leq \big|\big|\big(\tilde{R}^{(h)}\big)'\big|\big|_{L^2(0,L)}\leq c\,h^{\alpha - 2},
\end{equation}
where $P^{(h)}$ is the mean value of $R^{(h)}$ over $(0,L)$.
This implies that 
$$\hbox{dist}\big(P^{(h)},SO(3)\big) \leq c\,h^{\alpha - 2},$$
so there exists a sequence of constant rotations $\big(\bar{R}^{(h)}\big)$ such that 
$\big|\,P^{(h)} - \bar{R}^{(h)} \big| \leq c\,h^{\alpha - 2}$. By this and (\ref{nuova}) we get
\begin{equation*}
\big|\big|\tilde{R}^{(h)} - \bar{R}^{(h)} \big|\big|_{L^{\infty}(0,L)} \leq 
\big|\big|\tilde{R}^{(h)} - P^{(h)} \big|\big|_{L^{\infty}(0,L)} + \big|\,P^{(h)} - \bar{R}^{(h)} \big|
\leq c\,h^{\alpha - 2}.
\end{equation*}
Finally, define $R^{(h)}:= \big(\bar{R}^{(h)}\big)^T \tilde{R}^{(h)}$; this sequence is of 
class $C^{\infty}$ and satisfies (\ref{zero}) and (\ref{tre}). 
Moreover, from (\ref{314bis}) we obtain
\begin{equation}\label{est2}
\big|\big|\nabla\big(\big(\bar{R}^{(h)}\big)^T \tilde{y}^{(h)}\big) \circ 
\Psi^{(h)} -R^{(h)} \big|\big|_{L^2(\Omega)} < C\,h^{\alpha - 1}.
\end{equation}
Let $c^{(h)}\in\mathbb{R}^3$ be the average of the function 
$\big(\bar{R}^{(h)}\big)^{T}\,\tilde{y}^{(h)}\circ \Psi^{(h)} - \Psi^{(h)}$ on
$\Omega$ and let us define the sequence 
$Y^{(h)}:= \big(\bar{R}^{(h)}\big)^{T}\,\tilde{y}^{(h)}\circ \Psi^{(h)} - c^{(h)}$.
Then we can write (\ref{est2}) in terms of $\nabla_{h}Y^{(h)}$ and we get
\begin{equation}\label{est3}
\big|\big|\nabla_{h} Y^{(h)}\big(\nabla_{h}\Psi^{(h)}\big)^{-1} - R^{(h)}\big|\big|_{L^{2}(\Omega)} \leq C\,h^{\alpha - 1},
\end{equation}
which is exactly (\ref{uno}).

\textit{Step 2: Definition of the matrix $A$.}\\ 
As in the case of a straight rod treated in \cite{MGMM04}, we consider the sequence $A^{(h)}$ defined as
$$A^{(h)}(s) := \frac{1}{h^{\alpha - 2}}\big(R^{(h)}(s) - Id\big),$$
which converges uniformly and weakly in $W^{1,2}$ to a matrix $A\in
W^{1,2}((0, L); \mathbb{M}^{3\times 3})$. This is exactly property (e). 
Since $R^{(h)}\in SO(3)$, we have
\begin{equation}\label{simm}
A^{(h)} + \big(A^{(h)}\big)^T = -\, h^{\alpha - 2}\,\big(A^{(h)}\big)^T\,A^{(h)}.
\end{equation}
Passing to the limit as $h \rightarrow 0$, we deduce that $A$ is skew-symmetric. 
Moreover, after division by $2\,h^{\alpha - 2}$ in (\ref{simm}), we get
$$\frac{1}{h^{2(\alpha - 2)}}\,\textrm{sym}\,\big(R^{(h)} - Id\big) \,\rightarrow \frac{A^2}{2}\quad \mbox{uniformly},$$
so property (f) follows.
The convergence of the sequence $A^{(h)}$, together with the estimate (\ref{uno}), imply that
\begin{equation}\label{convA}
\frac{1}{h^{\alpha - 2}}\Big(\nabla_{h} Y^{(h)}\big(\nabla_{h}\Psi^{(h)}\big)^{-1} - Id\Big) \rightarrow A \quad 
\hbox{strongly in }L^{2}(\Omega;\mathbb{M}^{3\times 3}).
\end{equation}
\indent
\textit{Step 3: Identification of $A$ via limiting deformations $v$ and $w$.}\\
Now we characterize the elements of $A$ in terms of some
limiting deformations. By (\ref{convdet}) and (\ref{convA}) we get
\begin{equation}\label{convvA}
\frac{1}{h^{\alpha - 2}}\,\nabla_{h}\big(Y^{(h)} - \Psi^{(h)}\big) \rightarrow A\,R_0 \quad \mbox{strongly in }\, L^{2}(\Omega;\mathbb{M}^{3\times 3}),
\end{equation}
so, in particular, 
\begin{equation}\label{convAfirst}
\frac{1}{h^{\alpha - 2}}\,\partial_s\big(Y^{(h)} - \Psi^{(h)}\big) \rightarrow A\,\tau \quad \mbox{strongly in }\, L^{2}(\Omega;\mathbb{R}^{3}).
\end{equation}
Let $v^{(h)}$ be the sequence introduced in (\ref{vh}). By the choice of $c^{(h)}$,
it has zero average on $(0, L)$ and by (\ref{convAfirst}) its derivative converges strongly 
in $L^{2}((0,L);\mathbb{R}^{3})$ to $A\,\tau$. Therefore, by Poincar\'e
inequality, there exists a function
$v\in W^{1,2}((0,L);\mathbb{R}^{3})$ such that
\begin{equation*}
v^{(h)}\rightarrow v \quad \mbox{strongly in }
\,W^{1,2}((0,L);\mathbb{R}^{3}).
\end{equation*}
Moreover, by (\ref{convAfirst}) we obtain that $v' = A\,\tau$.
As $A$ belongs to $W^{1,2}((0,L);\mathbb{M}^{3\times 3})$ and is skew-symmetric, we deduce that
$v\in W^{2,2}((0,L);\mathbb{R}^{3})$ and $v'\cdot \tau = 0$. Then (a) is proved.

Considering the second and the third columns in (\ref{convvA}) we
have
\begin{equation}\label{convA23}
\frac{1}{h^{\alpha - 1}}\,\partial_\xi \big(Y^{(h)} - \Psi^{(h)}\big) \rightarrow A\,\nu_2 \quad \mbox{and} \quad
\frac{1}{h^{\alpha - 1}}\,\partial_\zeta \big(Y^{(h)} - \Psi^{(h)}\big) \rightarrow A\,\nu_3 \quad
\mbox{strongly in }\, L^{2}(\Omega;\mathbb{R}^{3}).
\end{equation}

\noindent
If we apply Poincar\'e inequality to the function $Y^{(h)} - \Psi^{(h)}$ on $D$, we get 
\begin{equation}\label{Poin}
\big|\big|\,Y^{(h)} - \Psi^{(h)} - \big(Y^{(h)} - \Psi^{(h)}\big)_D\,\big|\big|^2_{L^{2}(D)} \leq c \Big(\big|\big|\,\partial_\xi \big(Y^{(h)} - \Psi^{(h)}\big)\,\big|\big|^2_{L^{2}(D)} + 
\big|\big|\,\partial_\zeta \big(Y^{(h)} - \Psi^{(h)}\big)\,\big|\big|^2_{L^{2}(D)}\Big)
\end{equation}
for a.e. $s\in (0,L)$, where $\big(Y^{(h)} - \Psi^{(h)}\big)_D(s):= \int_D \big( Y^{(h)} - \Psi^{(h)}\big) \,d\xi\,d\zeta$. 
Integrating both sides of (\ref{Poin}) with respect to $s$, we obtain that the sequence 
$\big(Y^{(h)} - \Psi^{(h)} - \big(Y^{(h)} - \Psi^{(h)}\big)_D\big)/h^{\alpha - 1}$ is bounded in $L^2(\Omega;\mathbb{R}^3)$; moreover, (\ref{convA23}) yields that there exists a function $q\in L^2((0,L);\mathbb{R}^3)$ such that
\begin{equation}\label{succ}
\frac{1}{h^{\alpha - 1}}\big( Y^{(h)} - \Psi^{(h)} - \big(Y^{(h)} - \Psi^{(h)}\big)_D\big) \rightarrow \xi\,A\,\nu_2 + \zeta\, A\,\nu_3 + q\quad \mbox{strongly in }\, L^{2}(\Omega;\mathbb{R}^{3}).
\end{equation}
Let $w^{(h)}$ be the sequence defined in (\ref{wh}). Thanks to (\ref{dom2}), it can be rewritten as
\begin{equation}\label{323}
w^{(h)} = \frac{1}{h^{\alpha - 1}}\,\frac{1}{\mu(D)}\int_{D}\left(Y^{(h)} - \Psi^{(h)} - (Y^{(h)} - \Psi^{(h)}\big)_D\right)\cdot (\xi\,\nu_{3} - \zeta\,\nu_{2})\,d\xi\,d\zeta.
\end{equation}
From this expression it is clear that, using (\ref{succ}),
\begin{equation}\label{324}
w^{(h)} \rightarrow w = \frac{1}{\mu(D)}\int_{D}\big(\xi\,A\,\nu_2 + \zeta\, A\,\nu_3\big)\cdot (\xi\,\nu_{3} - \zeta\,\nu_{2})\,d\xi\,d\zeta = (A\,\nu_2)\cdot \nu_3
\end{equation}
strongly in $L^{2}(0, L)$, where the last equality follows from (\ref{dom1}) and from the fact that $A$ is skew-symmetric. It remains to show that the convergence in (\ref{324}) is actually weak in $W^{1,2}(0, L)$. To this aim it is enough to verify the boundedness of the derivative of $w^{(h)}$ in the $L^2$- norm. We get
\begin{align}\label{whs}
\big(w^{(h)}\big)' =&\, \frac{1}{h^{\alpha - 1}}\bigg(\frac{1}{\mu(D)}\int_{D}\partial_s(Y^{(h)} - \Psi^{(h)})\cdot (\xi\,\nu_{3} - \zeta\,\nu_{2})\,d\xi\,d\zeta\bigg) + \nonumber\\
+&\,\frac{1}{h^{\alpha - 1}}\bigg(\frac{1}{\mu(D)}\int_{D}\big(Y^{(h)} - \Psi^{(h)}\big)\cdot (\xi\,\nu'_{3} - \zeta\,\nu'_{2})\,d\xi\,d\zeta\bigg).
\end{align}
For the last integral on the right-hand side of (\ref{whs}) the required bound 
can be proved using the convergence in (\ref{succ}), arguing in a similar way to 
(\ref{323})-(\ref{324}). For the first integral notice that 
\begin{align*}
\frac{1}{h^{\alpha - 1}}\int_{D}\partial_s(Y^{(h)} - \Psi^{(h)})\cdot (\xi\,\nu_{3} - \zeta\,\nu_{2})\,d\xi\,d\zeta=&
\,\frac{1}{h^{\alpha - 1}}\int_{D}\big(\partial_s Y^{(h)} - R^{(h)}\partial_s\Psi^{(h)}\big)\cdot (\xi\,\nu_{3} - \zeta\,\nu_{2})\,d\xi\,d\zeta\\ 
+&\,\frac{1}{h^{\alpha - 1}}\int_{D}\big(R^{(h)}\partial_s\Psi^{(h)} - \partial_s \Psi^{(h)}\big)\cdot (\xi\,\nu_{3} -
\zeta\,\nu_{2})\,d\xi\,d\zeta.
\end{align*}
In virtue of (\ref{est3}) and (\ref{convdet}), the first term on right-hand side is bounded in $L^2$, hence 
it remains to control the $L^{2}$-norm of the second integral. Now, using (\ref{psi}), we have
\begin{equation*}
\int_{D}\big(R^{(h)}\partial_s\Psi^{(h)} - \partial_s \Psi^{(h)}\big)\cdot (\xi\,\nu_{3} -
\zeta\,\nu_{2})\,d\xi\,d\zeta
= h\int_{D}\Big[\big(R^{(h)} - Id\big)(\xi\,\nu_2' +
\zeta\,\nu_3')\Big]\cdot (\xi\,\nu_{3} - \zeta\,\nu_{2})\,d\xi\,d\zeta.
\end{equation*}
The required bound follows from (\ref{tre}), hence (b) is shown.

As $A$ is skew-symmetric and $A\,\tau = v'$, $(A\,\nu_2)\cdot \nu_3 = w$, we conclude that 
\begin{equation*}
R_0^T A\, R_0 = \, \left(\begin{array}{c}
0\\
v'\cdot\nu_2\\
v'\cdot\nu_3
\end{array}
\begin{array}{c}
- \,v'\cdot\nu_2\\
0\\
w
\end{array}
\begin{array}{c}
-\, v'\cdot\nu_3\\
- \,w\\
0
\end{array} \right),
\end{equation*}
which gives (\ref{A}).

\textit{Step 4: Convergence of the sequence $\big(u^{(h)}\big)$.}\\
Let $\big(u^{(h)}\big)$ be the sequence defined in (\ref{uh}).

Consider first the case $2 < \alpha < 3$. It is easy to verify 
that its derivative is bounded in $L^{2}(0, L)$. Indeed,
\begin{align*}
\big(u^{(h)}\big)' =&\, \frac{1}{h^{2(\alpha - 2)}}\int_{D}\partial_{s}\big(Y^{(h)} - 
\Psi^{(h)}\big)\cdot\tau \,d\xi\,d\zeta = \,\frac{1}{h^{2(\alpha - 2)}}\int_{D}\big(\partial_{s}Y^{(h)} - R^{(h)}\partial_{s}\Psi^{(h)}\big)\cdot\tau \,d\xi\,d\zeta \nonumber\\ 
&\,+\,  \frac{1}{h^{2(\alpha - 2)}}\int_{D}\big(R^{(h)}\partial_{s}\Psi^{(h)} - \partial_{s}\Psi^{(h)}\big)\cdot\tau\,d\xi\,d\zeta.
\end{align*}
Since $\alpha < 3$, the first term converges to zero strongly in $L^{2}$ by 
(\ref{convdet}) and (\ref{est3}). As for the second term, using (\ref{psi}), the
fact that $R^{(h)}$ is independent of $\xi$ and $\zeta$ and (\ref{dom2}), we have
\begin{align*}
\frac{1}{h^{2(\alpha - 2)}}\int_{D}\big(R^{(h)}\partial_{s}\Psi^{(h)} - 
\partial_{s}\Psi^{(h)}\big)\cdot\tau\,d\xi\,d\zeta &= \,
\frac{1}{h^{2(\alpha - 2)}}\,\big(R^{(h)}\tau - \tau\big)\cdot\tau\\
&=\, \frac{1}{h^{2(\alpha - 2)}}\,\hbox{sym}\big(R^{(h)} - Id\big)\tau\cdot\tau.
\end{align*}
By property (f) this converges to $(A^2\tau)\cdot\tau/2$ uniformly on $(0,L)$.
As $u^{(h)}$ has zero average, by Poincar\'e inequality we deduce that $u^{(h)}$ 
converges to $u$ strongly in $W^{1,2}$, where $u$ satisfies
\begin{equation}\label{constraint}
u' = \left(\frac{A^2}{2}\, \tau\right)\cdot \tau = - \frac{1}{2}\Big((v'\cdot \nu_2)^2 + (v'\cdot \nu_3)^2\Big). 
\end{equation}
In the case $\alpha\geq 3$ the derivative of $\big(u^{(h)}\big)$ can be written as
\begin{align*}
\big(u^{(h)}\big)' = \,\frac{1}{h^{\alpha - 1}}\int_{D}\big(\partial_{s}Y^{(h)} - R^{(h)}\partial_{s}\Psi^{(h)}\big)
\cdot\tau \,d\xi\,d\zeta + \frac{1}{h^{\alpha - 1}}\,\hbox{sym}\big(R^{(h)} - Id\big)\tau\cdot\tau.
\end{align*}
The first term is bounded in $L^{2}(0, L)$ by (\ref{convdet}) and (\ref{est3}), while the second term converges 
to zero uniformly by (f). 

This concludes the proof of (c) and of the theorem.
\end{proof}

\section{Liminf inequalities} 
\noindent
In this section we will show a lower bound for the energy $\big(\tilde{I}^{(h)}\big)/h^{(2\,\alpha -2)}$, 
for all the scalings $\alpha > 2$, and we will describe the limiting functionals.

Let $Q_{3}: \Omega\times\mathbb{M}^{3\times 3}\longrightarrow [0, +\infty)$ be
twice the quadratic form of linearized elasticity, i.e.,
\begin{equation}\label{defQ3}
Q_{3}(z,G) := \frac{\partial^2 W}{\partial F^2}(z,Id)[G,G]
\end{equation}
for every $z\in \Omega$ and every $G\in \mathbb{M}^{3\times 3}$. 
Let $Q^{0}: (0, L)\times \mathbb{R}\times \mathbb{M}^{3\times
3}_{skew} \rightarrow [0, +\infty)$ and $Q:(0,L)\times M^{3\times 3}
\rightarrow [0, +\infty)$ \, be defined as
\begin{equation}\label{defQ0}
Q^{0}(s,t,F):= \min_{\varphi\in W^{1,2}(D;\mathbb{R}^3)} \int_{D}
Q_{3}\bigg(s,\xi,\zeta,
R_{0}\bigg(F\Bigg(\begin{array}{c}
0\\
\xi\\
\zeta
\end{array}\Bigg) + t\,e_1 \,\Big|\,\partial_{\xi}\varphi\,\Big|\,\partial_{\zeta}\varphi
\bigg)\,R_{0}^{T}\bigg)\,d\xi\,d\zeta
\end{equation}
and 
\begin{equation}\label{defQ}
Q(s,F):= \min_{t\in\mathbb{R}}\, Q^{0}(s,t,F),
\end{equation}
respectively.
For $u, w \in W^{1,2}(0, L)$ and $v\in W^{2,2}((0,L);\mathbb{R}^3)$ we introduce the functionals
\begin{equation}\label{defI0}
I_{\alpha}(u,v,w):= \left\{
\begin{array}{ll}
\vspace{.15cm}
\displaystyle \frac{1}{2}\int_{0}^{L}Q^0\Big(s, u' + \frac{1}{2}\big((v'\cdot\nu_2)^2 + (v'\cdot \nu_3)^2\big),
B' + 2\,\hbox{skw}\big(R_0^T R_0' B\big)\Big)\,ds \,\, &\hbox{if } \alpha = 3,\\
\displaystyle \frac{1}{2}\int_{0}^{L}Q^0\big(s, u', B' + 2\,\hbox{skw}\big(R_0^T R_0' B\big)\big)\,ds \qquad &\hbox{if } \alpha > 3,
\end{array}
\right.
\end{equation}
and, for $2<\alpha<3$,
\begin{equation}\label{defI}
I_{\alpha}(v,w):= \frac{1}{2}\int_{0}^{L}Q\big(s, B' + 2\,\hbox{skw}\big(R_0^T R_0' B\big)\big)\,ds,
\end{equation}
where $B\in W^{1,2}((0,L);\mathbb{M}^{3\times 3})$ denotes the
matrix
\begin{equation}\label{An}
B := \left(\begin{array}{c}
0\\
v'\cdot\nu_2\\
v'\cdot\nu_3
\end{array}
\begin{array}{c}
- v'\cdot\nu_2\\
0\\
w
\end{array}
\begin{array}{c}
- v'\cdot\nu_3\\
- w\\
0
\end{array} \right).
\end{equation}
\begin{rem}\label{rk}
It is easy to see that the minimum in (\ref{defQ0}) is attained, it is
unique and it can be computed on the subspace
\begin{equation}\label{defV}
\mathcal{V}:= \Big\{\varphi \in W^{1,2}(D;\mathbb{R}^3) : \,
\int_{D}\varphi\,d\xi\,d\zeta = 0, \,\int_{D}\varphi\cdot (\zeta\,\nu_2 -
\xi\,\nu_3)\,d\xi\,d\zeta = 0\Big\},
\end{equation}
(see \cite[Remark 4.1]{MGMM04}). Moreover the minimizer $\varphi$ depends linearly on the data $t$ and $F$.
More precisely, if $t \in L^2(0,L)$ and $F\in L^2((0,L);\mathbb{M}_{skew}^{3\times 3})$, then denoting with  
$\varphi(s,\cdot)\in \mathcal{V}$ the solution of the problem (\ref{defQ0}) with data $t(s)$ and $F(s)$, we have 
that $\varphi \in L^2(\Omega;\mathbb{R}^3)$ and also $\partial_{\xi}\varphi, \partial_{\zeta}\varphi \in L^2(\Omega;\mathbb{R}^3)$.
Analogously, if $t$ solves (\ref{defQ}), then $t$ depends linearly on $F$. So, if 
$F\in L^2((0,L);\mathbb{M}_{skew}^{3\times 3})$ and $t$ is the solution to (\ref{defQ}) corresponding to $F(s)$, 
then $t\in L^2(0,L)$.
\end{rem}

\begin{rem}\label{comparison}
The limit functionals corresponding to the scalings $2<\alpha<3$ and $\alpha > 3$ turn out to be linear. 
Notice that, in the case $2<\alpha<3$, the deformation $u$ is completely determined by 
$v$ in virtue of the constraint (\ref{constraint0}) in Theorem \ref{compactness}. 
This explains the reason why the $\Gamma$-limit obtained for this scaling does not depend on $u$. 
On the other hand, for $\alpha > 3$, the function $u$ is independent of $v$ and $w$ and the functional 
$I_{\alpha}$ describing the one-dimensional problem coincides with the one obtained by dimension reduction, 
starting from 3D linearized elasticity (see \cite{Griso}, \cite{JuTam} and \cite{Sepp1}).

More precisely, if we assume in addition that the density $W$ is homogeneous and isotropic, that is, 
\begin{equation*}
W(F) = W(FR) \quad \hbox{for every } R\in SO(3),
\end{equation*}
then the quadratic form $Q_3$ is given by 
\begin{equation*}
Q_{3}(G) = 2\,\mu\,\bigg|\frac{G + G^{T}}{2}\bigg|^{2} + \lambda\,(\mbox{tr}\, G)^{2}
\end{equation*}
for some constants $\lambda,\mu \,\in \mathbb{R}$. Since for all 
$G \in\mathbb{M}^{3\times 3}$ and $R\in SO(3)$ we have
\begin{equation*}
Q_{3}(R\,G\,R^{T}) = Q_{3}(G),
\end{equation*}
by \cite[Remark 3.5]{MM03} formula (\ref{defQ0}) reduces to 
\begin{equation}\label{newQ0}
Q^{0}(s,t,F) =\,\frac{\mu(3\,\lambda + 2\,\mu)}{\lambda + \mu}\,(t^2 + I_3 F_{12}^2 + I_2 F_{13}^2) + \mu\,T\,F_{23}^2,
\end{equation}
where $I_3 = \int_D \xi^2 d\xi\,d\zeta$,  $I_2 = \int_D \zeta^2 d\xi\,d\zeta$ and $T$ is the so-called torsional rigidity, which depends on the section. Therefore by (\ref{newQ0}), (\ref{dom1}) and (\ref{dom2}) the limit functional reads as follows
\begin{equation*}
I_{\alpha}(u,v,w) = \frac{1}{2}\,\frac{\mu(3\,\lambda + 2\,\mu)}{\lambda + \mu}\int_{0}^{L} 
\big((u')^2 + I_2 q_2^2 + I_3 q_3^2 \big)\,ds + \frac{1}{2}\,\mu\,T \int_{0}^{L} q_1^2 ds,
\end{equation*}
where
\begin{align*}
q_1 := & \, w' + k_2 (v\cdot \nu_3)' - k_3 (v\cdot \nu_2)' + \varrho\,\big(k_2 (v\cdot \nu_2) + 
k_3 (v\cdot \nu_3)\big),\\
q_2 := & \, k_2 w - (v\cdot \nu_3)'' - 2\,\varrho\,(v\cdot \nu_2)' - 
(v\cdot \tau)\,\big(\varrho\,k_2 + k_3'\big) 
+\,(v\cdot \nu_2)\,\big(k_2 k_3 + \varrho'\big) + (v\cdot \nu_3)\,\big(\varrho^2 - k_3^2\big),\\
q_3 := & \, k_3\,w + (v\cdot \nu_2)'' - 2\,\varrho\,(v\cdot \nu_3)' - 
(v\cdot \tau)\,\big(\varrho\,k_3 - k_2'\big) 
-\,(v\cdot \nu_2)\,\big(\varrho^2 - k_2^2\big) + 
(v\cdot \nu_3)\,\big(k_2\,k_3 - \varrho'\big).
\end{align*}

This is the functional derived in \cite{Griso}, \cite{JuTam} and \cite{Sepp1}, starting from linearized elasticity.
\end{rem}

Now we are ready to show a lower bound for the functionals $h^{- \alpha} \tilde{I}^h$ with $2 < \alpha < 3$.

\begin{thm}[Case $2 < \alpha < 3$]\label{scithm}
Let $w\in W^{1,2}(0,L)$ and let $v\in
W^{2,2}((0,L);\mathbb{R}^3)$ be such that $v'\cdot \tau = 0$. 
Then, for every positive sequence $(h_j)$ converging to zero and every 
sequence $\big(\tilde{y}^{(h_j)}\big)\subset
W^{1,2}(\widetilde{\Omega}_{h_j};\mathbb{R}^{3})$ such that the
sequence $Y^{(h_j)}:= \tilde{y}^{(h_j)}\circ \Psi^{(h_j)}$
satisfies the properties (a), (b) and (d) of Theorem \ref{compactness}, it
turns out that
\begin{equation}\label{fine}
\liminf_{j\rightarrow
\infty}\frac{1}{h_j^{2\,\alpha}}\int_{\widetilde{\Omega}_{h_j}}
W\Big(\big(\Psi^{(h_j)}\big)^{-1} (x),\nabla
\tilde{y}^{(h_j)}(x)\Big) dx \geq 
I_{\alpha}(v, w),
\end{equation}
where $I_{\alpha}$ is introduced in (\ref{defI}).
\end{thm}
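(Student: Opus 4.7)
The plan is to adapt the nonlinear lower-bound scheme of \cite{FJM02}, carried out in the straight-rod case by \cite{MGMM04}, to the present curved setting. After changing variables via $\Psi^{(h_j)}$ and applying frame indifference twice (first with $\bar R^{(h_j)}\in SO(3)$, then pointwise with $R^{(h_j)}(s)\in SO(3)$), I would write
\begin{equation*}
\frac{1}{h_j^{2\alpha}}\int_{\widetilde\Omega_{h_j}} W\,dx = \int_\Omega \frac{1}{h_j^{2\alpha-2}}\,W\!\big(z,\,Id+h_j^{\alpha-1}G^{(h_j)}\big)\,\det(\nabla_{h_j}\Psi^{(h_j)})\,dz,
\end{equation*}
where $G^{(h_j)} := \bigl((R^{(h_j)})^T\nabla_{h_j}Y^{(h_j)}(\nabla_{h_j}\Psi^{(h_j)})^{-1} - Id\bigr)/h_j^{\alpha-1}$ is bounded in $L^2(\Omega;\mathbb{M}^{3\times 3})$ by (\ref{uno}). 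Extracting a subsequence, $G^{(h_j)}\rightharpoonup G$ weakly in $L^2$.

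A Taylor expansion of $W(z,\cdot)$ around $Id$, exploiting $W(z,Id)=0$, $\partial_F W(z,Id)=0$ (from (iii), (iv)) and the uniform control on $\partial_F^2W$ from (ii), together with the standard truncation $\chi^{(h_j)}:=\chi_{\{|h_j^{\alpha-1}G^{(h_j)}|<\delta\}}$, which converges to $1$ in measure thanks to the $L^2$-bound on $G^{(h_j)}$, gives
\begin{equation*}
\frac{\chi^{(h_j)}}{h_j^{2\alpha-2}}W\!\big(z,\,Id+h_j^{\alpha-1}G^{(h_j)}\big)\geq \tfrac12\,Q_3(z,G^{(h_j)})\chi^{(h_j)} - \omega(h_j)\,|G^{(h_j)}|^2,
\end{equation*}
with $\omega(h_j)\to 0$. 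Discarding the non-negative energy outside $\{\chi^{(h_j)}=1\}$ and using that $G\mapsto\int_\Omega Q_3(z,G)\,dz$ is sequentially weakly lower semicontinuous in $L^2$ (since $Q_3(z,\cdot)\geq 0$ is quadratic) then yields
\begin{equation*}
\liminf_{j\to\infty}\frac{1}{h_j^{2\alpha}}\int_{\widetilde\Omega_{h_j}}W\,dx\geq\tfrac12\int_\Omega Q_3(z,G)\,dz.
\end{equation*}

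The central step is to characterize the structure of $G$. Right-multiplying $G^{(h_j)}$ by $\nabla_{h_j}\Psi^{(h_j)}$, the second and third columns of $G^{(h_j)}\nabla_{h_j}\Psi^{(h_j)}$ become $h_j^{-\alpha}\partial_\xi[(R^{(h_j)})^TY^{(h_j)} - \Psi^{(h_j)}]$ and $h_j^{-\alpha}\partial_\zeta[(R^{(h_j)})^TY^{(h_j)} - \Psi^{(h_j)}]$ respectively; the $L^2$-bound, after subtracting the $D$-average and the twist moment of $Y^{(h_j)}-\Psi^{(h_j)}$ (so that the corrector lies in the admissible class $\mathcal V$), combined with Poincar\'e on the cross-section $D$, produces a field $\varphi\in L^2((0,L);W^{1,2}(D;\mathbb{R}^3))$ with $\varphi(s,\cdot)\in\mathcal V$ such that these columns of $GR_0$, read in the $R_0$-frame, are exactly $\partial_\xi\varphi$ and $\partial_\zeta\varphi$. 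A parallel expansion of the first column yields that its $\xi$- and $\zeta$-linear part equals $R_0^T A'R_0\,(0,\xi,\zeta)^T$, which by differentiating $A=R_0BR_0^T$ and using the skew-symmetry of both $B$ (from (\ref{An})) and $R_0^TR_0'$ rewrites as $[B'+2\,\hbox{skw}(R_0^TR_0'B)](0,\xi,\zeta)^T$, matching the structural ansatz in (\ref{defQ0}).

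The principal obstacle is the tangential component of the $s$-dependent, $(\xi,\zeta)$-independent part of the first column for $2<\alpha<3$. A direct computation of its $D$-average along $\tau$ shows that the leading contribution is $h_j^{2(\alpha-2)}\big[(u^{(h_j)})' - \tfrac12\big((v^{(h_j)\prime}\!\cdot\!\nu_2)^2+(v^{(h_j)\prime}\!\cdot\!\nu_3)^2\big)\big]$, divided by $h_j^{\alpha-1}$. Since the ratio $h_j^{\alpha-3}$ diverges as $j\to\infty$, the $L^2$-boundedness of $G^{(h_j)}$ forces the bracket to vanish to leading order -- precisely the content of the nonlinear constraint (\ref{constraint0}) satisfied by the limits $u$ and $v$. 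What survives at the correct scale $h_j^{\alpha-1}$ is a sequence bounded in $L^2(0,L)$ which extracts a weak subsequential limit $t\in L^2(0,L)$, free of any relation to $v$ or $w$. With the full structure of $G$ identified, the pointwise minimization characterization of $Q^0$ and $Q$ (Remark~\ref{rk}) yields
\begin{equation*}
\tfrac12\int_\Omega Q_3(z,G)\,dz\geq \tfrac12\int_0^L Q^0\!\big(s,t(s),B'+2\,\hbox{skw}(R_0^TR_0'B)\big)\,ds\geq \tfrac12\int_0^L Q\!\big(s,B'+2\,\hbox{skw}(R_0^TR_0'B)\big)\,ds = I_\alpha(v,w),
\end{equation*}
the last equality being the definition (\ref{defI}), which concludes the proof.
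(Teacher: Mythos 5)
Your outline reproduces the paper's argument almost step by step: you introduce the same $G^{(h)}$, obtain weak $L^2$-compactness from (\ref{uno}), Taylor-expand $W$ around $Id$ with a truncation, pass to the lower limit via the weak lower semicontinuity of the (convex) quadratic form $Q_3$, then read off the structure of $G$ column by column to identify $R_0^T\,\mathrm{sym}\,G\,R_0$ with the expression that $Q^0$ and $Q$ are built from, and conclude using the minimizations in (\ref{defQ0}) and (\ref{defQ}) exactly as the paper does in its Step~3. The observation that the weak limit $t=g\cdot\tau$ is a free $L^2$ parameter, and hence gets absorbed by the $\min_t$ in (\ref{defQ}), is the right point and is what allows (\ref{fine}) to be independent of $u$.

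One step is stated more strongly than your truncation can deliver. With $\chi^{(h)}=\chi_{\{|h^{\alpha-1}G^{(h)}|<\delta\}}$ at a \emph{fixed} $\delta$, the Taylor remainder involves $\partial^2 W/\partial F^2$ evaluated at a point a distance up to $\delta$ from $Id$; assumption (ii) only gives a uniform bound $C_1$, not continuity, so you only obtain an error of size $\omega(\delta)\,|G^{(h)}|^2$ (and even this requires the extra measurability input, which the paper supplies via Scorza--Dragoni to get a uniform modulus on $K\times\overline{B_\delta(Id)}$), not $\omega(h_j)|G^{(h)}|^2$ with $\omega(h_j)\to 0$. To make the error actually vanish with $h$ you either need the paper's threshold $|G^{(h)}|\le h^{2-\alpha}$, which forces $h^{\alpha-1}|G^{(h)}|\le h\to 0$ uniformly on the good set, or a double limit $\liminf_h$ followed by $\delta\to 0$. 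As written, the claim ``with $\omega(h_j)\to 0$'' is not justified. A minor secondary remark: your heuristic that the $D$-averaged tangential part of the first column scales like $h_j^{\alpha-3}\big[(u^{(h_j)})'-\tfrac12(\dots)\big]$ has a sign mismatch with (\ref{constraint0}) (it should be a plus), but this re-derivation of the constraint is not needed for the liminf inequality and does not affect the rest of the argument.
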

\begin{proof}
In the following, we will write simply $h$ instead of $h_j$.
Let $\big(\tilde{y}^{(h)}\big)$ be a sequence such that $Y^{(h)}:=
\tilde{y}^{(h)}\circ \Psi^{(h)}$ satisfies the required
assumptions. 

\textit{First step: lower bound for the energy. }
We can suppose that
\begin{equation*}
\liminf_{h\rightarrow 0}\frac{1}{h^{2\,\alpha}}\int_{\widetilde{\Omega}_{h}}
W\Big(\big(\Psi^{(h)}\big)^{-1} (x),\nabla \tilde{y}^{(h)}(x)\Big)
dx \leq c < +\,\infty,
\end{equation*}
otherwise (\ref{fine}) is trivial. Therefore, up to subsequences,
(\ref{finite}) is satisfied. By Theorem \ref{compactness} we get
the existence of a sequence $R^{(h)}: [0,L]\rightarrow SO(3)$ such
that
\begin{equation}\label{esti2}
||\nabla_h Y^{(h)}\big(\nabla_h \Psi^{(h)}\big)^{-1} -
R^{(h)}||_{L^2(\Omega)} \leq c\,h^{\alpha - 1}
\end{equation}
and $R^{(h)} \rightarrow Id$ uniformly. Define the functions
$G^{(h)}: \Omega\rightarrow \mathbb{M}^{3\times 3}$\, as
\begin{equation}\label{newGh}
G^{(h)}:= \frac{1}{h^{\alpha - 1}}\Big((R^{(h)})^{T}\nabla_h
Y^{(h)}\big(\nabla_h\Psi^{(h)}\big)^{-1} - Id\Big).
\end{equation}
By (\ref{esti2}) they are bounded in
$L^{2}(\Omega;\mathbb{M}^{3\times 3})$, so there exists $G\in
L^{2}(\Omega;\mathbb{M}^{3\times 3})$ such that
$G^{(h)}\rightharpoonup G$ weakly in
$L^{2}(\Omega;\mathbb{M}^{3\times 3})$. We claim that
\begin{equation}\label{liminf}
\liminf_{h\rightarrow 0}\frac{1}{h^{2\,\alpha}}\int_{\widetilde{\Omega}_{h}} W\Big(\big(\Psi^{(h)}\big)^{-1} (x),\nabla \tilde{y}^{(h)}(x)\Big) dx \geq \frac{1}{2} \int_{\Omega} Q_{3}(s,\xi,\zeta,G) ds\,d\xi\,d\zeta.
\end{equation}
Performing the change of variables $\Psi^{(h)}$ and using the
frame indifference of $W$, we have
\begin{align}\label{newvar}
&\frac{1}{h^{2\,\alpha}}\int_{\widetilde{\Omega}_{h}} W\Big(\big(\Psi^{(h)}\big)^{-1} (x),\nabla \tilde{y}^{(h)}\Big) dx  = \frac{1}{h^{2\,\alpha - 2}}\int_{\Omega} W\big(s,\xi,\zeta, \nabla\tilde{y}^{(h)}\circ \Psi^{(h)}\big)\det \big(\nabla_{h}\Psi^{(h)}\big) ds\,d\xi\,d\zeta \nonumber\\
&\hspace{3cm} = \frac{1}{h^{2\,\alpha - 2}}\int_{\Omega} W\Big(s,\xi,\zeta, \big(\nabla_h Y^{(h)}\big)\big(\nabla_{h}\Psi^{(h)}\big)^{-1}\Big)\det \big(\nabla_{h}\Psi^{(h)}\big) ds\,d\xi\,d\zeta.
\end{align}
We introduce the functions
$$
\chi^{(h)}(s,\xi,\zeta):=
\left\{
\begin{array}{ll}
\vspace{.1cm}
\displaystyle 1 & \mbox{if } \,\, \snorm{G^{(h)}(s,\xi,\zeta)} \leq h^{2 - \alpha},\\
\displaystyle 0 & \mbox{otherwise}.
\end{array}
\right.
$$
From the boundedness of $G^{(h)}$ in $L^2(\Omega;\mathbb{M}^{3\times 3})$ we get that 
$\chi^{(h)}\rightarrow 1$ boundedly in measure, so that
\begin{equation}\label{convchi}
\chi^{(h)}G^{(h)} \rightharpoonup G \quad \mbox{weakly in}\, L^{2}(\Omega;\mathbb{M}^{3\times 3}).
\end{equation}
By expanding $W$ around the identity, we obtain that for every $(s,\xi,\zeta) \in \Omega$ and
$A\in \mathbb{M}^{3\times 3}$
\begin{equation*}
W\big(s,\xi,\zeta, Id + A) = \frac{1}{2}\,\frac{\partial^{2}W}{\partial F^{2}}\,(s,\xi,\zeta, Id + t\,A)[A,A],
\end{equation*}
where \,$0<t<1$\, depends on the point $(s,\xi,\zeta)$ and on $A$. By (\ref{newvar}) and by the definition of $G^{(h)}$
we have
\begin{align*}
\frac{1}{h^{2\,\alpha}}\,\tilde{I}^{(h)}\big(\tilde{y}^{(h)}\big) &=
\frac{1}{h^{2\,\alpha - 2}}\int_{\Omega} W\big(s,\xi,\zeta, Id + h^{\alpha - 1}\,G^{(h)} \big)\det \big(\nabla_{h}\Psi^{(h)}\big) ds\,d\xi\,d\zeta \nonumber\\
&\geq \frac{1}{h^{2\,\alpha - 2}}\int_{\Omega} \chi^{(h)} W\big(s,\xi,\zeta,Id + h^{\alpha - 1}\,G^{(h)}\big)\det \big(\nabla_{h}\Psi^{(h)}\big) ds\,d\xi\,d\zeta \nonumber\\
&= \frac{1}{2}\int_{\Omega} \chi^{(h)}\left(\frac{\partial^{2}W}{\partial F^{2}}\,\big(s,\xi,\zeta, Id + h^{\alpha - 1}\,t(h)\,G^{(h)}\big)\big[G^{(h)},G^{(h)}\big]\right)\det \big(\nabla_{h}\Psi^{(h)}\big) ds\,d\xi\,d\zeta,
\end{align*}
where $0 < t(h) < 1$\, depends on $(s,\xi,\zeta)$ and on $G^{(h)}$.
For the last integral in the previous formula we have that
\begin{align}
&\int_{\Omega} \chi^{(h)}\left(\frac{\partial^{2}W}{\partial F^{2}}\,\big(s,\xi,\zeta, Id + h^{\alpha - 1}\,t(h)\,G^{(h)}\big)\big[G^{(h)},G^{(h)}\big]\right)\det \big(\nabla_{h}\Psi^{(h)}\big) ds\,d\xi\,d\zeta = \nonumber\\
&\int_{\Omega}\chi^{(h)}\bigg(\frac{\partial^{2}W}{\partial F^{2}}\,\big(s,\xi,\zeta, Id + h^{\alpha - 1}\,t(h)\,G^{(h)}\big)\big[G^{(h)},G^{(h)}\big] - Q_{3}\big(s,\xi,\zeta, G^{(h)}\big)\Big)\det \big(\nabla_{h}\Psi^{(h)}\big) ds\,d\xi\,d\zeta \nonumber\\
&\hspace{1cm}+ \int_{\Omega}Q_{3}\big(s,\xi,\zeta,\chi^{(h)}\,G^{(h)}\big)\det \big(\nabla_{h}\Psi^{(h)}\big) ds\,d\xi\,d\zeta \label{3.22quater}.
\end{align}

Notice that the second integral is lower semicontinuous with respect to the weak topology of $L^2$; so, the claim follows from (\ref{convchi}), once we prove that the first term in (\ref{3.22quater}) can be neglected for $h$ small enough. To this aim, 
we apply Scorza-Dragoni theorem to the function $\partial^{2}W/\partial F^{2}$ and we have that there exists a compact set $K\subset \Omega$ such that the function \,$\partial^{2}W/\partial F^{2}$ \, restricted to $K\times \overline{B_{\delta}(Id)}$ is continuous, hence uniformly continuous, and the measure of $\Omega\setminus K$ is small.
Since $h^{\alpha - 1}\,t(h)\,\chi^{(h)}\,G^{(h)}$ is uniformly small for $h$ small enough, for every $\varepsilon > 0$ we have
\begin{align*}
\frac{1}{2}\int_{\Omega}&\chi^{(h)}\Bigg(\frac{\partial^{2}W}{\partial F^{2}}\,\big(s,\xi,\zeta, Id + h^{\alpha - 1}\,t(h)\,G^{(h)}\big)\big[G^{(h)},G^{(h)}\big] - Q_{3}\big(s,\xi,\zeta,G^{(h)}\big)\Bigg)\det \big(\nabla_{h}\Psi^{(h)}\big) ds\,d\xi\,d\zeta \\
&\geq - \frac{\varepsilon}{2}\int_{K}\chi^{(h)}\big|\,G^{(h)}\big|^{2}\det \big(\nabla_{h}\Psi^{(h)}\big) ds\,d\xi\,d\zeta
\geq -\,C\,\varepsilon
\end{align*}
for $h$ small enough. Hence, being $\varepsilon$ arbitrary, (\ref{liminf}) is proved.

Since, by frame indifference, the quadratic form $Q_3$ depends only on the 
symmetric part of $G$, we obtain the bound
\begin{equation}\label{sc}
\liminf_{h\rightarrow 0}\frac{1}{h^{2\,\alpha}}\int_{\widetilde{\Omega}_{h}}
W\Big(\big(\Psi^{(h)}\big)^{-1} (x),\nabla \tilde{y}^{(h)}(x)\Big)
dx \geq \frac{1}{2} \int_{\Omega} Q_{3}(s,\xi,\zeta,\tilde{G})
ds\,d\xi\,d\zeta,
\end{equation}
where $\tilde{G}$ denotes the symmetric part of $G$. 

\textit{Second step: identification of $\tilde{G}$.} In order to identify $\tilde{G}$, we first notice that, since $R^{(h)}\rightarrow Id$ uniformly,
\begin{equation*}
R^{(h)}G^{(h)} = \, \frac{1}{h^{\alpha - 1}}\Big(\nabla_h Y^{(h)}\big(\nabla_h\Psi^{(h)}\big)^{-1} - R^{(h)}\Big)\rightharpoonup G
\end{equation*}
weakly in $L^{2}(\Omega;\mathbb{M}^{3\times 3})$; moreover, by (\ref{convdet}),
\begin{equation}\label{GG}
R^{(h)}G^{(h)}\nabla_h\Psi^{(h)} = \, \frac{1}{h^{\alpha - 1}}\Big(\nabla_h Y^{(h)} - R^{(h)}\nabla_h\Psi^{(h)}\Big)
\rightharpoonup G\,R_{0}.
\end{equation}
In particular, considering the second and the third columns in (\ref{GG}) we get
\begin{equation*}
R^{(h)}G^{(h)}\nu_2 = \frac{1}{h^\alpha}\,\partial_{\xi}\big(Y^{(h)} - R^{(h)}\Psi^{(h)}\big) = \frac{1}{h^\alpha}\,\Big(\partial_{\xi}Y^{(h)} - h\,R^{(h)}\nu_2\Big)\rightharpoonup G\,\nu_2
\end{equation*}
and
\begin{equation*}
R^{(h)}G^{(h)}\nu_3 = \frac{1}{h^\alpha}\,\partial_{\zeta}\big(Y^{(h)} - R^{(h)}\Psi^{(h)}\big) = \frac{1}{h^\alpha}\,\Big(\partial_{\zeta}Y^{(h)} - h\,R^{(h)}\nu_3\Big)\rightharpoonup G\,\nu_3.
\end{equation*}
Let us define the functions $\tilde{\beta}^{(h)}: \Omega \rightarrow \mathbb{R}^3$ as
\begin{equation*}
\tilde{\beta}^{(h)}(s,\xi,\zeta) := \frac{1}{h^\alpha}\,\Big(Y^{(h)} - h\,\xi\,R^{(h)}\nu_2 - h\,\zeta\,R^{(h)}\nu_3\Big).
\end{equation*}
Easy computations show that
\begin{equation}\label{colonne23}
\partial_{\xi}\tilde{\beta}^{(h)} = R^{(h)}G^{(h)}\nu_2 \quad\hbox{and} \quad 
\partial_{\zeta}\tilde{\beta}^{(h)} = R^{(h)}G^{(h)}\nu_3,
\end{equation}
hence $\partial_{\xi}\tilde{\beta}^{(h)}$ and $\partial_{\zeta}\tilde{\beta}^{(h)}$ 
are bounded in $L^2(\Omega)$. By Poincar\'e inequality, this implies that 
\begin{equation*}
\big|\big|\tilde{\beta}^{(h)} - \tilde{\beta}_D^{(h)} \big|\big|^2_{L^{2}(\Omega)}
\leq C\,\Big(\big|\big|\partial_{\xi}\tilde{\beta}^{(h)}\big|\big|^2_{L^{2}(\Omega)} + \big|\big|\partial_{\zeta}\tilde{\beta}^{(h)}\big|\big|^2_{L^{2}(\Omega)}\Big) \leq c,
\end{equation*}
where $\tilde{\beta}_D^{(h)}(s):= \int_D \tilde{\beta}^{(h)}(s,\xi,\zeta)\,d\xi\,d\zeta$. Therefore, 
there exists a function $\beta\in L^{2}(\Omega;\mathbb{R}^{3})$ such that
\begin{equation}\label{betaconv}
\beta^{(h)}:=\tilde{\beta}^{(h)} - \tilde{\beta}_D^{(h)} \rightharpoonup \beta \,\,\hbox{weakly in }L^{2}(\Omega;\mathbb{R}^{3}).
\end{equation}
From (\ref{colonne23}), as $h\rightarrow 0$, we get
\begin{equation}\label{GG23}
G\,\nu_2 = \partial_{\xi}\beta \quad\hbox{and} \quad
G\,\nu_3 = \partial_{\zeta}\beta.
\end{equation}
Considering the first columns in (\ref{GG}) we have 
\begin{equation}\label{column1}
R^{(h)}G^{(h)}\partial_s\Psi^{(h)} = \, \frac{1}{h^{\alpha - 1}}\Big(\partial_s Y^{(h)} - 
R^{(h)}\partial_s\Psi^{(h)}\Big) \rightharpoonup G\,\tau.
\end{equation}
Using (\ref{psi}) and the definitions of $\tilde{\beta}_D^{(h)}$ and $\beta^{(h)}$, we can write
\begin{align}\label{colonna1}
R^{(h)}G^{(h)}\partial_s\Psi^{(h)} =&\,
\frac{1}{h^{\alpha - 1}}\Big(\partial_s Y^{(h)} - h\,\xi R^{(h)}\nu_2' - h\,\zeta R^{(h)}\nu_3'\Big) - 
\frac{1}{h^{\alpha - 1}}\,R^{(h)}\tau\nonumber\\
=&\, h\,\partial_s \beta^{(h)} + \frac{1}{h^{\alpha - 2}}\,\big(R^{(h)}\big)'(\xi\,\nu_2 + \zeta\,\nu_3) +
\frac{1}{h^{\alpha - 1}}\int_{D}\big(\partial_s Y^{(h)} - R^{(h)}\tau\big)\,d\xi\,d\zeta.
\end{align}
By (\ref{betaconv}) it follows that
\begin{equation}\label{stellina}
h\,\partial_s \beta^{(h)} \rightharpoonup 0 \quad \hbox{weakly in } \,W^{-1,2}(\Omega;\mathbb{R}^3).
\end{equation}
Moreover, from (\ref{esti2}), it turns out that there exists $g\in L^2((0,L);\mathbb{R}^3)$ such that
\begin{equation}\label{identig}
\frac{1}{h^{\alpha - 1}}\int_{D}\big(\partial_s Y^{(h)} - R^{(h)}\tau\big)\,d\xi\,d\zeta 
= \,\frac{1}{h^{\alpha - 1}}\int_{D}\big(\partial_s Y^{(h)} - R^{(h)}\partial_s \Psi^{(h)}\big)\,d\xi\,d\zeta  \rightharpoonup g
\end{equation}
weakly in $L^2((0,L);\mathbb{R}^3)$. Passing to the limit in (\ref{colonna1}) and
using (\ref{column1}), (\ref{identig}), (\ref{stellina}), and property (e) of 
Theorem \ref{compactness}, we obtain
\begin{equation}\label{bis}
G\,\tau = A'(\xi\,\nu_2 + \zeta\,\nu_3) + g.
\end{equation}
Finally, by (\ref{GG23}) and (\ref{bis}) we have that 
\begin{equation*}
G\,R_0 = \Bigg(A'\,R_0\Bigg(\begin{array}{c}
0\\
\xi\\
\zeta
\end{array}\Bigg) + g\bigg|\,
\partial_{\xi} \beta\,\bigg|\,
\partial_{\zeta} \beta
 \Bigg).
\end{equation*}
As $\hbox{sym}\,\big(R_0^T G\,R_0\big) 
= R_0^T \tilde{G} \,R_0$, we deduce that
\begin{equation*}
R_0^T \tilde{G}\,R_0 = \hbox{sym}\Bigg(R_0^T A'\,R_0\Bigg(\begin{array}{c}
0\\
\xi\\
\zeta
\end{array}\Bigg) + \hat{g} \bigg|\,
\partial_{\xi} \hat{\beta}\,\bigg|\,
\partial_{\zeta} \hat{\beta}
 \Bigg), 
\end{equation*}
where $\hat{\beta}:= R_0^T\beta$ and $\hat{g}:= R_0^T g$.
If we define $\varphi:= \hat{\beta} + \xi\,(\hat{g}\cdot e_2)\,e_1 + 
\zeta\,(\hat{g}\cdot e_3)\,e_1$, we obtain the expression
\begin{equation}\label{vif}
R_0^T  \tilde{G}\,R_0 = \hbox{sym}\Bigg(R_0^T A'\,R_0\Bigg(\begin{array}{c}
0\\
\xi\\
\zeta
\end{array}\Bigg) + (\hat{g}\cdot e_1)\,e_1 \bigg|\,
\partial_{\xi} \varphi\,\bigg|\,
\partial_{\zeta} \varphi
 \Bigg). 
\end{equation}

Now, let us rewrite the previous expression in terms of the matrix $B$ defined in
(\ref{An}), noticing that $A = R_0 B R_0^T$.
It turns out that
\begin{equation*}
A' =  R'_0 B R_0^T +  R_0 B' R_0^T + R_0 B \big(R_0^T\big)',
\end{equation*}
hence
\begin{equation*}
R_0^T A'\,R_0 = R_0^T R'_0 B +  B' + B \big(R_0^T\big)'R_0.
\end{equation*}
Since $B$ is skew-symmetric, we deduce
\begin{equation}\label{vifor}
R_0^T A'\,R_0 = B' + 2\,\hbox{skw}\big(R_0^T R'_0 B\big).
\end{equation}
Using this identity in (\ref{vif}) we have
\begin{equation*}
R_0^T  \tilde{G}\,R_0 = \hbox{sym}\Bigg(\Big(B' + 2\,\hbox{skw}\big(R_0^T R'_0 B\big)\Big)\Bigg(\begin{array}{c}
0\\
\xi\\
\zeta
\end{array}\Bigg) + (\hat{g}\cdot e_1)\,e_1 \bigg|\,
\partial_{\xi} \varphi\,\bigg|\,
\partial_{\zeta} \varphi
 \Bigg). 
\end{equation*}
Finally, as
\begin{equation*}
\hat{g}\cdot e_1 = \big(R_0^T g\big)\cdot e_1 = g\cdot (R_0 e_1) = g\cdot \tau,
\end{equation*}
we conclude that
\begin{equation}\label{vif3}
R_0^T  \tilde{G}\,R_0 = \hbox{sym}\Bigg(\Big(B' + 2\,\hbox{skw}\big(R_0^T R'_0 B\big)\Big)\Bigg(\begin{array}{c}
0\\
\xi\\
\zeta
\end{array}\Bigg) + (g\cdot \tau)\,e_1 \bigg|\,
\partial_{\xi} \varphi\,\bigg|\,
\partial_{\zeta} \varphi
 \Bigg). 
\end{equation}

\textit{Third step: description of the limit functional. }
Since $\varphi(s,\cdot)\in W^{1,2}(D;\mathbb{R}^{3})$ for a.e. $s\in (0, L)$, using (\ref{sc}), (\ref{vif3}) and the definition of $Q$, we obtain exactly (\ref{fine}). 
\end{proof}

It remains to show the lower bound for the functionals $h^{- \alpha} \tilde{I}^h$ with $\alpha \geq 3$.

\begin{thm}[Case $\alpha\geq 3$]\label{scithm2}
Let $u,w\in W^{1,2}(0,L)$ and let $v\in
W^{2,2}((0,L);\mathbb{R}^3)$ be such that $v'\cdot \tau = 0$. 
Then, for every positive sequence $(h_j)$ converging to zero and
every  sequence $\big(\tilde{y}^{(h_j)}\big)\subset
W^{1,2}(\widetilde{\Omega}_{h_j};\mathbb{R}^{3})$ such that the
sequence $Y^{(h_j)}:= \tilde{y}^{(h_j)}\circ \Psi^{(h_j)}$
satisfies the properties (a)-(d) of Theorem \ref{compactness}, it
turns out that
\begin{equation}\label{fine2}
\liminf_{j\rightarrow
\infty}\frac{1}{h_j^{2\,\alpha}}\int_{\widetilde{\Omega}_{h_j}}
W\Big(\big(\Psi^{(h_j)}\big)^{-1} (x),\nabla
\tilde{y}^{(h_j)}(x)\Big) dx \geq 
I_{\alpha}(u, v, w),
\end{equation}
where $I_{\alpha}$ is defined as in (\ref{defI0}).
\end{thm}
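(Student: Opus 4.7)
I would follow the strategy of Theorem \ref{scithm}, incorporating the sequence $u^{(h)}$ from property (c) of Theorem \ref{compactness}. Assuming the liminf is finite (otherwise (\ref{fine2}) is trivial), the hypothesis (\ref{finite}) is satisfied along a subsequence, so Theorem \ref{compactness} provides rotations $R^{(h)}$ and all the structural convergences. Define
\begin{equation*}
G^{(h)} := \frac{1}{h^{\alpha-1}}\Big((R^{(h)})^T\nabla_h Y^{(h)}(\nabla_h\Psi^{(h)})^{-1} - Id\Big),
\end{equation*}
which by (\ref{uno}) is bounded in $L^2(\Omega;\mathbb{M}^{3\times 3})$ and admits a weak limit $G$. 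A Taylor expansion of $W$ around the identity, together with the truncation $\chi^{(h)}:=\chi_{\{|G^{(h)}|\leq h^{2-\alpha}\}}$ and a Scorza-Dragoni argument on $\partial^2 W/\partial F^2$ (verbatim as in Step~1 of Theorem \ref{scithm}), yields
\begin{equation*}
\liminf_{h\to 0}\frac{1}{h^{2\alpha}}\int_{\widetilde{\Omega}_h} W\big(\big(\Psi^{(h)}\big)^{-1}(x),\nabla\tilde{y}^{(h)}(x)\big)\,dx \geq \frac{1}{2}\int_\Omega Q_3(s,\xi,\zeta,\tilde{G})\,ds\,d\xi\,d\zeta,
\end{equation*}
where $\tilde{G}$ denotes the symmetric part of $G$.

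The identification of $\tilde G$ proceeds as in Step~2 of Theorem \ref{scithm}. Setting $\tilde\beta^{(h)}:=h^{-\alpha}(Y^{(h)} - h\xi R^{(h)}\nu_2 - h\zeta R^{(h)}\nu_3)$ and $\beta^{(h)}:=\tilde\beta^{(h)} - \tilde\beta_D^{(h)}$, Poincar\'e's inequality on $D$ gives $\beta^{(h)}\rightharpoonup\beta$ weakly in $L^2(\Omega;\mathbb{R}^3)$ with $G\nu_2 = \partial_\xi\beta$ and $G\nu_3 = \partial_\zeta\beta$. The first column of (\ref{GG}) yields, exactly as in the derivation of (\ref{bis}),
\begin{equation*}
G\,\tau = A'(\xi\,\nu_2 + \zeta\,\nu_3) + g,
\end{equation*}
where $g\in L^2((0,L);\mathbb{R}^3)$ is the weak $L^2$-limit of $h^{1-\alpha}\int_D(\partial_s Y^{(h)} - R^{(h)}\tau)\,d\xi\,d\zeta$.

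The decisive new step is the identification of $g\cdot\tau$ through $u'$. Using (\ref{psi}) and the normalization (\ref{dom2}), a direct computation gives
\begin{equation*}
\frac{1}{h^{\alpha-1}}\int_D (\partial_s Y^{(h)} - R^{(h)}\tau)\cdot\tau\,d\xi\,d\zeta = \bigl(u^{(h)}\bigr)' - \frac{1}{h^{\alpha-1}}\,\hbox{sym}\big(R^{(h)} - Id\big)\tau\cdot\tau.
\end{equation*}
For $\alpha > 3$, property (f) of Theorem \ref{compactness} together with $\alpha - 1 < 2(\alpha-2)$ shows that the second term vanishes uniformly, so $g\cdot\tau = u'$. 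In the critical case $\alpha = 3$, property (f) gives uniform convergence of the correction to $-\tfrac{1}{2}(A^2\tau)\cdot\tau = \tfrac{1}{2}|A\tau|^2 = \tfrac{1}{2}((v'\cdot\nu_2)^2 + (v'\cdot\nu_3)^2)$, so $g\cdot\tau = u' + \tfrac{1}{2}((v'\cdot\nu_2)^2 + (v'\cdot\nu_3)^2)$. In both cases this matches precisely the first argument of $Q^0$ appearing in (\ref{defI0}).

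Finally, the change of variables $\varphi := \hat\beta + \xi(\hat g\cdot e_2)e_1 + \zeta(\hat g\cdot e_3)e_1$ with $\hat\beta := R_0^T\beta$, $\hat g := R_0^T g$, together with the identity (\ref{vifor}), leads to exactly the representation (\ref{vif3}) with $t = g\cdot\tau$. Inserting this into the lower bound and invoking the definition (\ref{defQ0}) of $Q^0$ with $t = g\cdot\tau$ — crucially, \emph{without} minimizing over $t$, since $t$ is now determined by $u'$ (plus the quadratic correction when $\alpha=3$) — yields (\ref{fine2}) with $I_\alpha(u,v,w)$ as in (\ref{defI0}). The main obstacle is the critical case $\alpha = 3$: here property (f), which is a purely geometric consequence of $R^{(h)}\in SO(3)$, provides exactly the right scaling to produce the characteristic nonlinear term $\tfrac{1}{2}((v'\cdot\nu_2)^2 + (v'\cdot\nu_3)^2)$ inside $Q^0$, whereas for $\alpha > 3$ this contribution disappears and the functional becomes fully linear in $(u,v,w)$.
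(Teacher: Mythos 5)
Your proposal is correct and follows the paper's own proof essentially verbatim: both reuse Steps 1--2 of Theorem~\ref{scithm} to obtain the lower bound $\frac{1}{2}\int_\Omega Q_3(\tilde G)$ together with the representation (\ref{vif3}), and then identify $g\cdot\tau$ via the relation $(u^{(h)})' - h^{1-\alpha}\,\hbox{sym}\big(R^{(h)}-Id\big)\tau\cdot\tau$ and property (f) of Theorem~\ref{compactness}, distinguishing $\alpha=3$ (where $-\tfrac12(A^2\tau)\cdot\tau=\tfrac12((v'\cdot\nu_2)^2+(v'\cdot\nu_3)^2)$ survives) from $\alpha>3$ (where it vanishes since $\alpha-1<2(\alpha-2)$). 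Your final remark that $Q^0$ is invoked without minimizing over $t$, precisely because $t$ is now determined, correctly pinpoints the only structural difference from Theorem~\ref{scithm}.
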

\begin{proof}
We can repeat exactly what we did in the first two steps of the proof of Theorem \ref{scithm}. At this point, 
let us distinguish the cases $\alpha = 3$ and $\alpha >3$. 

\noindent
\textit{Case $\alpha = 3$.}

\noindent
Starting from (\ref{identig}), we can identify the tangential component of $g$. Indeed,
observe that, if we write
\begin{equation*}
\int_{D}\big(\partial_s Y^{(h)} - R^{(h)}\tau\big)\cdot \tau\,d\xi\,d\zeta = \,
\int_{D}\partial_s \big(Y^{(h)} - \Psi^{(h)}\big)\cdot\tau\,d\xi\,d\zeta -
\int_{D}\big(R^{(h)}\tau - \tau\big)\cdot\tau\,d\xi\,d\zeta,
\end{equation*}
 by the definition of $\big(u^{(h)}\big)$ we get
\begin{equation}\label{disting}
\frac{1}{h^{\alpha - 1}}\int_{D}\big(\partial_s Y^{(h)} - R^{(h)}\tau\big)\cdot \tau\,d\xi\,d\zeta = \, \big(u^{(h)}\big)' - \frac{1}{h^{\alpha - 1}}\int_{D}\big(R^{(h)}\tau - \,\tau\big)\cdot\tau\,d\xi\,d\zeta.
\end{equation}
If we let $h\rightarrow 0$ in (\ref{disting}) we obtain, from (\ref{identig}) and in virtue of 
property (f) in Theorem \ref{compactness},
\begin{equation}\label{use}
g\cdot \tau = u' - \frac{1}{2}\,\big(A^2\,\tau\big)\cdot \tau. 
\end{equation}
Notice that, using the explicit expression of $A$ given in (\ref{A}), we have
\begin{equation}\label{useful}
\frac{1}{2}\,\big(A^2\,\tau\big)\cdot \tau = - \,\frac{1}{2}\,\big((v'\cdot\nu_2)^2 + (v'\cdot\nu_3)^2\big).
\end{equation}
Now, by (\ref{vif}),(\ref{use}) and (\ref{useful}), we can write the expression 
of $\tilde{G}$ in this case, which turns to be
\begin{equation}\label{vif4}
\tilde{G} = R_0 \,\hbox{sym}\Bigg(\Big(B' + 2\,\hbox{skw}\big(R_0^T R'_0 B\big)\Big)\Bigg(\begin{array}{c}
0\\
\xi\\
\zeta
\end{array}\Bigg) + \bigg(u' + \frac{1}{2}\big((v'\cdot\nu_2)^2 + 
(v'\cdot\nu_3)^2\big) \bigg)\,e_1 \bigg|\,
\partial_{\xi} \varphi\,\bigg|\,
\partial_{\zeta} \varphi
 \Bigg)\,R_0^T . 
\end{equation}
Since $\varphi(s,\cdot)\in W^{1,2}(D;\mathbb{R}^{3})$ for a.e.
$s\in (0, L)$, using the definition of $Q^{0}$ the bound (\ref{sc})
becomes, as we claimed,
\begin{equation*}
\liminf_{h\rightarrow 0}\frac{1}{h^6}\int_{\widetilde{\Omega}_{h}}
W\Big(\big(\Psi^{(h)}\big)^{-1} (x),\nabla \tilde{y}^{(h)}(x)\Big)
dx \geq I_3(u,v,w),
\end{equation*}
with $I_3$ defined in (\ref{defI0}).

\textit{Case $\alpha > 3$.}

\noindent
If we let $h\rightarrow 0$ in (\ref{disting}) we obtain from (\ref{identig}) and in virtue of 
property (f) in Theorem \ref{compactness},
\begin{equation*}
g\cdot \tau = u'. 
\end{equation*} 
In fact, being $\alpha > 3$, it turns out that $\alpha - 1 < 2(\alpha - 2)$, so
\begin{equation*}
\textnormal{sym}\big(R^{(h)} - Id\big)/h^{\alpha - 1} \rightarrow 0 \quad \mbox{uniformly on } (0, L).
\end{equation*}
Now we can write down the expression of $\tilde{G}$ for $\alpha>3$, that is
\begin{equation}\label{sempl1i}
\tilde{G} = R_0 \,\hbox{sym}\Bigg(\Big(B' + 2\,\hbox{skw}\big(R_0^T R'_0 B\big)\Big)\Bigg(\begin{array}{c}
0\\
\xi\\
\zeta
\end{array}\Bigg) + u'\,e_1 \bigg|\,
\partial_{\xi} \varphi\,\bigg|\,
\partial_{\zeta} \varphi
 \Bigg)\,R_0^T . 
\end{equation}
Since $\varphi(s,\cdot)\in W^{1,2}(D;\mathbb{R}^{3})$ for a.e.
$s\in (0, L)$, using (\ref{sc}), (\ref{sempl1i}) and the definition of $Q^{0}$, we obtain exactly (\ref{fine2}), as we claimed.
\end{proof}


\section{Construction of the recovery sequences} 
In this section we show that the lower bounds obtained in Theorems \ref{scithm} and \ref{scithm2} are optimal. 
Also in this case the scalings $2< \alpha < 3$ and $\alpha\geq 3$ will be 
treated separately. However, we will first consider the higher scalings $h^{\alpha}$ with $\alpha \geq 3$, 
since as in  in \cite{FJM06}, the case $2< \alpha < 3$ turns out to be very delicate and requires a more 
detailed analysis.


\subsection{Higher scaling.} Let us consider the higher scalings of the energy, that is the case $\alpha \geq 3$.
\begin{thm}[Case $\alpha\geq 3$]\label{bfa}
For every $u, w \in W^{1,2}(0,L)$ and $v\in
W^{2,2}((0,L);\mathbb{R}^3)$ such that $v'\cdot \tau = 0$ there
exists a sequence $\big(\check{y}^{(h)}\big)\subset
W^{1,2}(\tilde{\Omega}_{h}; \mathbb{M}^{3\times 3})$  such that, setting 
$Y^{(h)}:= \check{y}^{(h)}\circ \Psi^{(h)}$, we have
\vspace{.15cm}
\begin{itemize}
\item[(i)] $\big(\nabla_{h}
Y^{(h)}\big(\nabla_{h}\Psi^{(h)}\big)^{-1}- Id\big)/h^{\alpha - 2}
\rightarrow A$ \, strongly in $L^{2}(\Omega;\mathbb{M}^{3\times
3})$;
\vspace{.2cm}
\item[(ii)] $v^{(h)}\rightarrow v$ \, strongly in $W^{1,2}((0, L);\mathbb{R}^{3})$;
\vspace{.2cm}
\item[(iii)] \,$w^{(h)}\rightharpoonup w$\, weakly in $W^{1,2}(0, L)$;
\vspace{.2cm}
\item[(iv)] $u^{(h)}\rightharpoonup u$\, weakly in $W^{1,2}(0, L)$,
\end{itemize}
where $A$, $v^{(h)}$, $u^{(h)}$, and $w^{(h)}$ are
defined as in (\ref{A}), (\ref{vh}), (\ref{uh}) and (\ref{wh}). Moreover,
\begin{equation}\label{limsup}
\limsup_{h\rightarrow 0}\frac{1}{h^{2\,\alpha}}\int_{\widetilde{\Omega}_{h}}
W\Big(\big(\Psi^{(h)}\big)^{-1} (x),\nabla
\check{y}^{(h)}(x)\Big) dx \leq I_{\alpha}(u, v, w),
\end{equation}
where $I_{\alpha}$ is defined in (\ref{defI0}).
\end{thm}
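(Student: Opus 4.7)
My plan is to combine an explicit ansatz with a density argument. First, by an approximation step I would reduce to the case where $u, w \in C^\infty([0,L])$, $v \in C^\infty([0,L]; \mathbb{R}^3)$ with $v' \cdot \tau = 0$, and where the optimal warping $\varphi(s, \cdot) \in \mathcal{V}$ realizing the minimum in \eqref{defQ0} depends smoothly on $s$. The linear dependence of $\varphi$ on its data $(t, F)$ noted in Remark \ref{rk}, together with the continuity of $Q^0$ and a diagonal argument at the end, justifies this reduction.

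With smooth data in hand, I would introduce the rotation field $R^{(h)}(s) := \exp(h^{\alpha-2} A(s))$, with $A$ the skew-symmetric matrix \eqref{A}, so that $R^{(h)} = Id + h^{\alpha-2} A + \frac{1}{2} h^{2(\alpha-2)} A^2 + O(h^{3(\alpha-2)})$ uniformly, and take the recovery sequence to be $\check{y}^{(h)} := Y^{(h)} \circ (\Psi^{(h)})^{-1}$ with
\begin{equation*}
Y^{(h)}(s,\xi,\zeta) := \int_0^s R^{(h)}(\sigma)\tau(\sigma)\,d\sigma + h R^{(h)}(s)\big(\xi\nu_2(s) + \zeta\nu_3(s)\big) + h^{\alpha-1}\big(u(s)\tau(s) + w(s)(\xi\nu_3(s) - \zeta\nu_2(s))\big) + h^\alpha R_0(s)\varphi(s,\xi,\zeta).
\end{equation*}
The first two terms amount to rotating each cross-section rigidly by $R^{(h)}(s)$ along the deformed mid-curve; the next two encode the axial stretch $u$ and the twist $w$; the last is the warping realizing the minimum in $Q^0$.

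Using \eqref{psi} and \eqref{invA}, direct computation gives
\begin{equation*}
\nabla_h Y^{(h)} (\nabla_h \Psi^{(h)})^{-1} = R^{(h)} + h^{\alpha-1} R^{(h)} G^{(h)} + o(h^{\alpha-1}) \quad \text{in } L^\infty(\Omega),
\end{equation*}
where $G^{(h)}$ converges uniformly to a matrix $G$ whose symmetric part, after conjugation by $R_0$, coincides with the expression \eqref{vif4} for $\alpha = 3$ and with \eqref{sempl1i} for $\alpha > 3$ for the chosen $\varphi$. Properties (i)--(iv) then follow by inspecting the definitions of $v^{(h)}$, $w^{(h)}$ and $u^{(h)}$. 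For the energy estimate I would use the smoothness of the ansatz to keep $\nabla\check{y}^{(h)}$ within the $\delta$-neighbourhood of $SO(3)$ in $L^\infty$, so that frame indifference and a Taylor expansion of $W$ around $R^{(h)}$ (legitimate by hypothesis (ii)) yield
\begin{equation*}
\frac{1}{h^{2\alpha-2}} W\big(z, \nabla\check{y}^{(h)}(x)\big) \longrightarrow \frac{1}{2} Q_3(z, \tilde{G}) \quad \text{in } L^1(\Omega),
\end{equation*}
and integrating produces the matching bound $\frac{1}{2}\int_\Omega Q_3(z, \tilde{G})\,dz = I_\alpha(u,v,w)$ by \eqref{defQ0} and the optimality of $\varphi$. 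A diagonal argument using the density reduction then finishes the proof.

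The main obstacle is the critical case $\alpha = 3$, where $2(\alpha-2) = \alpha - 1$: the quadratic term $\frac{1}{2} h^{2(\alpha-2)} A^2$ in the expansion of $R^{(h)}$ contributes at the same order as the other corrections, and its contraction against $\tau$ must produce precisely the nonlinear tangential contribution $\frac{1}{2}((v'\cdot\nu_2)^2 + (v'\cdot\nu_3)^2)$ that, added to $u'$ from the axial stretch, matches the first argument of $Q^0$ in \eqref{defI0}. Careful bookkeeping of this quadratic correction—and the verification that it is exactly what cancels the analogous term appearing in the identification of $g\cdot\tau$ in \eqref{use}—is the delicate point; for $\alpha > 3$ the same construction works and the quadratic correction drops out automatically since $\alpha - 1 < 2(\alpha-2)$.
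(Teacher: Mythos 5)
Your plan replaces the paper's additive ansatz (\ref{defor}) with a rotation-by-$\exp(h^{\alpha-2}A)$ construction — a reasonable idea, and indeed the one the paper itself uses in Theorem~\ref{Gamcon3} for the intermediate scaling — but the specific $Y^{(h)}$ you write down does not satisfy (iii) or, when $\alpha=3$, (iv). First, since $A\nu_2=-(v'\cdot\nu_2)\tau+w\,\nu_3$ and $A\nu_3=-(v'\cdot\nu_3)\tau-w\,\nu_2$, the term $hR^{(h)}(s)(\xi\nu_2+\zeta\nu_3)$ already contributes $h^{\alpha-1}\,w\,(\xi\nu_3-\zeta\nu_2)+o(h^{\alpha-1})$ to $Y^{(h)}-\Psi^{(h)}$: the torsion is encoded in the matrix $A$ of (\ref{A}). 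Adding the explicit $h^{\alpha-1}w(\xi\nu_3-\zeta\nu_2)$ on top double-counts it, and plugging into (\ref{wh}) gives $w^{(h)}\rightharpoonup 2w$, not $w$. In the paper's ansatz the twist appears once only, inside $\xi A\nu_2+\zeta A\nu_3$.

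Second, for the critical case $\alpha=3$ the $A^2$ term does not land where you expect it to. Because $R^{(h)}$ is an exact rotation, $A^2$ is \emph{absent} from the linearized strain $\tilde G$ (one has $(R^{(h)})^TR^{(h)}=Id$ identically, so only the order-$h^{\alpha-1}$ perturbation contributes to $(Z^{(h)})^TZ^{(h)}$); instead, it surfaces in the limit of $u^{(h)}$. Concretely, averaging $\partial_s(Y^{(h)}-\Psi^{(h)})\cdot\tau$ over $D$ and using (\ref{dom2}) together with $A\tau\cdot\tau=0$ gives
\begin{equation*}
\int_D\partial_s\big(Y^{(h)}-\Psi^{(h)}\big)\cdot\tau\,d\xi\,d\zeta
=\tfrac12\,h^{2(\alpha-2)}\big(A^2\tau\cdot\tau\big)+h^{\alpha-1}u'+o(h^{\alpha-1}),
\end{equation*}
and for $\alpha=3$ both leading terms are of order $h^{2}$, so $(u^{(h)})'\to u'+\tfrac12 A^2\tau\cdot\tau=u'-\tfrac12\big((v'\cdot\nu_2)^2+(v'\cdot\nu_3)^2\big)\neq u'$, contradicting (iv). Correspondingly, the first argument of $Q^0$ produced by your ansatz is $u'$ rather than $u'+\tfrac12\big((v'\cdot\nu_2)^2+(v'\cdot\nu_3)^2\big)$, and your claim that the quadratic contribution from $A^2$ ``adds to $u'$'' has both the location and the sign wrong. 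The construction can be salvaged by dropping the explicit $w$-term and, for $\alpha=3$, replacing $u$ in the ansatz by $u+\tfrac12\int_0^s\big((v'\cdot\nu_2)^2+(v'\cdot\nu_3)^2\big)\,d\sigma$ (and replacing $u\tau$ by $u\kappa^{(h)}$ to keep the $u\tau'$ contribution skew, as in (\ref{defk})--(\ref{antisim})), but that correction is precisely the content of the auxiliary functions $\gamma_2,\gamma_3$ in (\ref{defg2})--(\ref{chiari}) and has to be carried out explicitly rather than deferred to ``careful bookkeeping.''
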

\begin{proof}
As first step we assume to deal with more regular functions; more precisely, we require 
that $u, w\in C^{1}[0,L]$ and $v\in C^{2}([0, L];\mathbb{R}^3)$. 

As in \cite{MGMM04}, let us define the functions  $\gamma_2,\gamma_3, \kappa^{(h)}: [0,
L]\rightarrow\mathbb{R}^3$ in the following way:
\begin{align}
\gamma_2(s):=&\,2\,w\,(v'\cdot\nu_3)\,e_1 + \big(w^2 + (v'\cdot\nu_2)^2\big)\,e_2 + (v'\cdot\nu_2)\,(v'\cdot\nu_3)\,e_3,\label{defg2}\\
\gamma_3(s):=&\,- 2\,w\,(v'\cdot\nu_2)\,e_1 + (v'\cdot\nu_2)\,(v'\cdot\nu_3)\,e_2  + \big(w^2 + (v'\cdot\nu_3)^2\big)\,e_3,\label{defg3}\\
\kappa^{(h)}(s,\xi,\zeta):=&\,(1 - h\,\xi\,k_2 - h\,\zeta\,k_3)\,\tau\label{defk},
\end{align}
where $k_2$ and $k_3$ are the scalar functions defined in (\ref{curv}). 
Finally, let $\varphi \in C^1(\bar{\Omega};\mathbb{R}^3)$ and let $\beta: \Omega \rightarrow \mathbb{R}^3$ be
\begin{equation}\label{chiari}
\beta(s,\xi,\zeta) := \left\{
\begin{array}{ll}
\vspace{.1cm}
\displaystyle R_0(s)\varphi(s,\xi,\zeta) -
\frac{1}{2}\,\xi\,R_{0}(s)\gamma_2(s) - \frac{1}{2}\,\zeta\,R_0(s)\gamma_3(s) & \mbox{if } \,\, \alpha = 3,\\
\displaystyle R_0(s)\varphi(s,\xi,\zeta)  & \mbox{if }\,\, \alpha > 3.
\end{array}
\right.
\end{equation}

For every $h > 0$ consider the function $Y^{(h)}:
\Omega\rightarrow \mathbb{R}^{3}$ defined as
\begin{equation}\label{defor}
Y^{(h)}=\, \Psi^{(h)} + h^{\alpha - 2}\,v + h^{\alpha - 1}\, u\,\kappa^{(h)} + h^{\alpha - 1} \xi\,A\,\nu_2  + \, h^{\alpha - 1} \zeta\,A\,\nu_3  
+ h^\alpha \beta, 
\end{equation}
where the matrix $A$ is defined as in (\ref{A}). 

Let us compute the scaled gradient of the deformation $Y^{(h)}$. First of all notice that 
$\nabla_{h}\kappa^{(h)} = \big(\tau'\,\big|\,- \big(\tau'\cdot \nu_2\big)\,\tau\,\big|\,- 
\big(\tau'\cdot \nu_3\big)\,\tau\big) + O(h)$, and that 
\begin{equation}\label{antisim}
\Big(\tau'\,\big|\,- \big(\tau'\cdot \nu_2\big)\,\tau\,\big|\,- \big(\tau'\cdot \nu_3\big)\,\tau\Big)
= \big(\tau'\otimes\tau - \tau\otimes\tau'\big)\,R_0.
\end{equation}

Hence, the scaled gradient turns to be 
\begin{align}\label{gradYh}
\nabla_{h}Y^{(h)} &=\, \nabla_{h}\Psi^{(h)} + h^{\alpha-2}\,A\,R_0 + h^{\alpha-1}\,\Bigg(\big(A\,R_0\big)'\Bigg(\begin{array}{c}
0\\
\xi\\
\zeta
\end{array}\Bigg) + \,u'\,\tau \,\bigg|\,
\partial_{\xi} \beta\,\bigg|\,
\partial_{\zeta} \beta
 \Bigg) \,+ \nonumber\\  
& + h^{\alpha-1}\,u\,\big(\tau'\otimes\tau - \tau\otimes\tau'\big)\,R_0 +  O(h^\alpha). 
\end{align}

So we have that, by (\ref{invA}),
\begin{equation*}
\frac{1}{h^{\alpha - 2}}\Big(\nabla_{h} Y^{(h)}\big(\nabla_{h}\Psi^{(h)}\big)^{-1} - Id\Big) = A + O(h)
\end{equation*}
and this proves (i). Now remark that, if we
define $v^{(h)}$ as in (\ref{vh}), we have, using (\ref{dom2}),
\begin{equation*} 
v^{(h)} = 
v + h \,u\,\tau + h^2 \int_{D}\beta\,d\xi\,d\zeta,
\end{equation*}
so also (ii) follows. For the sequence $w^{(h)}$ defined as in (\ref{wh}) we get, by (\ref{dom1}) and (\ref{dom2}),
\begin{align*}
w^{(h)}  
&=\frac{1}{\mu(D)}\int_{D}(\xi\,A\,\nu_2 + \zeta\,A\,\nu_3 + h\,\beta) 
\cdot (\xi\,\nu_{3} - \zeta\,\nu_{2})\,d\xi\,d\zeta\\
&= (A\,\nu_2)\cdot \nu_3 + O(h),
\end{align*}
which is exactly $w$, up to a perturbation of order $h$. This proves (iii).

Moreover, if we define $u^{(h)}$ as in (\ref{uh}) we have
\begin{align}\label{compu}
\big(u^{(h)}\big)' = \frac{1}{h}\int_D\big(v'\cdot\tau + h\,u' + h^2 \partial_s\beta \cdot \tau\big)\,d\xi\,d\zeta 
=\,u' + h\int_D \partial_s\beta \cdot \tau\,d\xi\,d\zeta
\end{align}
hence the convergence property in (iv) is also proved.

Once all these properties are satisfied, we can show (\ref{limsup}). Using (\ref{invA}) and (\ref{gradYh})
we have
\begin{align}\label{YPsi}
Z^{(h)}:=\,\nabla_{h} Y^{(h)}\big(\nabla_{h}\Psi^{(h)}\big)^{-1} &=\, Id + h^{\alpha-2}\Bigg(A + h\,u'\,\tau\otimes\tau +
h\,\Bigg(A'\,R_0\Bigg(\begin{array}{c}
0\\
\xi\\
\zeta
\end{array}\Bigg) \bigg|\,
\partial_{\xi} \beta\,\bigg|\,
\partial_{\zeta} \beta
 \Bigg)\,R_0^T\Bigg) \nonumber\\
&+\, h^{\alpha-1}\,u\,\big(\tau'\otimes\tau - \tau\otimes\tau'\big) + O(h^\alpha).
\end{align}
Using the identity $(Id + B^T)(Id + B) = Id + 2\,\mbox{sym}\,B +
B^T B$, we obtain for the nonlinear
strain
\begin{align}\label{fundamental}
\big(Z^{(h)}\big)^T Z^{(h)} =& \, Id + \,2\,h^{\alpha-1}\,\mbox{sym}\, \Bigg( \Bigg(A'\,R_0\Bigg(\begin{array}{c}
0\\
\xi\\
\zeta
\end{array}\Bigg) \bigg|\,
\partial_{\xi} \beta\,\bigg|\,
\partial_{\zeta} \beta
 \Bigg)\,R_0^T \Bigg)
+ \, 2\,h^{\alpha-1}\,u'\,\tau\otimes\tau \nonumber\\
+& \,h^{2(\alpha-2)}A^T A \,+\, o\big(h^{2(\alpha-2)}\big),
\end{align}
where $o(h^{\gamma})/h^\gamma \rightarrow 0$ uniformly as $h\rightarrow 0$.

Now, let us distinguish the cases $\alpha = 3$ and $\alpha > 3$.

\noindent
\textit{Case $\alpha = 3$.}

Notice that if we specify $\alpha = 3$ in (\ref{fundamental}), all the terms
are of the same order with respect to $h$, that is of order $2$.
Taking the square root we have that
\begin{equation}\label{root}
\left[\big(Z^{(h)}\big)^T Z^{(h)}\right]^{1/2} = Id + h^2
\tilde{G} + O(h^3),
\end{equation}
where
\begin{equation*}
\tilde{G} := u'\tau\otimes\tau + \textnormal{sym}\,\Bigg(
\Bigg(A'\,R_0\Bigg(\begin{array}{c}
0\\
\xi\\
\zeta
\end{array}\Bigg) \bigg|\,
\partial_{\xi} \beta\,\bigg|\,
\partial_{\zeta} \beta
 \Bigg)\,R_0^T \Bigg) -\frac{A^2}{2}.
\end{equation*}
In order to write $\tilde{G}$ in a more useful way, notice that, by (\ref{A}),
\begin{align}\label{sempl4}
\tilde{G} =&\, R_0\Bigg[ \textnormal{sym}\,\Bigg( R_0^T
A'\,R_0\Bigg(\begin{array}{c}
0\\
\xi\\
\zeta
\end{array}\Bigg) + \bigg(u' + \frac{1}{2}\big((v'\cdot\nu_2)^2 + (v'\cdot\nu_3)^2\big) \bigg)\,e_1 \bigg|\,
\partial_{\xi} (R_0^T\beta)\,\bigg|\,
\partial_{\zeta} (R_0^T\beta)
\Bigg)\Bigg]\,R_{0}^T  \nonumber\\
&+\,\frac{1}{2}\,R_0 \left(\begin{array}{c} \vspace{.15cm}
0\\
\vspace{.15cm}
w\,(v'\cdot\nu_3)\\
-w\,(v'\cdot\nu_2)
\end{array}
\begin{array}{c}
\vspace{.15cm}
w\,(v'\cdot\nu_3)\\
\vspace{.15cm}
w^2 + (v'\cdot\nu_2)^2\\
(v'\cdot\nu_2)\,(v'\cdot\nu_3)
\end{array}
\begin{array}{c}
\vspace{.15cm}
-w\,(v'\cdot\nu_2)\\
\vspace{.15cm}
(v'\cdot\nu_2)\,(v'\cdot\nu_3)\\
w^2 + (v'\cdot\nu_3)^2
\end{array} \right)\,R_0^T.
\end{align}
We can rewrite (\ref{sempl4}) in terms of $\varphi$ 
and $B$, using (\ref{vifor}) and (\ref{chiari}), as
\begin{equation*}
\tilde{G} = \textnormal{sym}\Bigg[R_0\,\Bigg(\Big(B' + 2\,\hbox{skw}\big(R_0^T R'_0 B\big)\Big)
\Bigg(\begin{array}{c}
0\\
\xi\\
\zeta
\end{array}\Bigg) + \bigg(u' + \frac{1}{2}\big((v'\cdot\nu_2)^2 + (v'\cdot\nu_3)^2\big)\bigg)\,e_1 \,\bigg|
\,\partial_{\xi} \varphi\,\bigg| \,\partial_{\zeta}\varphi
\Bigg)\,R_{0}^T\Bigg].
\end{equation*}
From the frame-indifference of the energy density $W$, since
$\mbox{det}\,\big(\nabla_{h}
Y^{(h)}\big)\big(\nabla_{h}\Psi^{(h)}\big)^{-1} > 0$ for
sufficiently small $h$, we have
\begin{equation*}
W\big(s,\xi,\zeta,Z^{(h)}\big) =
W\Big(s,\xi,\zeta,\big[\big(Z^{(h)}\big)^T Z^{(h)}\big]^{1/2}\Big).
\end{equation*}
Thus, by (\ref{root}) and Taylor expansion, we obtain
\begin{equation*}
\frac{1}{h^4}\,W\big(s,\xi,\zeta,Z^{(h)}\big) \rightarrow
\frac{1}{2} \,Q_{3}(s,\xi,\zeta,\tilde{G}) \,\,
\mbox{a.e.,}
\end{equation*}
and
\begin{equation*}
\frac{1}{h^4}\,W\big(s,\xi,\zeta,Z^{(h)}\big) \leq
\frac{1}{2}\,\gamma\,|\,\tilde{G}\,|^2 + C\,h \leq
C\,\big(|\,B\,|^4 + |\,B'|^2 + |\,\partial_\xi\varphi\,|^2 +
|\,\partial_\zeta\varphi\,|^2  + |\,u'|^2 + 1\big)\in L^1(\Omega).
\end{equation*}
Set $\check{y}^{(h)}:= Y^{(h)}\circ \big(\Psi^{(h)}\big)^{-1}$; by the dominated convergence theorem we get the following
equality:
\begin{equation}\label{quasifin}
\limsup_{h\rightarrow 0}\frac{1}{h^6}\int_{\widetilde{\Omega}_{h}}
W\Big(\big(\Psi^{(h)}\big)^{-1} (x),\nabla \check{y}^{(h)}(x)\Big)
dx =
\frac{1}{2}\int_{\Omega}Q_3(s,\xi,\zeta,\tilde{G})\,ds\,d\xi\,d\zeta.
\end{equation}

Consider the general case. Let $u, w\in W^{1,2}(0,L)$ and $v\in W^{2,2}((0,L);\mathbb{R}^3)$.
Let $\varphi(s,\cdot)\in \mathcal{V}$ be the solution of the minimum problem (\ref{defQ0}) defining $Q^0$, with 
$t:= u' + \frac{1}{2}\,\big((v'\cdot \nu_2)^2 + (v'\cdot \nu_3)^2\big)$ and $F:= B' + 2\,\hbox{skw}\big(R_0^T R'_0 B\big)$, where $B$ is introduced in (\ref{An}). As we have already noticed in Remark \ref{rk}, $\varphi$ and its derivatives with respect to $\xi$ and $\zeta$ belong to $L^2(\Omega;\mathbb{R}^3)$. 

Now, we can smoothly approximate $u, w$ in the strong topology of $W^{1,2}$, $v$ in the strong topology of 
$W^{2,2}$, and $\varphi$, $\partial_{\xi}\varphi$ and $\partial_{\zeta}\varphi$ in the strong topology 
of $L^2$. Since the approximating sequences satisfy (\ref{quasifin}), and the right-hand side of (\ref{quasifin}) is continuous with respect to the mentioned topologies, we conclude that (\ref{quasifin}) holds also in the general case.
Hence, using the minimality of $\varphi$, we obtain (\ref{limsup}).

\textit{Case $\alpha > 3$.}

\noindent
In this case, in the expression (\ref{fundamental}), the term of order $2(\alpha -2)$ in $h$ 
can be neglected, since $2(\alpha -2) > \alpha -1$ when $\alpha>3$. Hence we can write
\begin{align*}
\big(Z^{(h)}\big)^T Z^{(h)} = Id + \,2\,h^{\alpha-1}\,\mbox{sym}\, \Bigg( \Bigg(A'\,R_0\Bigg(\begin{array}{c}
0\\
\xi\\
\zeta
\end{array}\Bigg) \bigg|\,
\partial_{\xi} \beta\,\bigg|\,
\partial_{\zeta} \beta
 \Bigg)\,R_0^T \Bigg)
+ \, 2\,h^{\alpha-1}\,u'\,\tau\otimes\tau + o(h^{\alpha-1}),
\end{align*}
where $o(h^{\gamma})/h^\gamma \rightarrow 0$ uniformly as $h\rightarrow 0$.
Taking the square root we have that
\begin{equation}\label{rooti}
\left[\big(Z^{(h)}\big)^T Z^{(h)}\right]^{1/2} = Id + h^{\alpha-1}
\tilde{G} + o\big(h^{\alpha-1}\big),
\end{equation}
where
\begin{equation*}
\tilde{G} := u'\tau\otimes\tau + \textnormal{sym}\,\Bigg(
\Bigg(A'\,R_0\Bigg(\begin{array}{c}
0\\
\xi\\
\zeta
\end{array}\Bigg) \bigg|\,
\partial_{\xi} \beta\,\bigg|\,
\partial_{\zeta} \beta
 \Bigg)\,R_0^T \Bigg).
\end{equation*}
We can rewrite $\tilde{G}$ in terms of $\varphi$ and $B$ as
\begin{equation*}
\tilde{G} = \textnormal{sym}\Bigg[R_0\,\Bigg(\Big(B' + 2\,\hbox{skw}\big(R_0^T R'_0 B\big)\Big)
\Bigg(\begin{array}{c}
0\\
\xi\\
\zeta
\end{array}\Bigg) + u'\,e_1 \,\bigg|
\,\partial_{\xi} \varphi\,\bigg| \,\partial_{\zeta}\,\varphi
\Bigg)\,R_{0}^T\Bigg].
\end{equation*}
From the frame-indifference of the energy density $W$, since
$\mbox{det}\,\big(\nabla_{h}
Y^{(h)}\big)\big(\nabla_{h}\Psi^{(h)}\big)^{-1} > 0$ for
sufficiently small $h$, we have
\begin{equation*}
W\big(s,\xi,\zeta,Z^{(h)}\big) =
W\Big(s,\xi,\zeta,\big[\big(Z^{(h)}\big)^T Z^{(h)}\big]^{1/2}\Big);
\end{equation*}
thus, by (\ref{rooti}) and Taylor expansion, we obtain
\begin{equation*}
\frac{1}{h^{2\,\alpha -2}}\,W\big(s,\xi,\zeta,Z^{(h)}\big) \rightarrow
\frac{1}{2} \,Q_{3}(s,\xi,\zeta,\tilde{G}) \,\,
\mbox{a.e.,}
\end{equation*}
and
\begin{equation*}
\frac{1}{h^{2\,\alpha -2}}\,W\big(s,\xi,\zeta,Z^{(h)}\big) \leq
\frac{1}{2}\,\gamma\,|\,\tilde{G}\,|^2 + C\,h \leq
C\,\big(|\,B'|^2 + |\,B\,|^2 + |\,\partial_\xi\varphi\,|^2 +
|\,\partial_\zeta \varphi\,|^2  + |\,u'|^2 + 1\big)\in L^1(\Omega).
\end{equation*}
Set $\check{y}^{(h)}:= Y^{(h)}\circ \big(\Psi^{(h)}\big)^{-1}$; by the dominated convergence theorem we get the following
equality:
\begin{equation}\label{fin}
\limsup_{h\rightarrow 0}\frac{1}{h^{2\,\alpha}}\int_{\widetilde{\Omega}_{h}}
W\Big(\big(\Psi^{(h)}\big)^{-1} (x),\nabla \check{y}^{(h)}(x)\Big)
dx =
\frac{1}{2}\int_{\Omega}Q_3(s,\xi,\zeta,\tilde{G})\,ds\,d\xi\,d\zeta.
\end{equation}

Consider the general case. Let $u, w\in W^{1,2}(0,L)$ and $v\in W^{2,2}((0,L);\mathbb{R}^3)$.
Let $\varphi(s,\cdot)\in \mathcal{V}$ be the solution of the minimum problem (\ref{defQ0}) defining $Q^0$, with 
$t:= u'$ and $F:= B' + 2\,\hbox{skw}\big(R_0^T R'_0 B\big)$, where $B$ is defined as in (\ref{An}). 
It is easy to show that (\ref{fin}) remains true, following the same approximation arguments used in the previous step. Hence, using the minimality of $\varphi$, we obtain (\ref{limsup}).
\end{proof}


\subsection{Intermediate scaling}
We now consider the scalings $h^{\alpha}$ with $2< \alpha < 3$. As in \cite{FJM06}, 
this case turns out to be very delicate and requires a detailed analysis.

\begin{thm}[Case $2<\alpha<3$]\label{Gamcon3}
For every $w \in W^{1,2}(0,L)$ and $v\in W^{2,2}((0,L);\mathbb{R}^3)$ such that $v'\cdot \tau = 0$ 
there exists a sequence $\big(\check{y}^{(h)}\big)\subset W^{1,2}(\tilde{\Omega}_{h}; \mathbb{M}^{3\times 3})$ 
such that, setting $Y^{(h)}:= \check{y}^{(h)}\circ \Psi^{(h)}$, we have  
\begin{itemize}
\item[(i)] $\Big(\big(\nabla_{h}
Y^{(h)}\big)\big(\nabla_{h}\Psi^{(h)}\big)^{-1}- Id\Big)/h^{\alpha - 2}
\rightarrow A$ \, strongly in $L^{2}(\Omega;\mathbb{M}^{3\times
3})$;
\item[(ii)] $v^{(h)}\rightarrow v$ \, strongly in $W^{1,2}((0, L);\mathbb{R}^{3})$;
\vspace{.15cm}
\item[(iii)] $w^{(h)}\rightharpoonup w$\, weakly in $W^{1,2}(0, L)$,
\end{itemize}
with $A, v^{(h)}$ and $w^{(h)}$ defined as in (\ref{A}), (\ref{vh}) and (\ref{wh}). Moreover
\begin{equation}\label{limsupi3}
\limsup_{h\rightarrow 0}\frac{1}{h^{2\,\alpha}}\int_{\widetilde{\Omega}_{h}}
W\Big(\big(\Psi^{(h)}\big)^{-1} (x),\nabla
\check{y}^{(h)}(x)\Big) dx \leq I_\alpha(v, w),
\end{equation}
where $I_\alpha$ is introduced in (\ref{defI}).
\end{thm}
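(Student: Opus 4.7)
The plan is to adapt the construction of Theorem \ref{bfa} to the intermediate regime, in which three new features force a modified ansatz: the tangential correction $u$ is no longer free but is pinned by the constraint $u' = -\tfrac{1}{2}\big((v'\cdot\nu_2)^2 + (v'\cdot\nu_3)^2\big)$ (cf.\ Theorem \ref{compactness}(c) for $2<\alpha<3$); the quadratic contribution $-h^{2(\alpha-2)}A^2$ to the Cauchy--Green tensor now lives at an order strictly larger than the target strain $h^{\alpha-1}$, so it must be cancelled exactly; and the tangential parameter $t$ in $Q^0$ still has to be chosen independently so as to realize $Q = \min_t Q^0$. The key numerical observation is that for $2<\alpha<3$ the exponents $\alpha-2 < 2(\alpha-2) < \alpha-1 < 2\alpha-3 < \alpha$ are strictly separated, so there is room to place each correction at its own scale.

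I would first reduce to smooth data ($v\in C^2$, $w\in C^1$, the constrained primitive $u\in C^2$, a free smooth tangential function $u^*\in C^1[0,L]$, and a smooth field $\varphi\in C^1(\bar\Omega;\mathbb{R}^3)$) and define
\begin{align*}
Y^{(h)} &:= \Psi^{(h)} + h^{\alpha-2}\,v + h^{2(\alpha-2)}\, u\,\kappa^{(h)} + h^{\alpha-1}\, u^*\,\kappa^{(h)} + h^{\alpha-1}\big(\xi\,A\,\nu_2 + \zeta\,A\,\nu_3\big) \\
&\quad - \tfrac{1}{2}\,h^{2\alpha-3}\big(\xi\,R_0\,\gamma_2 + \zeta\,R_0\,\gamma_3\big) + h^{\alpha}\,R_0\,\varphi,
\end{align*}
with $\kappa^{(h)}$, $\gamma_2$, $\gamma_3$ as in (\ref{defg2})--(\ref{defk}) and $A$ as in (\ref{A}). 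Properties (ii) and (iii) then follow by integrating $Y^{(h)}-\Psi^{(h)}$ and its moments over the cross-section via (\ref{dom1})--(\ref{dom2}) exactly as in the proof of Theorem \ref{bfa}, and property (i) follows because at the leading order $h^{\alpha-2}$ only the $v$-term (through $v'=A\tau$) and the $\xi A\nu_2+\zeta A\nu_3$-term contribute to $\nabla_h Y^{(h)}(\nabla_h\Psi^{(h)})^{-1}-Id$, and together they reconstruct $A$.

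The heart of the argument is the expansion of the nonlinear strain $(Z^{(h)})^T Z^{(h)} - Id$, where $Z^{(h)}:=\nabla_h Y^{(h)}(\nabla_h\Psi^{(h)})^{-1}$. At the dangerous scale $h^{2(\alpha-2)}$ its symmetric part receives three contributions: $2\,u'\,\tau\otimes\tau$ from $u\,\kappa^{(h)}$ (the skew part $u(\tau'\otimes\tau-\tau\otimes\tau')$ dropping out of the sym), a contribution from the $\gamma$-correction, and the quadratic term $-A^2 = A^T A$ from $(Z^{(h)}-Id)^T(Z^{(h)}-Id)$. A direct component-wise verification using (\ref{defg2}) and (\ref{defg3}) yields the algebraic identity
\begin{equation*}
R_0\,\mathrm{sym}\big(0\,|\,\gamma_2\,|\,\gamma_3\big)\,R_0^T = 2\,u'\,\tau\otimes\tau - A^2,
\end{equation*}
and the constraint on $u$ is precisely what makes these three $h^{2(\alpha-2)}$-contributions sum to zero. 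What remains at the next relevant order is
\begin{equation*}
(Z^{(h)})^T Z^{(h)} - Id = 2\,h^{\alpha-1}\,\tilde G + o(h^{\alpha-1}),
\end{equation*}
with $R_0^T\tilde G\,R_0 = \mathrm{sym}\!\big[\big(B'+2\,\mathrm{skw}(R_0^T R_0' B)\big)(0,\xi,\zeta)^T + (u^*)'\,e_1\,\big|\,\partial_\xi\varphi\,\big|\,\partial_\zeta\varphi\big]$, which is exactly the integrand appearing in the minimum problem defining $Q^0(s,(u^*)',F)$ for $F=B'+2\,\mathrm{skw}(R_0^T R_0' B)$.

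Writing $[Z^T Z]^{1/2}=Id+h^{\alpha-1}\tilde G+o(h^{\alpha-1})$, using frame indifference, Taylor expanding $W$ around $Id$, and applying dominated convergence (the $L^\infty$ bounds on the smooth data provide an integrable majorant) then gives $\limsup_{h\rightarrow 0}h^{-(2\alpha-2)}\tilde I^{(h)}(\check y^{(h)}) = \tfrac{1}{2}\int_\Omega Q_3(s,\xi,\zeta,\tilde G)\,ds\,d\xi\,d\zeta$. Choosing $(u^*)'$ and $\varphi$ to $L^2$-approximate the optimal tangential parameter and the cross-section minimizer realizing $Q=\min_t Q^0$ (legitimate by Remark \ref{rk}), followed by the usual density approximation of $v$ in $W^{2,2}$ and $w$ in $W^{1,2}$ and a standard diagonal extraction, yields the bound $I_\alpha(v,w)$. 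The main technical obstacle, beyond managing the five separated scales in the ansatz, is the explicit algebraic cancellation at order $h^{2(\alpha-2)}$: this is the nonlinear mechanism through which the constraint on $u$ appears in the recovery sequence, and it is the curved-rod analogue of the von K\'arm\'an-type cancellation that makes the intermediate scaling of \cite{FJM06} delicate.
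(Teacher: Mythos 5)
Your proposal reproduces, essentially verbatim, the paper's preliminary ansatz (\ref{ansatz}), and the algebraic cancellation you identify at order $h^{2(\alpha-2)}$ is correct: with $u' = -\tfrac{1}{2}\big((v'\cdot\nu_2)^2 + (v'\cdot\nu_3)^2\big)$ and $\gamma_2,\gamma_3$ as in (\ref{defg2})--(\ref{defg3}), the contributions $2u'\tau\otimes\tau$, $-R_0\,\mathrm{sym}(0\,|\,\gamma_2\,|\,\gamma_3)\,R_0^T$ and $A^T A$ do sum to zero. But the claim that the scales $\alpha-2 < 2(\alpha-2) < \alpha-1 < 2\alpha-3 < \alpha$ being ``strictly separated'' suffices is exactly the point where the argument breaks down: the quadratic Cauchy--Green product $(Z^{(h)}-Id)^T(Z^{(h)}-Id)$ generates new, intermediate scales that are not in your list. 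In particular the cross term between the $h^{\alpha-2}A$ part and the $h^{2(\alpha-2)}$ corrections produces strain contributions of order $h^{3(\alpha-2)}$, and the $h^{2(\alpha-2)}$ corrections squared produce order $h^{4(\alpha-2)}$. Since $3(\alpha-2)<\alpha-1$ precisely when $\alpha<5/2$, for $2<\alpha<5/2$ these terms dominate the target scale $h^{\alpha-1}$ and do \emph{not} vanish in the expansion $[(Z^{(h)})^TZ^{(h)}]^{1/2} = Id + h^{\alpha-1}\tilde G + o(h^{\alpha-1})$.

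This is precisely the delicacy the paper (and \cite[Theorem 6.2]{FJM06}) flags: the polynomial ansatz only works for $\alpha>5/2$, and a genuinely different construction is needed to cover the whole range $2<\alpha<3$. The paper's remedy is to replace the truncated rotational correction $Id + h^{\alpha-2}A + \cdots$ by the \emph{exact} rotation $R_\varepsilon := e^{\varepsilon A}$ with $\varepsilon = h^{\alpha-2}$, defining
\begin{equation*}
Y^{(h)} := \int_0^s (R_\varepsilon\tau)\,d\sigma + h\,\xi\,R_\varepsilon\nu_2 + h\,\zeta\,R_\varepsilon\nu_3 + h^\alpha\beta.
\end{equation*}
Because $R_\varepsilon \in SO(3)$ exactly, frame indifference allows one to factor it out of $Z^{(h)}$, and what remains is $Id + h^{\alpha-1}(\dots) + o(h^{\alpha-1})$ with no spurious intermediate scales; the identity $R_\varepsilon^T R_\varepsilon' = \varepsilon A' + o(\varepsilon)$ then recovers the expected bending strain. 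Your write-up contains no substitute for this step, so as it stands the proof is incomplete for $2<\alpha\le 5/2$, which is the very regime that makes this case harder than $\alpha\geq 3$.
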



\begin{proof}
As in Theorem \ref{bfa}, we preliminarly assume that $w\in C^1[0,L]$ and 
$v\in C^2([0,L];\mathbb{R}^3)$. Let $g \in C^0[0,L]$ and $\varphi \in C^1(\bar{\Omega};\mathbb{R}^3)$. 
Denote by $\beta$ the function $\beta(s,\xi,\zeta) := R_0(s)\varphi(s,\xi,\zeta)$ and by $\tilde{g}$ 
a primitive of the function $g$.

Define the functions  $\gamma_2,\gamma_3, \kappa^{(h)}$  as in the proof of Theorem \ref{bfa}. 
Finally define the function
$u\in C^{1}[0,L]$ as a primitive of
\begin{equation*}
 - \,\frac{1}{2}\Big((v'\cdot \nu_2)^2 + (v'\cdot \nu_3)^2\Big).
\end{equation*}

\noindent
In analogy with the cases $\alpha \geq 3$, one could make the ansatz
\begin{align}\label{ansatz}
Y^{(h)}&=\,\Psi^{(h)} + \,h^{\alpha - 2}v +\, h^{\alpha - 1}\xi\,A\,\nu_2
+\, h^{\alpha - 1}\zeta\,A\,\nu_3 + \big( h^{2(\alpha - 2)}u + h^{\alpha - 1}\tilde{g}\big)\,
\kappa^{(h)}\,  +\nonumber\\
&\,- \frac{1}{2}\,h^{(2\,\alpha - 3)}R_0 \big(\xi\,\gamma_2 + \zeta\,\gamma_3\big) + h^\alpha \beta.
\end{align}
 Hence, by (\ref{antisim}) the scaled gradient of the deformation $Y^{(h)}$ is 
\begin{align}\label{gradYhi}
&\nabla_{h}Y^{(h)} =\, \nabla_{h}\Psi^{(h)} + h^{\alpha-2}\,A\,R_0 + h^{\alpha-1}\,\Bigg(\big(A\,R_0\big)'\Bigg(\begin{array}{c}
0\\
\xi\\
\zeta
\end{array}\Bigg) \,\bigg|\,
\partial_{\xi} \beta\,\bigg|\,
\partial_{\zeta} \beta
 \Bigg)\,+ \, h^{\alpha - 1}g\,\tau\otimes e_1 \,+ \nonumber\\  
&+\,\big(h^{2(\alpha - 2)}u + h^{\alpha - 1}\tilde{g}\big)
\,\big(\tau'\otimes\tau - \tau\otimes\tau'\big)\,R_0\,+\,\frac{1}{2}\,h^{2(\alpha - 2)}\,R_0 \,\big(2\,u'\,e_1\,|\,- \gamma_2\,|\,- \gamma_3\big) + o(h^{\alpha-1}).
\end{align}
Now, using (\ref{invA}) and (\ref{gradYhi}) we have
\begin{align}\label{YPsii3}
Z^{(h)}&:=\, \nabla_{h} Y^{(h)}\big(\nabla_{h}\Psi^{(h)}\big)^{-1} =\, Id + h^{\alpha-2}A +
h^{\alpha-1}\,\Bigg(A'\,R_0\Bigg(\begin{array}{c}
0\\
\xi\\
\zeta
\end{array}\Bigg)\,+ g\,\tau \bigg|\,
\partial_{\xi} \beta\,\bigg|\,
\partial_{\zeta} \beta  \Bigg)\,R_0^T +\nonumber\\
&+\,\big(h^{2(\alpha - 2)}u + h^{\alpha - 1}\tilde{g}\big)\,\big(\tau'\otimes \tau - \tau \otimes \tau'\big)  + \, \frac{1}{2}\,h^{2(\alpha - 2)}\,R_0 \,\big(2\,u'\,e_1\,|\,- \gamma_2\,|\,- \gamma_3\big)\,R_0^T + o(h^{\alpha - 1}).
\end{align}
This procedure leads to the desired conclusion for $\alpha > 5/2$, but our ansatz cannot work for 
$\alpha$ close to $2$. Indeed, for $\alpha > 5/2$, using the identity $(Id + P^T)(Id + P) = Id + 2\,\mbox{sym}\,P +
P^T P$, and noticing that some of the matrices on the right-hand side
of (\ref{YPsii3}) are skew-symmetric, we obtain for the nonlinear
strain
\begin{align}\label{strain}
\big(Z^{(h)}\big)^T Z^{(h)} =&\, Id + \,2\,h^{\alpha-1}\,\mbox{sym}\, \Bigg( \Bigg(A'\,R_0\Bigg(\begin{array}{c}
0\\
\xi\\
\zeta
\end{array}\Bigg) \bigg|\,
\partial_{\xi} \beta\,\bigg|\,
\partial_{\zeta} \beta
 \Bigg)\,R_0^T \Bigg)
+ \, 2\,h^{\alpha-1}\,g\,\tau\otimes\tau \nonumber\\
+&\,h^{2(\alpha - 2)}R_0 \,\Big(\mbox{sym}\,(2\,u'\,e_1\,|\,- \gamma_2\,|\,- \gamma_3)\Big)\,R_0^T + 
h^{2(\alpha - 2)}A^T A + o(h^{\alpha-1}).
\end{align}
Moreover, using (\ref{A}) and our definition of $u$, $\gamma_2$ and $\gamma_3$, we have that
\begin{equation}\label{rooti3}
\left[\big(Z^{(h)}\big)^T Z^{(h)}\right]^{1/2} = Id + h^{\alpha-1}
\tilde{G} + o\big(h^{\alpha-1}\big),
\end{equation}
where
\begin{equation*}
\tilde{G} := g\,\tau\otimes\tau + \textnormal{sym}\,\Bigg(
\Bigg(A'\,R_0\Bigg(\begin{array}{c}
0\\
\xi\\
\zeta
\end{array}\Bigg) \bigg|\,
\partial_{\xi} \beta\,\bigg|\,
\partial_{\zeta} \beta
 \Bigg)\,R_0^T \Bigg).
\end{equation*}
As in Theorem \ref{bfa}, the frame-indifference of the energy density $W$ and 
the dominated convergence theorem give the following
equality:
\begin{equation}\label{quasi}
\limsup_{h\rightarrow 0}\frac{1}{h^{2\,\alpha}}\int_{\widetilde{\Omega}_{h}}
W\Big(\big(\Psi^{(h)}\big)^{-1} (x),\nabla \check{y}^{(h)}(x)\Big)
dx =
\frac{1}{2}\int_{\Omega}Q_3(s,\xi,\zeta,\tilde{G})\,ds\,d\xi\,d\zeta,
\end{equation}
and the general case can be proved by approximation. 
Then, using the minimality assumptions on $g$ and $\varphi$, we obtain 
(\ref{limsupi3}) and so the claim.

Unfortunately, this procedure fails for $\alpha$ close to $2$, since in that case 
terms of order $h^{4(\alpha - 2)}$ appear in the expression of the nonlinear strain 
$\big(Z^{(h)}\big)^T Z^{(h)}$, and they cannot be absorbed in $o(h^{\alpha - 1})$. 

Therefore, in the spirit of the proof of \cite[Theorem 6.2]{FJM06}, we modify the ansatz 
(\ref{ansatz}) in order to get an exact isometry.
Let us define for every $h > 0$, the sequence 
\begin{equation}\label{ansatz2}
Y^{(h)}:= \int_{0}^{s} (R_\varepsilon \tau)\,d\sigma + h\xi R_\varepsilon \nu_2 + 
h\zeta R_\varepsilon \nu_3 + h^\alpha \beta,
\end{equation}
where $R_{\varepsilon}:= e^{\varepsilon A}$, with $A$ defined as in (\ref{A}), 
and $\varepsilon:= h^{\alpha - 2}$.
Notice that, due to the fact that $A$ is skew-symmetric, the matrix $R_\varepsilon$ 
turns out to be a rotation.

The scaled gradient of the deformation $Y^{(h)}$ is given by 
\begin{equation}\label{una}
\nabla_h Y^{(h)} = R_\varepsilon R_0 + h\,R_\varepsilon(\xi \nu_2' + \zeta \nu_3')\otimes e_1 + 
h\,\Bigg(R_{\varepsilon}'\,R_0\Bigg(\begin{array}{c}
0\\
\xi\\
\zeta
\end{array}\Bigg) \bigg|\,
h^{\alpha - 2} \partial_{\xi} \beta\,\bigg|\,
h^{\alpha - 2} \partial_{\zeta} \beta  \Bigg) + O(h^{\alpha}).
\end{equation}
Now, using (\ref{psi}), the expression (\ref{una}) becomes 
\begin{equation*}
\nabla_h Y^{(h)} = R_\varepsilon \nabla_h \Psi^{(h)} + 
h\,\Bigg(R_{\varepsilon}'\,R_0\Bigg(\begin{array}{c}
0\\
\xi\\
\zeta
\end{array}\Bigg) \bigg|\,
h^{\alpha - 2} \partial_{\xi} \beta\,\bigg|\,
h^{\alpha - 2} \partial_{\zeta} \beta  \Bigg) + O(h^{\alpha}),
\end{equation*}
and hence, by (\ref{invA}) we have 

\begin{align}\label{zeta}
Z^{(h)}&:=\, \nabla_h Y^{(h)}(\nabla_h \Psi^{(h)})^{-1} = R_\varepsilon + 
h\,\Bigg(R_{\varepsilon}'\,R_0\Bigg(\begin{array}{c}
0\\
\xi\\
\zeta
\end{array}\Bigg) \bigg|\,
h^{\alpha - 2} \partial_{\xi} \beta\,\bigg|\,
h^{\alpha - 2} \partial_{\zeta} \beta  \Bigg)R_0^T + o(h^{\alpha -1}),\nonumber \\
&= \,R_\varepsilon\,\left(Id + 
h\,\Bigg(R_{\varepsilon}^T R_{\varepsilon}'\,R_0\Bigg(\begin{array}{c}
0\\
\xi\\
\zeta
\end{array}\Bigg) \bigg|\,
h^{\alpha - 2} \partial_{\xi} (R_{\varepsilon}^T\beta)\,\bigg|\,
h^{\alpha - 2} \partial_{\zeta} (R_{\varepsilon}^T\beta)  \Bigg)R_0^T \right)+ o(h^{\alpha -1}).
\end{align}
Now notice that, by definition, the rotation $R_{\varepsilon}$ verifies the identities:
\begin{equation*}
R_{\varepsilon}(s) = Id + \varepsilon\, A(s) + o (\varepsilon), \quad 
R'_{\varepsilon}(s)  = \varepsilon \int_0^1 e^{(1 - \sigma)\varepsilon\,A(s)} A'(s)\,e^{\sigma\,\varepsilon\,A(s)}d\sigma.
\end{equation*}
Therefore we have in particular 
\begin{equation}\label{prop}
R_{\varepsilon}^T R_{\varepsilon}' = \varepsilon\,A' + o(\varepsilon).
\end{equation}
Hence, using (\ref{prop}) and the fact that $\varepsilon = h^{\alpha - 2}$, (\ref{zeta}) simplifies as follows
\begin{equation*}
Z^{(h)} = \,R_\varepsilon\,\left(Id + 
h^{\alpha - 1}\Bigg(A'\,R_0\Bigg(\begin{array}{c}
0\\
\xi\\
\zeta
\end{array}\Bigg) \bigg|\, \partial_{\xi} \beta\,\bigg|\,
\partial_{\zeta} \beta  \Bigg)R_0^T \right)+ o(h^{\alpha -1}).
\end{equation*}
Thus, using frame indifference, we obtain 
\begin{align*}
\frac{1}{h^{2\,\alpha -2}}\,W\big(s,\xi,\zeta,Z^{(h)}\big) &= 
\frac{1}{h^{2\,\alpha -2}}\,W\Big(s,\xi,\zeta,\big(R_{\varepsilon}\big)^T Z^{(h)}\Big)\\
&\rightarrow \frac{1}{2} \,Q_{3}(s,\xi,\zeta,\tilde{G}) \,\,
\mbox{a.e.,}
\end{align*} 
and proceeding as before we get the desired claim.
\end{proof}


\section{The case of a closed thin beam}

It appears natural to ask whether the same analysis that we have developed so far can be extended 
to the case of a thin rod whose mid-fiber is a closed curve. 
In this section we will show that this additional requirement imposes a restriction on the 
class of admissible limit deformations, while the expression of the limiting functional is not 
affected by this constraint.

Throughout this section we will assume $\alpha = 3$ for simplicity, but the results can be easily extended to
the other cases. 

The setting of the problem is exactly the same as before. The additional assumptions are
\begin{equation}\label{bc} 
\gamma(0) = \gamma(L), \gamma'(0) = \gamma'(L) \quad \hbox{ and } \nu_k(0) = \nu_k(L), \hbox{for } k = 2,3.
\end{equation}
Notice that, from (\ref{bc}) it easily follows that $\Psi^{(h)}(0,\xi,\zeta) = \Psi^{(h)}(L,\xi,\zeta)$ for every $(\xi,\zeta)\in D$.

Now we will state and prove a compactness result which allows to identify the domain of the $\Gamma$-limit.

\begin{thm}\label{compbc}
Let $\big(\tilde{y}^{(h)}\big)\subset W^{1,2}\big(\widetilde{\Omega}_{h};\mathbb{R}^{3}\big)$  be a sequence verifying
\begin{equation}
\frac{1}{h^4}\,\tilde{I}^{(h)}(\tilde{y}^{(h)})
\leq c < +\infty
\end{equation}
for every $h > 0$. Then there exist an associated sequence $R^{(h)}\subset C^{\infty}((0, L);\mathbb{M}^{3\times 3})$
and constants $\bar{R}^{(h)} \in SO(3)$, $c^{(h)} \in \mathbb{R}^3$ such that, if we define $Y^{(h)}:= \big(\bar{R}^{(h)}\big)^{T}\,\tilde{y}^{(h)}\circ \Psi^{(h)} - c^{(h)}$, we have
\begin{align}
& R^{(h)}(s) \in SO(3) \quad \hbox{for every } s\in (0,L), \label{rota}\\
\vspace{.18cm}
&\big|\big|R^{(h)} - Id\big|\big|_{L^{\infty}(0, L)} \leq C\,h,\quad 
\big|\big|\big(R^{(h)}\big)'\big|\big|_{L^{2}(0, L)} < C\,h,\label{rota1}\\
\vspace{.18cm}
&\big|\big|\nabla_{h} Y^{(h)}\big(\nabla_{h}\Psi^{(h)}\big)^{-1} - R^{(h)}\big|\big|_{L^{2}(\Omega)} \leq C\,h^2,\label{rota2}\\
\vspace{.18cm}
& \, \big| R^{(h)}(0) - R^{(h)}(L)\big| \leq c\,h^{3/2} \label{rota3}.
\end{align}
Moreover, defining $v^{(h)}$, $w^{(h)}$ and $u^{(h)}$ as in  (\ref{vh}), (\ref{wh}) and (\ref{uh}), we
have that, up to subsequences, the following properties are satisfied:
\vspace{.2cm}
\begin{itemize}
\item[(a)] \,$v^{(h)} \rightarrow v$ \, strongly in $W^{1,2}((0, L);\mathbb{R}^{3})$; moreover, $v\in W^{2,2}((0, L);\mathbb{R}^{3})$, $v'\cdot \tau = 0$, $v(0) = v(L)$, and $v'(0) = v'(L)$;
\vspace{.15cm}
\item[(b)]\,$w^{(h)} \rightharpoonup w$ \, weakly in $W^{1,2}(0, L)$, with $w(0) = w(L)$;
\vspace{.15cm}
\item[(c)]\,$u^{(h)}\rightharpoonup u$ \, weakly in $W^{1,2}(0, L)$; 
\vspace{.15cm}
\item[(d)] $\big(\nabla_{h} Y^{(h)}\,\big(\nabla_{h}\Psi^{(h)}\big)^{-1} - Id\,\big)/h \rightarrow A$
strongly in $L^2(\Omega;\mathbb{M}^{3\times 3})$, where the matrix $A\in W^{1,2}((0, L);\mathbb{M}^{3\times 3})$ is defined in (\ref{A});
\vspace{.15cm}
\item[(e)] $\big(R^{(h)} - Id\big)/h \rightharpoonup A$ weakly in $W^{1,2}((0, L);\mathbb{M}^{3\times 3})$;
\vspace{.15cm}
\item[(f)] $sym\big(R^{(h)} - Id\big)/h^2 \rightarrow A^2 /2$ \, uniformly on $(0, L)$.
\end{itemize} 
\end{thm}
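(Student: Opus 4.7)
The proof of assertions (a)--(f) closely parallels that of Theorem \ref{compactness} specialized to $\alpha = 3$, so the plan is to concentrate on the two genuinely new features that exploit the closedness condition (\ref{bc}): the endpoint estimate (\ref{rota3}) on $R^{(h)}$ and the periodic boundary conditions in (a) and (b).

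First I would reproduce the construction of the piecewise constant rotations $Q^{(h)}\colon [0,L]\to SO(3)$ and their mollified and projected counterpart $R^{(h)}$ exactly as in Theorem \ref{compactness}, thereby obtaining (\ref{rota})--(\ref{rota2}) and items (a)--(f) except for the periodic traces. The essential new ingredient is a wrap-around rigidity estimate. By (\ref{bc}), the set
\begin{equation*}
\widetilde{C}_{\mathrm{wrap}}^{(h)} := \Psi^{(h)}\bigl(((0,h)\cup(L-h,L))\times D\bigr)
\end{equation*}
is a connected subset of $\widetilde{\Omega}_h$ that is a Bilipschitz image of a cube of side $h$, with Lipschitz constants independent of $h$. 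Applying Theorem \ref{Teorigid} and the remark following it produces a single constant rotation $\widetilde{Q}_{\mathrm{wrap}}^{(h)}$ with
\begin{equation*}
\int_{\widetilde{C}_{\mathrm{wrap}}^{(h)}}\bigl|\nabla \tilde{y}^{(h)} - \widetilde{Q}_{\mathrm{wrap}}^{(h)}\bigr|^2\,dx \leq c\, h^6.
\end{equation*}
Comparing this rotation with $Q^{(h)}(s)$ for $s$ close to $0$ and to $L$ on overlapping cylinders of volume $\sim h^3$ (exactly as in the derivation of (\ref{puntuale}) with $\alpha = 3$) yields $|Q^{(h)}(0^+) - \widetilde{Q}_{\mathrm{wrap}}^{(h)}| + |Q^{(h)}(L^-) - \widetilde{Q}_{\mathrm{wrap}}^{(h)}| \leq c\,h^{3/2}$, and the triangle inequality gives $|Q^{(h)}(0^+) - Q^{(h)}(L^-)| \leq c\,h^{3/2}$. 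This bound survives both the convolution with $\eta^{(h)}$ and the smooth projection onto $SO(3)$, producing (\ref{rota3}).

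Once (\ref{rota3}) is in hand, the periodic boundary conditions follow. Since $\Psi^{(h)}(0,\xi,\zeta) = \Psi^{(h)}(L,\xi,\zeta)$ by (\ref{bc}), the traces of $Y^{(h)} = (\bar{R}^{(h)})^T \tilde{y}^{(h)}\circ\Psi^{(h)} - c^{(h)}$ on $\{0\}\times D$ and $\{L\}\times D$ coincide in $L^2(D)$. Integrating against $d\xi\,d\zeta$ and passing to the limit via the strong $W^{1,2}$ convergence of $v^{(h)}$ yields $v(0) = v(L)$; integrating against $(\xi\,\nu_3 - \zeta\,\nu_2)\,d\xi\,d\zeta$ and using the weak $W^{1,2}$ convergence of $w^{(h)}$ gives $w(0) = w(L)$. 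For the matching $v'(0) = v'(L)$, set $A^{(h)} := (R^{(h)} - Id)/h$: estimate (\ref{rota3}) becomes $|A^{(h)}(0) - A^{(h)}(L)| \leq c\,h^{1/2} \to 0$, and the weak $W^{1,2}$ (hence uniform) convergence $A^{(h)} \rightharpoonup A$ forces $A(0) = A(L)$. Combined with $v' = A\,\tau$ and $\tau(0) = \tau(L)$ from (\ref{bc}), this gives $v'(0) = v'(L)$.

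The main obstacle is the wrap-around rigidity step: one must verify carefully that $\widetilde{C}_{\mathrm{wrap}}^{(h)}$ really is a Bilipschitz image of a cube of size $h$ with constants uniform in $h$, a property that relies crucially on the $C^3$ regularity of $\gamma$ together with the matching conditions (\ref{bc}), which ensure that the periodic extension of $\Psi$ to $\mathbb{R}\times D$ is itself of class $C^2$ across the seam $s = 0 \sim s = L$. Once this geometric fact is in place, the rest of the argument is a routine adaptation of the proof of Theorem \ref{compactness}.
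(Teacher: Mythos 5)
Your argument for the endpoint estimate (\ref{rota3}) takes a genuinely different route from the paper's. You invoke a wrap-around application of the rigidity theorem to a set straddling the seam $s=0\sim s=L$ and then compare the resulting global rotation with $Q^{(h)}$ near both ends by the usual overlap mechanism that gives (\ref{puntuale}). The paper instead avoids any wrap-around chart: it applies a uniform trace inequality (after rescaling the $s$ variable by $h$) to the differences $\tilde{y}^{(h)}\circ\Psi^{(h)} - Q^{(h)}(0)\,\Psi^{(h)}$ and $\tilde{y}^{(h)}\circ\Psi^{(h)} - Q^{(h)}(L)\,\Psi^{(h)}$ on the end cylinders, obtaining $L^2(D)$ estimates at $s=0$ and $s=L$ (equations (\ref{prima}) and (\ref{seconda})), and then exploits $\Psi^{(h)}(0,\cdot)=\Psi^{(h)}(L,\cdot)$ to subtract the two and isolate $|Q^{(h)}(0)-Q^{(h)}(L)|$. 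Both methods yield the exponent $3/2$; yours reduces the periodicity to a single additional rigidity application but requires the extra geometric verification you flag, while the paper's only reuses the per-cylinder rigidity bound together with a standard scaled trace inequality. Your derivations of $v(0)=v(L)$, $w(0)=w(L)$, and $v'(0)=v'(L)$ are equivalent to the paper's; your route to $A(0)=A(L)$ through the uniform convergence of $A^{(h)}=(R^{(h)}-Id)/h$ and $|A^{(h)}(0)-A^{(h)}(L)|\le ch^{1/2}$ is if anything cleaner than the paper's appeal to (\ref{impo}) and item (d).

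Two small repairs are needed in the wrap-around step. First, as written the set $\widetilde{C}_{\mathrm{wrap}}^{(h)} = \Psi^{(h)}\big(((0,h)\cup(L-h,L))\times D\big)$ is a disconnected open set: it omits the seam cross-section $\Psi^{(h)}(\{0\}\times D)$. You should take the interior of its closure, or equivalently parametrize it by the $L$-periodic extension of $\Psi$ over an interval straddling $s=0$, so that the rigidity theorem applies to a genuine connected Lipschitz domain. Second, the uniform Bilipschitz property of this chart does not require the periodic extension of $\Psi$ to be $C^2$ across the seam, and indeed (\ref{bc}) does not give that: it only matches $\gamma,\gamma'$ and the frame $\nu_k$ to zeroth order, so $\nabla\Psi^{(h)}$ may jump across the seam (by a quantity of order $h$). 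What you actually need --- and have --- is that the periodic extension of $\Psi^{(h)}$ is continuous, with piecewise $C^1$, uniformly bounded derivative and Jacobian bounded away from zero uniformly in $h$; that suffices for the remark after Theorem \ref{Teorigid}.
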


\begin{proof}
The argument follows the proof of Proposition 4.1 in \cite{MGMS06}, but we will include the 
details for the convenience of the reader.
As in the proof of Theorem \ref{compactness}, the rigidity theorem provides the existence of a sequence of 
piecewise constant rotations $Q^{(h)}: (0,L) \rightarrow SO(3)$ such that, for every small cylinder $\widetilde{C}_{a,h}$ 
we have 
\begin{equation*}
\int_{\widetilde{C}_{a,h}}\big|\,\nabla\tilde{y}^{(h)}
- Q^{(h)}\big|^{2} dx \leq c
\int_{\widetilde{C}_{a,h}}\mbox{dist}^{2}(\nabla\tilde{y}^{(h)},SO(3))
dx.
\end{equation*}
Changing variables, the previous inequality becomes
\begin{equation}\label{rigid3}
\int_{\big(a, a + \frac{L}{K_{h}}\big)\times
D}\big|\,\nabla\tilde{y}^{(h)}\circ\Psi^{(h)} -Q^{(h)}\big|^{2} ds\,d\xi\,d\zeta 
\leq
c\,\int_{\big(a, a + \frac{L}{K_{h}}\big)\times
D}\mbox{dist}^{2}\big(\nabla\tilde{y}^{(h)}\circ\Psi^{(h)},SO(3)\big)ds\,d\xi\,d\zeta.
\end{equation}
Let us define $\bar{Q}:= Q^{(h)}(0)$. If we specify the relation for $a = 0$ we have
\begin{align}
\int_{\big(0, \frac{L}{K_{h}}\big)\times
D}\big|\,\nabla\tilde{y}^{(h)}\circ\Psi^{(h)} -
\bar{Q}\big|^{2} ds\,d\xi\,d\zeta &\leq c
\int_{\big(0, \frac{L}{K_{h}}\big)\times
D}\mbox{dist}^{2}(\nabla\tilde{y}^{(h)}\circ\Psi^{(h)},SO(3))ds\,d\xi\,d\zeta. \label{cil1}\\
\end{align}
In order to establish (\ref{rota3}), we start from the trace inequality
\begin{equation*}
\int_{D}|\,v(0,\xi,\zeta) - \bar{v}\,|^2\,d\xi\,d\zeta \leq c \int_{(0,l)\times D}|\,\nabla v\,|^2\,ds\,d\xi\,d\zeta,
\end{equation*}
which holds uniformly for $1 \leq l \leq 2$, with $\bar{v} = \int_{D}v(0,\xi,\zeta)\,d\xi\,d\zeta$. 
If we write this estimate for
\begin{equation*}
 v(s,\xi,\zeta):= \frac{1}{h}\,\big(\tilde{y}^{(h)}\circ\Psi^{(h)}\big)(hs,\xi,\zeta) - 
\frac{1}{h}\,\bar{Q}\Psi^{(h)}(hs,\xi,\zeta),
\end{equation*}
we obtain the following relation:
\begin{align}\label{trace}
&\int_{D}\Big|\,\big(\tilde{y}^{(h)}\circ\Psi^{(h)} - \bar{Q}\Psi^{(h)}\big)(0,\xi,\zeta) - \int_{D}\big(\tilde{y}^{(h)}\circ\Psi^{(h)} -\bar{Q}\Psi^{(h)}\big)(0,\xi,\zeta)\,d\xi\,d\zeta\,\Big|^2d\xi\,d\zeta \nonumber\\
&\leq c\,h \int_{(0,l h)\times D}\big|\,\nabla_h \big(\tilde{y}^{(h)}\circ\Psi^{(h)}\big) - \bar{Q}\nabla_h \Psi^{(h)}\,\big|^2 ds\,d\xi\,d\zeta.
\end{align}
Putting together (\ref{trace}) and (\ref{cil1}) we have, after easy computations,
\begin{align}\label{prima}
\int_{D}\Big|\,\big(\tilde{y}^{(h)}\circ\Psi^{(h)}\big)(0,\xi,\zeta) - \int_{D}\big(\tilde{y}^{(h)}\circ\Psi^{(h)}\big)(0,\xi,\zeta)\,d\xi\,d\zeta - h\,\bar{Q}(\xi\,\nu_2(0) + \zeta\,\nu_3(0))\Big|^2 d\xi\,d\zeta \leq c\,h^5.
\end{align}
In a similar way, if we define $\bar{\bar{Q}}:= Q^{(h)}(L)$, we deduce
\begin{align}\label{seconda}
\int_{D}\Big|\,\big(\tilde{y}^{(h)}\circ\Psi^{(h)}\big)(L,\xi,\zeta) - \int_{D}\big(\tilde{y}^{(h)}\circ\Psi^{(h)}\big)(L,\xi,\zeta)\,d\xi\,d\zeta - h\,\bar{\bar{Q}}(\xi\,\nu_2(L) + \zeta\,\nu_3(L))\Big|^2 d\xi\,d\zeta \leq c\,h^5.
\end{align} 
Now, subtracting (\ref{seconda}) from (\ref{prima}) and taking into account (\ref{bc}), we obtain
\begin{equation*}
 \int_{D}\big|\big[\bar{Q} - \bar{\bar{Q}}\big](\xi\,\nu_2(0) + \zeta\,\nu_3(0))\big|^2 d\xi\,d\zeta \leq c\,h^3,
\end{equation*}
which leads to 
\begin{equation}\label{bcQ}
 \big| Q^{(h)}(0) - Q^{(h)}(L)\big| \leq c\,h^{3/2}.
\end{equation}
If we define the sequences $\tilde{Q}^{(h)}$ and $R^{(h)}$ as in Theorem \ref{compactness}, it is easy to 
check that they also satisfy (\ref{bcQ}), hence (\ref{rota3}) is proved. For the estimates (\ref{rota}), 
(\ref{rota1}), and (\ref{rota2}) we proceed exactly as in Theorem \ref{compactness}.

Let us define the sequences $v^{(h)}$, $w^{(h)}$ and $u^{(h)}$ as in (\ref{vh}), (\ref{wh}) and (\ref{uh}). 
The convergence properties follow from Theorem \ref{compactness}. It remains only to verify the boundary 
conditions for the limiting functions $v$ and $w$.
Since $\Psi^{(h)}(0,\xi,\zeta) = \Psi^{(h)}(L,\xi,\zeta)$ and $Y^{(h)}(0,\xi,\zeta) = Y^{(h)}(L,\xi,\zeta)$ for 
every $(\xi,\zeta)\in D$, we have by definition that $v^{(h)}(0) = v^{(h)}(L)$ and $w^{(h)}(0) = w^{(h)}(L)$. 
Hence we directly obtain that $v$ and $w$ satisfy 
\begin{equation}\label{bcvw}
v(0) = v(L) \quad \hbox{and} \quad w(0) = w(L).
\end{equation}
Now notice that, by definition, 
\begin{equation}\label{impo}
\nabla_{h} Y^{(h)}\,\big(\nabla_{h}\Psi^{(h)}\big)^{-1} = 
\big(\bar{R}^{(h)}\big)^T \big(\nabla_{h} \tilde{y}^{(h)}\big)\circ \Psi^{(h)}.
\end{equation}
Therefore, using (\ref{impo}) and the fact that $\Psi^{(h)}(0,\xi,\zeta) = \Psi^{(h)}(L,\xi,\zeta)$ 
for every $(\xi,\zeta) \in D$, we have in particular that 
\begin{equation*}
\frac{\big(\nabla_{h} Y^{(h)}\,\big(\nabla_{h}\Psi^{(h)}\big)^{-1} - Id\big)}{h}(0,\xi,\zeta) = 
\frac{\big(\nabla_{h} Y^{(h)}\,\big(\nabla_{h}\Psi^{(h)}\big)^{-1} - Id\big)}{h}(L,\xi,\zeta)
\end{equation*}
for every $(\xi,\zeta) \in D$. The last relation, together with property (d), implies that $A(0) = A(L)$. 
Hence $v'(0) = v'(L)$ and so the proof is concluded.
\end{proof}
Now we are in a position to prove the $\Gamma$-convergence of the sequence $\big(\tilde{I}^{(h)}\big)/h^4$. As we have already noticed, the limit functional has the same expression as in (\ref{defI0}), but the class of deformations on which it is finite includes the boundary conditions. More precisely we have the following convergence result.

\begin{thm}\label{limite}

(1) Let $u,w\in W^{1,2}(0,L)$ and let $v\in W^{2,2}((0,L);\mathbb{R}^3)$ 
be such that $v'\cdot \tau = 0$. Assume also that $v$ and $w$ satisfy the 
boundary conditions (\ref{bcvw}). Then, for every positive sequence $(h_j)$
converging to zero and every sequence 
$\big(\tilde{y}^{(h_j)}\big)\subset
W^{1,2}(\widetilde{\Omega}_{h_j};\mathbb{R}^{3})$ such that the
sequence $Y^{(h_j)}:= \tilde{y}^{(h_j)}\circ \Psi^{(h_j)}$
satisfies the properties (a)-(d) of Theorem \ref{compbc}, it
turns out that
\begin{equation}\label{ineq}
\liminf_{j\rightarrow
\infty}\frac{1}{h_j^{6}}\int_{\widetilde{\Omega}_{h_j}}
W\Big(\big(\Psi^{(h_j)}\big)^{-1} (x),\nabla
\tilde{y}^{(h_j)}(x)\Big) dx \geq 
I_{3}(u, v, w),
\end{equation} 
where $I_{3}$ is defined in (\ref{defI0}).

(2) \,For every $u, w \in W^{1,2}(0,L)$ and $v\in
W^{2,2}((0,L);\mathbb{R}^3)$ satisfying the boundary conditions and such that $v'\cdot \tau = 0$, there
exists a sequence $\big(\check{y}^{(h)}\big)\subset
W^{1,2}(\tilde{\Omega}_{h}; \mathbb{M}^{3\times 3})$  such that, setting 
$Y^{(h)}:= \check{y}^{(h)}\circ \Psi^{(h)}$, we have
\vspace{.15cm}
\begin{itemize}
\item[(i)] $\big(\nabla_{h}
Y^{(h)}\big(\nabla_{h}\Psi^{(h)}\big)^{-1}- Id\big)/h
\rightarrow A$ \, strongly in $L^{2}(\Omega;\mathbb{M}^{3\times
3})$;
\vspace{.2cm}
\item[(ii)] $v^{(h)}\rightarrow v$ \, strongly in $W^{1,2}((0, L);\mathbb{R}^{3})$;
\vspace{.2cm}
\item[(iii)] \,$w^{(h)}\rightharpoonup w$\, weakly in $W^{1,2}(0, L)$;
\vspace{.2cm}
\item[(iv)] $u^{(h)}\rightharpoonup u$\, weakly in $W^{1,2}(0, L)$,
\end{itemize}
where $A$, $v^{(h)}$, $w^{(h)}$, and $u^{(h)}$ are
defined as in (\ref{A}), (\ref{vh}), (\ref{wh}) and (\ref{uh}). Moreover,
\begin{equation}\label{limsupfin}
\limsup_{h\rightarrow 0}\frac{1}{h^6}\int_{\widetilde{\Omega}_{h}}
W\Big(\big(\Psi^{(h)}\big)^{-1} (x),\nabla
\check{y}^{(h)}(x)\Big) dx \leq I_{3}(u, v, w),
\end{equation}
where $I_{3}$ is defined in (\ref{defI0}).
\end{thm}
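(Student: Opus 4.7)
\medskip

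\textbf{Plan of proof.} The statement splits into two parts which I would treat in turn: the liminf bound (1) and the construction of a recovery sequence (2). Both parts are close analogues of Theorems~\ref{scithm2} and~\ref{bfa} in the case $\alpha=3$, the only new ingredient being the presence of the boundary conditions $v(0)=v(L)$, $v'(0)=v'(L)$, $w(0)=w(L)$ in the admissible class, together with the corresponding closure condition on the domain $\widetilde\Omega_h$.

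For part (1), I would simply invoke the proof of Theorem~\ref{scithm2}. That proof proceeds entirely through the rotation sequence $R^{(h)}$, the strain estimate \eqref{uno}, the identification of $G\tau = A'(\xi\nu_2+\zeta\nu_3)+g$ and of $G\nu_2, G\nu_3$ in terms of a corrector $\beta$, and finally the representation \eqref{vif4} of $\tilde G$ followed by the lower semicontinuity argument \eqref{sc}. Theorem~\ref{compbc} supplies all these ingredients exactly as Theorem~\ref{compactness} does (properties (a)--(f)), so the same chain of inequalities yields $\liminf h_j^{-6}\int W\geq I_3(u,v,w)$. The periodicity of $v,w$ plays no role here: it only restricts the class of admissible limits.

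For part (2), I would start from the ansatz of Theorem~\ref{bfa} in the case $\alpha=3$,
\begin{equation*}
Y^{(h)}=\Psi^{(h)}+hv+h^{2}u\,\kappa^{(h)}+h^{2}\xi A\nu_{2}+h^{2}\zeta A\nu_{3}+h^{3}\beta,
\end{equation*}
with $\beta=R_{0}\varphi-\tfrac12\xi R_{0}\gamma_{2}-\tfrac12\zeta R_{0}\gamma_{3}$. The point is that, assuming enough smoothness, each ingredient can be made compatible with the ring structure $\Psi^{(h)}(0,\xi,\zeta)=\Psi^{(h)}(L,\xi,\zeta)$: the boundary conditions on $v,w$ force $A(0)=A(L)$ and $v(0)=v(L)$; $\kappa^{(h)}$ and $\gamma_2,\gamma_3$ are automatically periodic in $s$ thanks to (\ref{bc}); and one is free to choose $\varphi\in C^{1}(\overline\Omega;\mathbb{R}^{3})$ with $\varphi(0,\cdot)=\varphi(L,\cdot)$. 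Hence $Y^{(h)}$ descends to a well-defined $\check y^{(h)}:\widetilde\Omega_{h}\to\mathbb{R}^{3}$, and the algebraic computation of Theorem~\ref{bfa} goes through verbatim: the scaled gradient expansion \eqref{YPsi}, the quadratic Taylor development, and the identification of $\tilde G$ via \eqref{sempl4} all remain valid. Dominated convergence then yields (\ref{limsupfin}) for the smooth periodic case. The general case follows by approximating $w$ in $W^{1,2}(0,L)$ and $v$ in $W^{2,2}((0,L);\mathbb{R}^{3})$ by smooth periodic functions, and $\varphi,\partial_\xi\varphi,\partial_\zeta\varphi$ in $L^{2}$ by smooth periodic-in-$s$ functions, then invoking a diagonal procedure together with the strong continuity of $I_3$ in these variables (as in the concluding argument of Theorem~\ref{bfa}).

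The main obstacle is the function $u$, which Theorem~\ref{compbc} does \emph{not} constrain to satisfy $u(0)=u(L)$: this is consistent with the fact that the periodicity of $Y^{(h)}$ translates, after integration by parts against $\tau'$, into a condition on the mean of $(Y^{(h)}-\Psi^{(h)})\cdot\tau'$ rather than on $u$ itself. So $u$ generally fails to be periodic, whereas the term $h^2 u\kappa^{(h)}$ in the ansatz is periodic only if $u$ is. I would handle this by replacing $u$ with a perturbation $\tilde u^{(h)}$ that equals $u$ outside a boundary layer of length $\delta_h\to 0$ and interpolates near the endpoints so that $\tilde u^{(h)}(0)=\tilde u^{(h)}(L)$. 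Since only $(u^{(h)})'$ enters the energy, and the modification affects a set of measure $2\delta_h$ on which the linear interpolation has derivative bounded by $\frac{|u(L)-u(0)|}{\delta_h}+\|u'\|_\infty$ (for smooth $u$), choosing $\delta_h$ so that $\delta_h\to 0$ while $\delta_h^{-1}$ stays integrable-squared on a negligible set makes the excess energy vanish in the limit, while the constraint $u^{(h)}\rightharpoonup u$ in $W^{1,2}$ is preserved. After first carrying out this argument for smooth $u$ and then passing to $u\in W^{1,2}(0,L)$ by density, the bound (\ref{limsupfin}) follows together with properties (i)--(iv), completing the proof.
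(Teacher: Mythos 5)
Part (1) of your proposal matches the paper exactly: the paper also simply invokes the proof of Theorem~\ref{scithm2}, since the boundary conditions only restrict the admissible limits and play no role in the lower bound.

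For part (2), your starting ansatz (the $\alpha=3$ recovery sequence of Theorem~\ref{bfa}), your handling of the corrector by approximating the minimizer $\varphi$ with smooth functions periodic in $s$, and the final density argument are all sound; the paper instead makes $\beta$ periodic by adding $\vartheta^{(h)}(s)\big(\beta(0,\cdot)-\beta(L,\cdot)\big)$ with $\vartheta^{(h)}$ supported in an interval of length $\sqrt{h}$ near $L$ and then checks $h^{3}\partial_{s}\beta^{(h)}=O(h^{5/2})=o(h^{2})$, but your route is equivalent and arguably cleaner.

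However, your treatment of the non-periodicity of $u$ contains a genuine gap. You propose to replace $u$ by $\tilde u^{(h)}$ which agrees with $u$ away from the endpoints and interpolates linearly over a layer of width $\delta_h\to 0$. But $u$ enters the leading-order strain $\tilde G$ \emph{through its derivative} and \emph{symmetrically} (via the term $u'\,\tau\otimes\tau$ in (\ref{sempl4})). On the boundary layer the interpolating derivative has size $|u(L)-u(0)|/\delta_h$, so its squared $L^{2}$-norm over the layer is of order $(u(L)-u(0))^{2}/\delta_h\to\infty$. The quadratic energy $\tfrac12\int Q_3(\tilde G)$ therefore diverges, (\ref{limsupfin}) fails, and the sequence $(u^{(h)})$ is not even bounded in $W^{1,2}$; no choice of $\delta_h\to 0$ can repair this. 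The paper's fix is of a different nature: rather than making $u$ periodic, it restores periodicity of $Y^{(h)}$ by adding to the ansatz the global, smooth correction
\begin{equation*}
h^{2}\,\frac{u(L)-u(0)}{L}\left(\int_{0}^{s}(L-\sigma)\,\tau'(\sigma)\,d\sigma - h\,(L-s)\,(\xi k_{2}+\zeta k_{3})\,\tau\right),
\end{equation*}
and uses the closedness of $\gamma$ in the form $\int_{0}^{L}(L-\sigma)\tau'(\sigma)\,d\sigma=-L\,\tau(0)$ to verify that this compensates exactly the mismatch produced by $h^{2}u\,\kappa^{(h)}$ at $s=L$. The crucial point is that, via the identity (\ref{antisim}), the order-$h^{2}$ contribution of this correction to $\nabla_{h}Y^{(h)}(\nabla_{h}\Psi^{(h)})^{-1}$ is proportional to the \emph{skew-symmetric} matrix $\tau'\otimes\tau-\tau\otimes\tau'$; it therefore drops out of the symmetric part of the strain at leading order and leaves the limit energy unchanged. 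In short, the jump in $u$ must be absorbed by a skew-symmetric perturbation invisible to the quadratic form, not by a boundary-layer modification of $u$ itself. This is the idea your proposal is missing.
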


\begin{proof}
(1) The proof of this part can be done repeating exactly the proof of Theorem \ref{scithm2}.\\
\noindent
(2) As in Theorem \ref{Gamcon3}, we preliminarly assume that $u, w\in C^{1}[0,L]$ and $v\in C^{2}([0,
L];\mathbb{R}^3)$. Let $\varphi \in C^1(\bar{\Omega};\mathbb{R}^3)$ and define $\beta: \Omega \rightarrow \mathbb{R}^3$ 
as $\beta(s,\xi,\zeta) := R_0(s)\varphi(s,\xi,\zeta)$.

\noindent
Let $\gamma_2$, $\gamma_3$ and $\kappa^{(h)}$ be as in the proof of Theorem \ref{Gamcon3}.
For every $h > 0$ let us consider a function $\vartheta^{(h)}\in C^1[0,L]$ supported in $[L- \sqrt{h},L]$, such that $\vartheta^{(h)}(L) = 1$ and $\big|\big(\vartheta^{(h)}\big)'\big| \leq \frac{c}{\sqrt{h}}$.
Then let us define the function $Y^{(h)}: \Omega\rightarrow \mathbb{R}^{3}$ as
\begin{align*}
Y^{(h)}&=\, \Psi^{(h)} + h\,v + h^2\, u\,\kappa^{(h)} + h^2 \,\frac{u(L) - u(0)}{L}\,\bigg(\int_0^s (L - \sigma)\,\tau'(\sigma)\,d\sigma - h\,(L - s)\,(\xi\,k_2 + \zeta\,k_3)\,\tau\bigg) \\ 
&+ h^2 \xi\,A\,\nu_2 +  h^2 \zeta\,A\,\nu_3 + h^3 \beta^{(h)},
\end{align*}
where $\beta^{(h)}(s,\xi,\zeta) := \beta(s,\xi,\zeta) + \vartheta^{(h)}(s)(\beta(0,\xi,\zeta) - 
\beta(L,\xi,\zeta))$. It turns out that the function $Y^{(h)}$ satisfies periodic boundary conditions in $(0,L)$. 
Indeed, 
\begin{align}\label{bcinzero}
Y^{(h)}(0,\xi,\zeta) &= \Psi^{(h)}(0,\xi,\zeta) + h\,v(0) + h^2\, u(0)\,\tau(0) - h^3 u(L)\,\tau(0)(\xi\,k_2(0) + \zeta\,k_3(0))+\nonumber\\
&+ h^2 \xi\,A(0)\,\nu_2(0) +  h^2 \zeta\,A(0)\,\nu_3(0) + h^3 \beta(0,\xi,\zeta),
\end{align}
and, using the assumptions (\ref{bc}) and (\ref{bcvw}), we have 
\begin{align}\label{bcinL}
&Y^{(h)}(L,\xi,\zeta)=\, \Psi^{(h)}(0,\xi,\zeta) + h\,v(0) + h^2\, u(L)\,\tau(0)\,(1 - h\,\xi\,k_2(0) - h\,\zeta\,k_3(0)) +\nonumber\\ 
&+ h^2 \,\frac{u(L) - u(0)}{L}\,\int_0^L (L - \sigma)\,\tau'(\sigma)\,d\sigma + h^2 \xi\,A(0)\,\nu_2(0) +  h^2 \zeta\,A(0)\,\nu_3(0) + h^3 \beta(0,\xi,\zeta). 
\end{align}
Now notice that, using $\tau = \gamma'$ and $\gamma(0) = \gamma(L)$, we have 
\begin{equation*}
\int_0^L (L - \sigma)\,\tau'(\sigma)\,d\sigma = -L\,\tau(0) + \int_0^L \tau(\sigma)\,d\sigma = -L \,\tau(0).
\end{equation*}
Plugging this equality into (\ref{bcinL}) we obtain
\begin{align}\label{bcinL2}
&Y^{(h)}(L,\xi,\zeta)=\, \Psi^{(h)}(0,\xi,\zeta) + h\,v(0) + h^2\, u(L)\,\tau(0)\,(1 - h\,\xi\,k_2(0) - h\,\zeta\,k_3(0)) +\nonumber\\ 
&- h^2 \,(u(L) - u(0))\,\tau(0) + h^2 \xi\,A(0)\,\nu_2(0) +  h^2 \zeta\,A(0)\,\nu_3(0) + h^3 \beta(0,\xi,\zeta) 
\end{align}
which is the same expression in (\ref{bcinzero}).

Moreover, the convergence properties (i)-(iv) can be deduced as in Theorem \ref{bfa}. 
For the scaled gradient of $Y^{(h)}$ we have, using (\ref{antisim})
\begin{align}\label{gradYhfin}
&\nabla_{h}Y^{(h)} =\, \nabla_{h}\Psi^{(h)} + h\,A\,R_0 + h^2\,\Bigg(\big(A\,R_0\big)'\Bigg(\begin{array}{c}
0\\
\xi\\
\zeta
\end{array}\Bigg) + \,u'\,\tau \,\bigg|\,
\partial_{\xi} \beta^{(h)}\,\bigg|\,
\partial_{\zeta} \beta^{(h)}
 \Bigg) \,+ \nonumber\\  
& + h^2\,\bigg(u + \frac{u(L) - u(0)}{L}\,(L-s)\bigg)\,\big(\tau'\otimes\tau - \tau\otimes\tau'\big)\,R_0 + h^{3}\partial_s \beta^{(h)}\otimes e_1 + O(h^3), \nonumber
\end{align}
where $\big|\partial_s \beta^{(h)}\big| \leq \frac{c}{\sqrt{h}}$. Now, since $\beta^{(h)} \rightarrow \beta$,
$\partial_{\xi}\beta^{(h)}\rightarrow \partial_{\xi}\beta$ and $\partial_{\zeta}\beta^{(h)}\rightarrow \partial_{\zeta}\beta$ strongly in $L^2(\Omega;\mathbb{R}^3)$, one can prove a convergence result like 
(\ref{quasifin}) and obtain the general case by approximation. 
\end{proof}


\bigskip
\bigskip
\centerline{\textsc{Acknowledgements}}
\bigskip
\noindent
I warmly thank Maria Giovanna Mora for having proposed
to me the study of this problem and for many helpful and interesting
suggestions. I would like also to thank Gianni Dal Maso for several
stimulating discussions on the subject of this paper. 
Moreover I express my gratitude to the anonymous referee for his  
valuable comments.
\bigskip

\addcontentsline{toc}{chapter}{References}

\end{document}